\theoremstyle{plain}
\newtheorem{theorem}{Theorem}[section]
\newtheorem{lemma}[theorem]{Lemma}
\newtheorem{claim}[theorem]{Claim}
\newtheorem*{claim*}{Claim}
\newtheorem{fact}[theorem]{Fact}
\theoremstyle{definition}
\newtheorem{definition}[theorem]{Definition}
\theoremstyle{remark}
\newtheorem{remark}[theorem]{Remark}
\newcommand{\ProblemName}[1]{\textsc{#1}}
\newcommand{\kMedian}{\ProblemName{$k$-Median}\xspace}
\newcommand{\kMeans}{\ProblemName{$k$-Means}\xspace}
\newcommand{\kzC}{\ProblemName{$(k,z)$-Clustering}\xspace}
\newcommand{\pzFL}{\ProblemName{Power-$z$ Facility Location}\xspace}
\newcommand{\FL}{\ProblemName{Facility Location}\xspace}
\newcommand{\Ot}{\ensuremath{\tilde{O}}\xspace}
\newcommand{\eqdef}{\coloneqq}
\DeclareMathOperator{\E}{\mathbb{E}}
\DeclareMathOperator{\wopen}{\mathfrak{f}}
\DeclareMathOperator{\cost}{cost}
\def\RR{{\mathbb{R}}}
\DeclareMathOperator{\poly}{poly}
\DeclareMathOperator{\polylog}{polylog}
\DeclareMathOperator{\fl}{fl}
\DeclareMathOperator{\cl}{cl}
\DeclareMathOperator{\OPT}{OPT}
\DeclareMathOperator{\ALG}{ALG}
\DeclareMathOperator{\dist}{dist}
\DeclareMathOperator{\diam}{diam}
\DeclareMathOperator{\crit}{crit}
\newcommand{\EPone}{\ensuremath{\mathcal{E}_{\mathrm{P1}}}\xspace}
\newcommand{\EPtwo}{\ensuremath{\mathcal{E}_{\mathrm{P2}}}\xspace}
\newcommand{\Econd}{\ensuremath{\mathcal{E}_{\mathrm{cond}}}\xspace}
\newcommand{\Eextend}{\ensuremath{\mathcal{E}_{\mathrm{extend}}}\xspace}
\newcommand{\Pstay}{\ensuremath{P_{\mathrm{stay}}}\xspace}
\newcommand{\Pextend}{\ensuremath{P_{\mathrm{extend}}}\xspace}
\newcommand{\ellstay}{\ensuremath{\ell_{\mathrm{stay}}}\xspace}
\newcommand{\ellextend}{\ensuremath{\ell_{\mathrm{extend}}}\xspace}
\newcommand{\Acond}{\ensuremath{A_{\mathrm{cond}}}\xspace}
\newcommand{\Anew}{\ensuremath{A_{\mathrm{new}}}\xspace}
\newcommand{\Anewgeq}{\ensuremath{A_{\mathrm{new}}^{\geq \hat{r}_{x_t}}}\xspace}
\newcommand{\OPTfl}{\ensuremath{\OPT^{\mathrm{fl}}}\xspace}
\newcommand{\OPTkz}{\ensuremath{\OPT^{\mathrm{cl}}_z}\xspace} \newcommand{\OPTkzGuess}{\ensuremath{\widehat{\OPT}^{\mathrm{cl}}_z}\xspace} 
\newcommand{\algoFLz}{\ensuremath{\ALG^{\mathrm{fl}}_z}\xspace} 
\newcommand{\POconsistent}{\propto}
 \providecommand{\set}[1]{{\{#1\}}}
\let\eps\varepsilon
\let\epsilon\varepsilon
\title{Fully-Scalable MPC Algorithms for Clustering in High Dimension}
\author{
  Artur Czumaj\thanks{
    Partially supported by the Centre for Discrete Mathematics and its Applications (DIMAP), by EPSRC award EP/V01305X/1, by a Weizmann-UK Making Connections Grant, and by an IBM Award.
    Email: \texttt{A.Czumaj@warwick.ac.uk}
  } \\
  University of Warwick
  \and Guichen Gao\thanks{Email: \texttt{gc.gao@stu.pku.edu.cn}} \\
  Peking University
  \and Shaofeng H.-C. Jiang\thanks{
    Research partially supported by a national key R\&D program of China No. 2021YFA1000900,
    a startup fund from Peking University, and the Advanced Institute of Information Technology, Peking University.
    Email: \texttt{shaofeng.jiang@pku.edu.cn}
  } \\
  Peking University
  \and Robert Krauthgamer\thanks{
    Partially supported
      by the Israel Science Foundation grant \#1336/23,
      by a Weizmann-UK Making Connections Grant,
      by the Israeli Council for Higher Education (CHE) via the Weizmann Data Science Research Center,
      and by a research grant from the Estate of Harry Schutzman.
      Email: \texttt{robert.krauthgamer@weizmann.ac.il}
  } \\
  Weizmann Institute of Science
  \and Pavel Vesel\'y\thanks{
    Partially supported by GA ČR project 22-22997S and by Center for Foundations of Modern Computer Science (Charles Univ. project UNCE 24/SCI/008).
    Email: \texttt{vesely@iuuk.mff.cuni.cz}
  } \\
  Charles University
}
\begin{document}

\maketitle

\setcounter{page}{0}
\thispagestyle{empty}

\begin{abstract}
We design new parallel algorithms for clustering in high-dimensional Euclidean spaces.
These algorithms run in the Massively Parallel Computation (MPC) model,
and are \emph{fully scalable},
meaning that the local memory in each machine may be $n^{\sigma}$
for arbitrarily small fixed $\sigma>0$.
Importantly, the local memory may be substantially smaller than
the number of clusters $k$,
yet all our algorithms are fast, i.e., run in $O(1)$ rounds.

We first devise a fast MPC algorithm for $O(1)$-approximation of uniform \FL.
This is the first fully-scalable MPC algorithm
that achieves $O(1)$-approximation for any clustering problem in general geometric setting;
previous algorithms only provide $\poly(\log n)$-approximation
or apply to restricted inputs, like low dimension or small number of clusters $k$;
e.g.\ [Bhaskara and Wijewardena, ICML'18; Cohen-Addad et al., NeurIPS'21; Cohen-Addad et al., ICML'22].
We then build on this \FL result and devise a fast MPC algorithm
that achieves $O(1)$-bicriteria approximation for \kMedian and for \kMeans,
namely, it computes $(1+\varepsilon)k$ clusters of cost within $O(1/\varepsilon^2)$-factor of the optimum for $k$ clusters.

A primary technical tool that we introduce, and may be of independent interest,
is a new MPC primitive for geometric aggregation,
namely, computing for every data point a statistic of its approximate neighborhood,
for statistics like range counting and nearest-neighbor search.
Our implementation of this primitive works in high dimension,
and is based on consistent hashing (aka sparse partition),
a technique that was recently used for streaming algorithms [Czumaj et al., FOCS'22].
\end{abstract}

\newpage

\section{Introduction}
\label{sec:intro}

Clustering large data sets is a fundamental computational task that
has been studied extensively due to its wide applicability in data science,
including, e.g., unsupervised learning, classification, data mining.
Two highly popular and extremely basic problems are \kMedian and \kMeans.
In the geometric (Euclidean) setting,
the \kMedian problem asks,
given as input an integer $k$ and a set $P \subseteq \RR^d$ of $n$ data points,
to compute a set $C\subset \RR^d$ of $k$ center points,
so as to minimize the sum of distances from each point in $P$ to its nearest center.
The \kMeans problem is similar except that
instead of the sum of distances, one minimizes the sum of squares of distances.
A closely related problem is (uniform) \FL,
which can be viewed as a Lagrangian relaxation of \kMedian,
where the number of centers can vary but it adds a penalty to the objective,
namely, each center (called open facility) incurs a given cost $\wopen>0$.
Other variants include a similar relaxation of \kMeans,
or generalizing the squaring each distance to raising it to power $z\ge1$
(see \Cref{sec:prelim} for formal definitions),
and there is a vast literature on these computational problems.

Clustering is often performed on massive datasets,
and it is therefore important
to study clustering
in the framework of distributed and parallel computations.
We consider
fundamental clustering problems
in the theoretical model of \emph{Massively Parallel Computation (MPC)},
which captures key aspects of modern large-scale computation systems,
such as MapReduce, Hadoop, Dryad, or Spark.
The MPC model was introduced over a decade ago by Karloff, Suri, and Vassilvitskii~\cite{KSV10},
and over time has become
the standard theoretical model to study data-intensive parallel algorithms (see, e.g., \cite{BKS17,DBLP:conf/isaac/GoodrichSZ11,IKLMV23}).
At a high level, the MPC model consists of many machines
that communicate with each other synchronously in a constrained manner,
in order to solve a desired task in a few rounds.
In more detail, an MPC system has $m$ \emph{machines},
each with a \emph{local memory} of $s$ words,
hence the system's \emph{total memory} is the product $m \cdot s$.
Computation takes place in synchronous rounds,
where machines perform arbitrary computations on their local memory
and then exchange messages with other machines.
Every machine is constrained to send and receive at most $s$ words in every round,
and every message must be destined to a single machine (not a broadcast).
At each round, every machine
processes its incoming messages to generate its outgoing messages,
usually without any computational restrictions on this processing (e.g., running time).
At the end of the computation, the machines collectively output the solution.
The efficiency of an algorithm is measured by the \emph{number of rounds} and by the local-memory size $s$.
The total space should be low as well, but this is typically of secondary importance.

The local-memory size $s$ is a key parameter and should be sublinear in the input size $N$
(for if $s \ge N$ then any sequential algorithm can be executed locally by a single machine in one round).
We focus on the more challenging regime, called \emph{fully-scalable MPC},
where the local-memory size is an arbitrarily small polynomial,
i.e., $s = N^{\sigma}$ for any fixed $\sigma\in(0,1)$.
This regime is highly desirable because in practice,
the local memory of the machines is limited by the hardware available at hand,
and the parameter $\sigma$ can be used to model this limitation.
The total memory should obviously be large enough to fit the input,
i.e., $m \cdot s \geq N$, and ideally not much larger.

Modern research on MPC algorithms
aims to solve fundamental problems in as few rounds as possible,
ideally in $O(1)$ rounds, even in the fully-scalable regime,
and so far there have been some successes in designing fast (i.e., $O(1)$-rounds) algorithms for fixed $\sigma < 1$;
see e.g., \cite{DBLP:conf/stoc/AndoniNOY14,BCMT22,DBLP:conf/isaac/GoodrichSZ11}.
In contrast, many graph problems, including fundamental ones
such as connectivity or even distinguishing between $1$ and $2$ cycles,
seem to require super-constant number of rounds
(at the same time, proving a super-constant lower bound
would yield a breakthrough in circuit complexity~\cite{RVW18}).
One can also consider smaller local space $s$, i.e., $\sigma = o(1)$,
and in fact many algorithms work as long as $s \ge \polylog(N)$.
In this regime, the best one can hope for is usually not $O(1)$ but rather $O(\log_s N)$ rounds,
because even broadcasting a single number to all machines requires
$\Omega(\log_s N)$ rounds; see~\cite{RVW18} for further discussion.

Clustering of massive datasets has received significant attention recently,
with numerous algorithms that span multiple computational models,
such as streaming~\cite{DBLP:conf/icml/BravermanFLSY17,DBLP:journals/corr/abs-1802-00459} or
distributed and parallel models~\cite{BT10,EIM11,DBLP:conf/spaa/BlellochGT12,DBLP:journals/pvldb/BahmaniMVKV12,DBLP:conf/nips/BalcanEL13,DBLP:conf/nips/MalkomesKCWM15,DBLP:conf/kdd/BachemL018,DBLP:journals/pvldb/CeccarelloPP19,BCMT22,dBBM23}.
However, this advance is facing a barrier,
first identified by Bhaskara and Wijewardena~\cite{DBLP:conf/icml/BhaskaraW18}:
State-of-the-art methods for $k$-clustering
(i.e., when the number $k$ of clusters is specified in the input)
typically fail to yield fully-scalable MPC algorithms,
because when implemented in the MPC model,
these methods usually require local-memory size $s\geq \Omega(k)$.
For example, many streaming algorithms for $k$-clustering problems
are based on a linear sketch and thus readily applicable to the MPC model;
however, all known streaming algorithms require $\Omega(k)$ space,
and in fact there are lower bounds to this effect \cite{CSWZ16} even in low dimension.
Another example is coresets~\cite{DBLP:conf/stoc/Har-PeledM04},
an extremely effective method to decrease the size of the dataset, even in high dimension;
see~\cite{DBLP:conf/stoc/Cohen-AddadSS21,DBLP:conf/focs/BravermanCJKST022,DBLP:conf/stoc/Cohen-AddadLSS22} for the latest in this long line of research.
Coresets can often be merged and/or applied repeatedly via the so-called ``merge-and-reduce'' framework~\cite{DBLP:conf/stoc/Har-PeledM04},
and can thus be applied successfully in many different settings,
including the parallel setting.
However, this method suffers from the drawback (which so far seems inherent)
that each coreset must be stored in its entirety in a single machine,
and it is easy to see that a coreset for $k$-clustering must be of size at least $k$;
see also~\cite{DBLP:conf/stoc/Cohen-AddadLSS22}.

These shortcomings have led to a surge of interest in MPC algorithms for clustering,
with emphasis on fully-scalable and fast algorithms
(i.e., taking $O(1)$ rounds even for large $k$)
that work in high dimension $d$.
The first such result for $k$-clustering was by Bhaskara and Wijewardena \cite{DBLP:conf/icml/BhaskaraW18};
their $O(1)$-round fully-scalable MPC algorithm
achieves polylogarithmic bicriteria approximation for \kMeans,
i.e., it outputs $k\cdot\polylog(n)$ centers (or clusters)
whose cost is within $\polylog(n)$-factor of the optimum for $k$ centers.
Recent work by Cohen-Addad et al.~\cite{DBLP:conf/nips/Cohen-AddadLNSS21}
achieves for \kMedian a true $\polylog(n)$-approximation
(i.e., without violating the bound $k$ on number of centers);
their algorithm actually computes a hierarchical clustering,
that is, a single hierarchical structure of centers
that induces an approximate solution for \kMedian for every $k$.
These were the best fully-scalable $O(1)$-round MPC algorithms
known for \kMedian and \kMeans prior to our work,
and it remained open to achieve
better than $O(\log n)$-approximation,
even as a bicriteria approximation.

Not surprisingly, previous work on MPC algorithms
has focused also on achieving improved approximation for restricted inputs.
In particular, $(1+\eps)$-approximation is achieved in~\cite{CMZ22}
for \kMeans and \kMedian on inputs that are \emph{perturbation-resilient} \cite{BL12},
meaning that perturbing pairwise distances by (bounded) multiplicative factors
does not change the optimal clusters;
unfortunately, results for such special cases rarely generalize to all inputs.
There are also known fully-scalable MPC algorithms for $k$-center
\cite{DBLP:conf/aaai/BateniEFM21, DBLP:conf/spaa/CoyCM23},
but they are applicable only in low dimension $d$ and thus less relevant to our focus.

Despite these many advances on fully-scalable MPC algorithms,
we are not aware of $O(1)$-approximation
\emph{for any clustering problem} in high dimension.\footnote{Having said that, $(3+\eps)$-approximation is known for (non-geometric) correlation clustering \cite{BCMT22}.}
From a technical perspective, the primary difficulty
is to distribute the data points across machines,
so that points arriving to the same machine are from the same cluster.
This task seems to require knowing the clustering,
which is actually the desired output, so we run into a chicken-and-egg problem!
Fortunately, powerful algorithmic tools can spot points that are close together,
even in high-dimensional geometry.
Namely, locality sensitive hashing (LSH) is employed in~\cite{DBLP:conf/icml/BhaskaraW18}
and tree embedding is used in~\cite{DBLP:conf/nips/Cohen-AddadLNSS21},
but as mentioned above, these previous results do not achieve $O(1)$-approximation.
We rely instead on the framework of \emph{consistent hashing},
which was first employed for streaming algorithms in~\cite{arxiv.2204.02095},
although it was originally proposed in~\cite{DBLP:conf/stoc/JiaLNRS05};
see \Cref{sec:techniques} for details.

\subsection{Our Results}
\label{sec:results}

We devise fully-scalable $O(1)$-round MPC algorithms for a range of clustering problems in $\RR^d$, most notably \FL, \kMedian, and \kMeans.
We first devise an $O(1)$-approximation algorithm for \FL and then exploit the connection between the problems to solve $k$-clustering.

By convention, we express memory bounds in machine words, where each word
can store a counter in the range $[\poly(n)]$, i.e., $O(\log n)$ bits,
and/or a coordinate of a point from $\RR^d$ with restricted precision
(comparable to that of a point from $P$).
For example, the input $P\subset \RR^d, |P|=n$ fits in $nd$ words.
Throughout, the notation $O_\eps(\cdot)$ hides factors that depend on $\eps$.
We stress that in the next theorem,
the memory bounds are polynomial in the dimension $d$ (and not exponential);
this is crucial because the most important case is $d=O(\log n)$,
as explained in \Cref{remark:details}\ref{it:dim_reduction} below.

\begin{theorem}[Simplified version; see \Cref{thm:ufl}]
\label{thm:ufl_intro}
Let $\eps,\sigma \in (0,1)$ be fixed.
There is a randomized fully-scalable MPC algorithm that,
given a multiset $P\subset\RR^d$ of $n$ points
distributed across machines with local memory of size $s \ge n^{\sigma}\cdot\poly(d)$,
computes in $O_\sigma(1)$ rounds an $O_\eps(1)$-approximation  for uniform \FL.
The algorithm uses $O(n^{1+\eps})\cdot \poly(d)$ total space.
\end{theorem}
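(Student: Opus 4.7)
The plan is to reduce \FL to a geometric aggregation primitive, and then to build this primitive in MPC using multi-scale consistent hashing. On the algorithmic side, I would follow the Mettu--Plaxton paradigm: for every $p\in P$, define a ball radius $r_p$ that balances the opening cost $\wopen$ against the savings in service cost inside $B(p,r_p)$, then open facilities greedily in the order of increasing $r_p$, suppressing candidates that lie within $O(r_p)$ of an already-opened facility. Classical analysis shows that if both $r_p$ and the suppression rule are approximated within constant factors, the resulting solution is still an $O(1)$-approximation to \FL. The MPC task therefore splits into (a) computing a constant-factor estimate $\hat{r}_p$ for every $p$ in $O_\sigma(1)$ rounds, and (b) performing the greedy facility opening in $O_\sigma(1)$ rounds, both in the fully-scalable regime.

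For (a), I would exploit the fact that $r_p$ depends only on neighborhood statistics: the count $|P\cap B(p,\Lambda)|$ and the sum $\sum_{q\in B(p,\Lambda)}\dist(p,q)$ for varying $\Lambda$. For each scale $\Lambda$ in a geometric sequence covering the aspect ratio of $P$ (so only $O(\log n)$ scales), apply a consistent hash $h_\Lambda$ that partitions $\RR^d$ into cells of diameter $O(\Lambda)$ while guaranteeing that every ball $B(x,\Lambda)$ is mostly contained in one cell. A standard MPC shuffle re-distributes points by $(\Lambda, h_\Lambda(p))$ so each cell is grouped onto a contiguous block of machines, where per-cell counts and distance-sums are computed by sort/prefix-sum/tournament aggregation in $O_\sigma(1)$ rounds. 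Broadcasting the per-cell summaries back to the points that hashed into them gives, for every $p$ and every $\Lambda$, an $O(1)$-approximation to the ball statistics, and a binary search over $\Lambda$ then yields $\hat{r}_p$.

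For (b), I would bucket points by $\hat{r}_p$ rounded to the nearest power of two, and for each resulting scale $\Lambda$ open a facility inside each cell of $h_\Lambda$ chosen from among the candidates with $\hat{r}_p=\Theta(\Lambda)$, provided no facility at a comparable or coarser scale already exists within $O(\Lambda)$. The ``no nearby facility'' test is performed via the same aggregation primitive applied to the partial set of opened facilities. Running all scales in a pipelined fashion, with priorities derived from $\hat{r}_p$, lets the whole selection be realized in $O_\sigma(1)$ rounds. Correctness follows by charging service cost to $\sum_p \hat{r}_p=O(\OPTfl)$ and amortizing opening cost against the candidates suppressed in each cell, mirroring the Mettu--Plaxton analysis up to constants inherited from the consistent hash.

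The main obstacle is the tension between (b) and the inherently sequential nature of the Mettu--Plaxton order: a facility opened at scale $\Lambda$ might have been suppressed by an even finer-scale facility at a distance slightly larger than $\Lambda$ that is hard to detect without iterating through all scales. Overcoming this requires a priority rule, decided jointly by consistent hashing and the approximate radii, that certifies each opened facility as ``local optimum'' in a verifiable way, together with a suppression rule on the aggregation side that is compatible with the constant-factor distortions introduced by $h_\Lambda$. A secondary but nontrivial issue is load balancing: consistent-hash cells may contain far more than $n^\sigma$ points and must be split into chunks that are recombined by tournament aggregation without inflating the round count, all while keeping total space within $O(n^{1+\eps})\cdot\poly(d)$.
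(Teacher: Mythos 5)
Your high-level plan — Mettu--Plaxton radii, a consistent-hashing aggregation primitive, then facility opening — matches the paper's overall architecture, but there are two concrete gaps, one of which is the paper's central technical contribution.

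First, in step (a), you assume the consistent hash ``guarantees that every ball $B(x,\Lambda)$ is mostly contained in one cell'' and then propose to broadcast each cell's statistics ``back to the points that hashed into them.'' Consistent hashing gives no such guarantee: a point near a cell boundary can have nearly all of its $\Lambda$-ball in neighbouring cells, so reading only one's own cell can underestimate $|B_P(p,\Lambda)|$ arbitrarily. What the hash actually guarantees (Definition~\ref{def:geometric_hashing}) is that the ball $B(x,\Lambda)$ intersects at most $\Lambda$ cells for a suitable parameter trade-off, and the correct primitive (Theorem~\ref{thm:mpc}) creates, for each $p$, tuples $(p,u)$ for \emph{all} cells $u\in\varphi(B(p,r))$ and aggregates the per-cell statistics across those tuples; this yields a set $A_P(p,r)$ sandwiched between $B_P(p,r)$ and $B_P(p,3\Gamma r)$, which Lemma~\ref{lemma:rp_offline} then shows suffices to estimate $r_p$. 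The total-space cost of materializing these $\approx n^\varepsilon$ tuples per point is precisely the $O(n^{1+\varepsilon})$ overhead in the statement, not ``load balancing'' of oversized cells as you suggest.

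Second, and more importantly, your step (b) does not actually give an $O_\sigma(1)$-round algorithm. You bucket by $\hat r_p$ into $\Theta(\log n)$ scales and say each scale's suppression test consults facilities ``at a comparable or coarser scale''; this is a sequential dependence across scales, so the ``pipelined'' execution is $\Theta(\log n)$ rounds, not $O(1)$. You then correctly identify the obstacle (``requires a priority rule\ldots that certifies each opened facility as a local optimum in a verifiable way'') but stop there without supplying such a rule or an analysis for it — and this is exactly what the paper must and does solve. The paper's fix is \emph{not} a per-scale greedy: it replaces the sequential scan with two rules applied \emph{simultaneously and independently to every point} in Algorithm~\ref{alg:offline_choose_fac}: (P1) open $p$ with probability $\Theta(\hat r_p^z/\wopen)$, and (P2) open $p$ if it has the smallest random label in $A^\beta_P(p,\hat r_p)$. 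The connection-cost analysis (Lemma~\ref{lemma:offline_fac_finding}) then traces an ``assignment sequence'' $p=x_0,x_1,\dots$ where $x_{i+1}$ is the label-minimizer in $A^\beta_P(x_i,\hat r_{x_i})$, and proves, via Lemma~\ref{lemma:crit_prob_cond}, that at each step either the sequence terminates with constant probability (because (P1) and (P2) complement each other: (P1) kills the event when many new points enter, (P2) kills it when few do) or $\hat r$ drops geometrically, so the expected weight of the critical subsequence is $O(\hat r_p^z)$. Without this randomized two-rule mechanism and its assignment-sequence analysis, your outline does not establish either the round bound or the $O(1)$ approximation.
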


\begin{remark} \label{remark:details}
This simplified statement omits standard technical details:
\begin{enumerate}[label=(\alph*),nosep]
\item \label{it:fixed_sigma}
    We consider here fixed $\sigma$,
    but the algorithm works for any local-memory size $s \ge \poly(d \log n)$,
    and in that case the number of rounds becomes $O(\log_sn)$.
  \item
    Similarly, $\eps$ can be part of the input,
    and the approximation factor is in fact $(1/\eps)^{O(1)}$.
  \item
    The algorithm succeeds with high probability $1-1/\poly(n)$.
\item
    The input $P$ can be a multiset,
    in contrast to streaming algorithms for such problems~\cite{Indyk04, arxiv.2204.02095},
    where data points must be distinct.
  \item
    The algorithm outputs a feasible solution,
    i.e., a set of facilities $F\subset\RR^d$
    and an assignment of the input points to facilities.
    In fact, in all our algorithms $F\subseteq P$.
    We focus throughout on computing $F$,
    since our methods can easily compute a near-optimal assignment given $F$.
  \item \label{it:dim_reduction}
    We state here the dependence on $d$ explicitly for sake of completeness,
    but our result works whenever $s \geq \polylog (dn)$,
    which is preferable when $d \gg \log n$.
    This is achieved by reducing to the case $d=O(\log n)$,
    using a standard dimension reduction, as discussed in \Cref{remark:jl}.
\end{enumerate}
\end{remark}

Our full result (in \Cref{thm:ufl}) is more general in two respects.
First, it addresses a generalization of \FL where distances are raised to power $z\ge 1$
(see \Cref{sec:prelim} for definitions).
Second, it assumes access to consistent hashing with parameters $\Gamma$ and $\Lambda$,
and here we plugged in a known construction \cite{arxiv.2204.02095}
that achieves a tradeoff $\Gamma = O(1/\eps)$ and $\Lambda = n^\eps$
for any desired $\eps\in(0,1)$;
see \Cref{lemma:hash_bound} and \Cref{remark:jl}.
The description of the hash function should be included in our MPC algorithm,
but it takes only $\poly(d)$ bits, which is easily absorbed in our bounds.

Previously, no fully-scalable MPC algorithm was known for \FL,
although one can apply the algorithm for \kMedian from~\cite{DBLP:conf/nips/Cohen-AddadLNSS21}
to obtain an $O(\polylog(n))$-approximation in $O(\log_s n)$ rounds.
Another previous result that one could use is a streaming algorithm
that achieves $O(1)$-approximation~\cite{arxiv.2204.02095}.
Although many streaming algorithms, including this one,
can be implemented in the MPC model,
they approximate the optimal value without reporting a solution;
see \Cref{sec:related} for details.
Thus, from a technical perspective,
our chief contribution is to compute an approximately optimal solution,
which is a set of facilities $F$, in the fully-scalable regime.

\paragraph{$k$-Clustering.}
Our second result presents MPC algorithms for \kMedian and \kMeans
that achieve an $(O(\mu^{-2}),1+\mu)$-bicriteria approximation,
for any desired $\mu\in(0,1)$.
As usual, \emph{$(\alpha,\beta)$-bicriteria approximation}
means that for every input, the algorithm outputs at most $\beta k$ centers
whose cost is at most $\alpha$-factor larger than the optimum cost for $k$ centers.
By letting $\mu>0$ be arbitrarily small but fixed,
our algorithm gets arbitrarily close (multiplicatively) to $k$ centers,
while still approximating the optimal cost within $O(1)$-factor;
in both respects, this is far stronger than the previous
bicriteria approximation for \kMeans~\cite{DBLP:conf/icml/BhaskaraW18}.
Our result is incomparable to the previous bound for \kMedian,
which is a true $O(\polylog(n))$-approximation \cite{DBLP:conf/nips/Cohen-AddadLNSS21}.
Nevertheless, our bicriteria approximation breaks
a fundamental technical barrier in their tree-embedding approach,
which cannot go below $O(\log n)$ ratio,
due to known distortion lower bounds,
and moreover does not generalize to \kMeans,
because it fails to preserve the squared distance.

Our approach is to tackle \kMedian and \kMeans via Lagrangian relaxation
and rely on our algorithm for \FL.
This approach can inherently handle large $k$,
because our core algorithms for \FL can even output $n$ clusters.
The Lagrangian technique was initiated by Jain and Vazirani~\cite{JV01},
who achieved a true $O(1)$-approximation for \kMedian
by leveraging special properties of their algorithm for \FL (see Section~\ref{sec:techniques}).
However, their primal-dual approach for \FL seems challenging to implement in fully-scalable MPC.
We thus develop an alternative approach that is inherently more parallel,
albeit achieves only bicriteria approximation for $k$-clustering.

\begin{theorem} [Simplified version; see \Cref{thm:clustering}]
\label{thm:clustering-intro}
Let $\eps,\sigma \in (0,1)$ be fixed.
There is a randomized fully-scalable MPC algorithm that,
given $\mu \in (0,1)$, $k \ge 1$, and
a multiset $P\subset\RR^d$ of $n$ points distributed across machines of memory $s \ge n^{\sigma}\cdot \poly(d)$,
computes in $O_\sigma(1)$ rounds an $(O_\eps(\mu^{-2}), 1+\mu)$-bicriteria approximation
for \kMedian, or alternatively \kMeans.
The algorithm uses $O(n^{1+\eps}) \cdot d^{O(1)}$ total space.
\end{theorem}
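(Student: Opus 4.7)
The approach is to reduce $k$-clustering to \FL via Lagrangian relaxation and invoke \Cref{thm:ufl_intro} as a black box. I would enumerate $O(\log n)$ geometrically spaced guesses \OPTkzGuess of \OPTkz in parallel, with each guess using a disjoint group of machines so that the aggregate memory stays $\Ot(n^{1+\eps})$. For each guess I would run \Cref{thm:ufl_intro} with opening cost $\wopen \eqdef \mu^2\cdot \OPTkzGuess/k$. When $\OPTkzGuess \ge \OPTkz$, the optimal $k$-clustering is a feasible \FL solution of value at most $(1+\mu^2)\OPTkzGuess$, so \Cref{thm:ufl_intro} returns a facility set $F$ of \FL cost $O(\OPTkzGuess)$; in particular the service cost is $O(\OPTkzGuess)$ and $|F| \le O(k/\mu^2)$. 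The bicriteria relaxation is essential here, because a single call to an $\alpha$-approximate \FL algorithm cannot output fewer than $\alpha k$ facilities in general.

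Next I would reduce $|F|$ down to $(1+\mu)k$. Form a weighted instance $(F,w)$ in which $w(f)$ equals the number of input points assigned to $f$ by \FL. By the (approximate) triangle inequality for power-$z$ distances, the weighted $k$-clustering optimum of $(F,w)$ is $O(\OPTkzGuess)$, and any $(1+\mu)k$-clustering of $(F,w)$ with cost $C$ lifts to a $(1+\mu)k$-clustering of $P$ with cost $O(C) + O(\OPTkzGuess)$. I would then iterate: invoke \Cref{thm:ufl_intro} again on $(F,w)$ with a geometrically larger opening cost, shrinking the facility count by a $1/\mu^{O(1)}$ factor per call while paying only an $O(1)$-factor cost blowup per lift. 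For fixed $\mu$ a constant number of such iterations suffices to reach $(1+\mu)k$ facilities, and composing the per-iteration losses yields the claimed $O(\mu^{-2})$ approximation. Finally the best candidate among the $O(\log n)$ guesses is selected by evaluating each solution's cost in parallel via the same geometric aggregation primitive that underlies \Cref{thm:ufl_intro}.

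\textbf{Main obstacle.} The hardest part is running the second (and subsequent) \FL calls on the weighted instance $(F,w)$ within the fully-scalable MPC model, because $|F|$ may exceed the local memory $s = n^\sigma$ when $k$ is large. Since we cannot collect $(F,w)$ on a single machine, the cleanest fix is to extend \Cref{thm:ufl_intro} to accept weighted inputs natively: the consistent-hashing aggregation it already relies on computes a sum-of-indicators inside each hash bucket, and this extends verbatim to a weighted sum, so almost no change to its internals is required. A secondary issue is that for \kMeans ($z=2$) the relaxed triangle inequality contributes a $2^{z-1}$ factor at each lift, but since only $O(1)$ lifts occur (for fixed $\mu$) this constant is absorbed into the $O(\mu^{-2})$ bound. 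Verifying the candidate guesses and selecting the winner in $O(1)$ rounds requires one final parallel cost-evaluation pass, which is routine given the geometric-aggregation primitive from \Cref{thm:ufl_intro}.
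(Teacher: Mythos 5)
Your Step 1 (reduce to a weighted coreset of $O(\gamma k)$ points via a Lagrangian \FL call and a point-to-facility ``move'') is the same as the paper's Lemma~\ref{lemma:clustering-reduction_to_weighted}. The gap is in your Step 2: you claim that iterating \Cref{thm:ufl_intro} on $(F,w)$ with geometrically larger opening costs shrinks the facility count by a $1/\mu^{O(1)}$ factor per round until it reaches $(1+\mu)k$. This cannot work with a black-box $\gamma$-approximation. For any choice of $\wopen'$, a $\gamma$-approximate solution on $(F,w)$ has value at most $\gamma\bigl(k\wopen' + \OPTkz(P_w)\bigr)$, so the number of opened facilities is bounded only by $\gamma k + \gamma\,\OPTkz(P_w)/\wopen'$; as $\wopen'\to\infty$ this bound tends to $\gamma k$, not to $(1+\mu)k$, and there is no guarantee the algorithm opens fewer. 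Thus each iteration can return $\Theta(\gamma k)$ facilities regardless of the opening cost, and repeated calls do not converge below the $\gamma k$ barrier --- the very barrier you correctly flag one paragraph earlier. Increasing $\wopen'$ further just trades connection cost for a facility count that stalls at $\gamma k$.

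The paper escapes this by \emph{not} treating the coreset step as another \FL call. On the weighted coreset it runs a dedicated greedy (Algorithm~\ref{alg:sequential_clustering}): process points in non-increasing weight order and open a center at $p$ iff $\dist(p,C)^z\,w(p) > 2^z\,\OPTkzGuess/(\mu k)$. The bound $|C| < (1+\mu)k$ in Lemma~\ref{lemma:clustering-sequential-centers} comes from an averaging argument over the optimal $k$ clusters: within one optimal cluster at most one opened point can have weighted connection cost below the threshold, so every center beyond the first $k$ contributes more than $\OPTkz(P_w)/(\mu k)$ to the optimum, forcing $|C|-k<\mu k$. This is a combinatorial argument specific to the weight-sorted greedy, with no $\gamma$ in the facility count; the $\gamma$ and $1/\mu$ factors land only in the cost (Lemma~\ref{lemma:clustering-sequential-connection_cost}). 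The parallel/MPC version (Algorithm~\ref{alg:offline_bi_criteria_clustering}, rules (C1)/(C2)) then mimics this greedy using random perturbed weights and the geometric-aggregation primitive, with (C2) dominated by the sequential algorithm so that the $(1+\mu)k$ bound carries over. You would need to replace your iteration idea with something of this flavor to close the gap.
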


This simplified statement omits the same standard details as in \Cref{thm:ufl_intro},
and \Cref{remark:details} applies here as well.
In addition, it is known for \kMedian and \kMeans (but not known for \FL)
that any (true) finite approximation
requires $\Omega(\log_s n)$ rounds~\cite{DBLP:conf/icml/BhaskaraW18}.

\subsection{Technical Overview}
\label{sec:techniques}

We overview the main components in our algorithms,
and highlight technical ideas that may find more applications in the future.
We focus in this overview on \FL and \kMedian, noting that our algorithm
for \kMeans (and other powers $z\ge 1$) is essentially the same.
We further assume that $s = n^{\sigma}\cdot \poly(d)$ for a fixed $\sigma\in (0,1)$,
and we aim to achieve round complexity $O(\log_s n) = O_\sigma(1)$.

\paragraph{\FL.}
Several different algorithmic approaches have been used in the past
to achieve $O(1)$-approximation for \FL, including
LP-rounding~\cite{DBLP:journals/siamcomp/ByrkaA10,DBLP:journals/iandc/Li13},
primal-dual~\cite{JV01},
greedy~\cite{JMMSV03},
and local search~\cite{DBLP:journals/siamcomp/AryaGKMMP04}.
Some of these sequential algorithms were adapted to the PRAM model \cite{BT10},
i.e., to run in polylogarithmic parallel time (RNC algorithms).
These algorithms can be implemented in the MPC model,
but as some logarithmic factors in the time complexity seem inherent,
these fall short of $O(1)$ rounds in the fully-scalable regime. 
Our starting point is the Mettu-Plaxton (MP) algorithm~\cite{DBLP:journals/siamcomp/MettuP03},
which is a combinatorial algorithm inspired by~\cite{JV01},
that has been previously used to achieve $O(1)$-approximation in a few related models,
particularly streaming, congested clique, and sublinear-time computation~\cite{BCIS05,DBLP:journals/algorithmica/GehweilerLS14,arxiv.2204.02095}.
At a high level, this MP algorithm has two steps,
it first computes a ``radius'' $r_p > 0$ for every $p \in P$ (see \Cref{def:r_p}),
and then uses these $r_p$ values to determine which facilities to open.
However, implementing these steps in MPC is technically challenging:
\begin{itemize} [nosep]
\item
  Computing $r_p$ (approximately) can be reduced to
  counting the number of points in $P$ within a certain distance from $p$~\cite{BCIS05},
which is non-trivial to compute in $O(1)$ rounds,
  because many geometric techniques are ineffective in high dimension,
  for instance quadtrees/tree embeddings incur large approximation error
  and grids/nets require large memory.
  We overcome this issue by devising a new MPC primitive
  for geometric aggregation (in high dimension),
that can handle a wide range of statistics, including the counting mentioned above.
\item
  The MP algorithm determines which facilities to open
  by scanning the points in $P$ in order of non-decreasing $r_p$ value
  and deciding greedily whether to open each point as a facility.
  We design a new algorithm that avoids any sequential decision-making
  and decides whether to open each facility locally and in parallel.
  Our new algorithm may thus be useful also in other models.
\end{itemize}
Let us discuss these two new ideas in more detail.

\paragraph{MPC Primitive for Geometric Aggregation in High Dimension.}
We propose a new MPC primitive for aggregation tasks
in high-dimensional Euclidean spaces (see \Cref{thm:mpc} for details).
Given a radius $r > 0$, this primitive outputs, for every data point $p \in P$,
a certain statistic of the ball
$B_P(p, r) := P \cap \{ y \in \mathbb{R}^d : \dist(x, y) \leq r\}$.
Our implementation can handle any statistic that is defined by a \emph{composable} function $f$,
which means that $f(\cup_i S_i)$ for disjoint sets $S_i\subset \RR^d$
can be evaluated from the values $\set{f(S_i)}_i$.
Composable functions include counting the number of points,
or finding the smallest label (when data points are labeled).
In fact, this primitive can even be used
for approximate nearest-neighbor search in parallel for all points
(see \Cref{sec:nearest-neighbor-search}).
This natural aggregation tool plays a central role in all our MPC algorithms,
and we expect it to be useful for other MPC algorithms in high dimension.

Technically, exact computation of such statistics may be difficult,
and our algorithm estimates the statistic by evaluating it (exactly)
on a set $A_P(p, r)$ that approximates the ball $B_P(p, r)$
in the sense that it is sandwiched between $B_P(p, r)$ and $B_P(p, \beta r)$
for some error parameter $\beta \geq 1$.
To implement this algorithm in MPC, a natural idea is
to ``collect'' all data points in the ball $B_P(p, r)$ at a single machine,
however this set might be too large to fit in one or even few machines.
A standard technique to resolve this issue in low dimension
is to impose a grid of fine resolution, say $\epsilon r$,
and move each data point to its nearest grid point,
which provides a decent approximation (e.g., $\beta = 1+ \sqrt{d}\epsilon$).
However, a ball $B_P(p, r)$ might contain $\epsilon^{-\Theta(d)}$ grid points,
which for high dimension might still not fit in one machine.

Our approach is to use \emph{consistent hashing}
(see \Cref{def:geometric_hashing}),
which was first introduced in~\cite{DBLP:conf/stoc/JiaLNRS05}
under the name sparse partition,
and was recently employed in the streaming setting
for \FL in high dimension~\cite{arxiv.2204.02095}.
Roughly speaking, consistent hashing is a partition of the space $\RR^d$,
such that each part has diameter bounded by $\beta r$,
and every ball $B_P(p, r)$ intersects at most $n^{1 / \beta}$ parts.\footnote{This tradeoff between $\beta r$ and $n^{1 / \beta}$
  is just one specific choice of known parameters.
  Our theorem works with any possible parameters of consistent hashing,
  see \Cref{lemma:hash_bound} and \Cref{remark:jl}.
}
Our algorithm moves each data point $p\in P$
to a fixed representative point inside its own part,
then computes the desired statistic on each part
(namely, on the data points moved to the same representative),
and finally aggregates, for each $p\in P$,
the $n^{1 / \beta}$ statistics of the parts that intersect $B_P(p, r)$.
This algorithm can be implemented in $O(\log_{s} n)$ rounds on MPC,
albeit with slightly bigger total space $n^{1+ 1/\beta}$.
An interesting feature of this algorithm is that its core is deterministic
and hence leads to new \emph{deterministic} MPC algorithms,
including for approximate nearest-neighbor search;
see \Cref{sec:nns}.

\paragraph{Computing A Solution for \FL.}
Our algorithm is based on the MP algorithm,
  where a key notion is the ``radius'' $r_p>0$ defined for each data point $p\in P$.
  Formally, it takes the value $r$ such that
  serving all points in the ball $B(p, r)$ by $p$,
  i.e., opening a facility at $p$ and assigning points to $p$,
  incurs a cost of $|B_P(p, r)| \cdot r$; see \Cref{def:r_p}.
  This value $r$ always exists and is unique.
  It is known that a constant-factor approximation $\hat{r}_p$ of $r_p$
  satisfies that $|B_P(p, \hat{r}_p)| \approx 1 / \hat{r}_p$,
  hence computing $|B_P(p, r)|$ for $O(\poly\log n)$ different values of $r$
  suffices to compute an $O(1)$-approximation of $r_p$~\cite[Lemma 1]{BCIS05}.
  However, it is not easy to estimate $|B_P(p, r)|$ in MPC (simultaneously for all $p$ and $r$),
  and the abovementioned geometric aggregation only estimates $|A_P(p, r)|$
  for some $A_P(p, r)$ that is sandwiched between $B(p, r)$ and $B(p, \beta r)$.
  Nonetheless, we show this weaker estimate suffices for
  approximating $r_p$ within $O(\beta)$ factor (see \Cref{lemma:rp_offline}).
  Thus, our aggregation primitive yields an $O(1)$-approximation for all the $r_p$ values in $O(1)$ rounds.

An $O(1)$-factor estimate of the optimal cost $\OPT$ can be computed
from an $O(1)$-approximation of the $r_p$ values for all $p\in P$,
because $\sum_{p \in P} r_p = \Theta(\OPT)$ \cite{BCIS05,DBLP:journals/algorithmica/GehweilerLS14}.
Moreover, the $r_p$ values can be used to compute an $O(1)$-approximate \emph{solution} for \FL.
Specifically,
the MP algorithm~\cite{DBLP:journals/siamcomp/MettuP03}
scans the points in $P$ in order of non-decreasing $r_p$ value,
and opens a facility at each point $p$
if so far no facility was opened within distance $2r_p$ from $p$.
The sequential nature of this algorithm makes it inadequate for MPC,
and we therefore design a new algorithm that makes decisions in parallel.
It has two separate rules to decide whether to open a facility at each point $p\in P$:

\begin{enumerate}[nosep]
\item[(P1)]
  open a facility at $p$ with probability $\Theta(r_p)$, independently of other points; and
\item[(P2)]
  open a facility at $p$ if it has the smallest label among $B_P(p, r_p)$,
where each point $q\in P$ is assigned independently a random label $h(q)\in[0,1]$.
\end{enumerate}
Let us give some intuition for these rules.
Rule (P1) is a straightforward way to use the $r_p$ values
so that the expected opening cost is $O(\sum_p r_p) = O(\OPT)$,
but it is not sufficient by itself because the connection cost might be too large.
Indeed, if a cluster of points is very far from all other points,
say, a cluster of $t$ points all with the same $r_p=1/t$,
then with constant probability,
(P1) does not open any facility inside this cluster,
and the closest open facility is prohibitively far.
However, (P2) guarantees that at least one facility is opened inside this cluster,
at the point that has the smallest label.
Rule (P2) is not sufficient by itself as well.
Indeed, let $x_1\in P$ be the minimizer from the viewpoint of $p$,
i.e., have the smallest label in $B_P(p, r_p)$.
Then it may happen that there is no facility at $x_1$
because $B_P(x_1, r_{x_1})$ has another point $x_2$ with an even smaller label.
The same may happen also to $x_2$, and we may potentially get a long
``assignment'' sequence $(p=x_0, x_1, x_2, x_3, \ldots, x_t)$,
where each $x_{i+1}$ is the minimizer from the viewpoint of $x_i$ and only the last point $x_t$ is an open facility.
In this case, the connection cost of $p$ can be as large as
$\sum_{i=0}^{t-1} r_{x_i}$
(i.e., matching the bound obtained by the triangle inequality),
which might be unbounded relative to $r_p$.
This might happen not only for one point $p$ but actually for many points,
and the total connection cost would be prohibitive.

We deal with this assignment issue using rule (P1),
and in effect use both rules together.
Given such an assignment sequence, we prove that for every $i \ge 0$,
either (i) $B_P(x_i, r_{x_i})$ contains an open facility  with constant probability,
or (ii) $r_{x_{i+1}} \le r_{x_i}/2$, i.e., the $r_p$ value drops significantly in the next step.
Property (i) means that the sequence stops at $x_i$ with constant probability and thus, unless (ii) occurs,
the expected length of the assignment sequence is $O(1)$.
Property (ii) implies that the connection cost of the next step
drops significantly as it is proportional to its $r_p$ value,
and hence, we just sum up a subsequence of geometrically decreasing $r_p$ values.
Combining the two properties, we obtain that the expected connection cost for every point $p$ is $O(r_p)$.
Hence, the expected total connection cost is $O(\sum_p r_p) = O(\OPT)$.

The main technical challenge is to show property (i) when (ii) does not occur.
Specifically, we need to show that,
given a partial reassignment sequence $(p = x_0, \ldots, x_{i - 1})$,
the probability that the sequence does not terminate at $x_i$ is bounded by a small constant.
This event can be broken into two sub-events:
(a) rule (P1) does not open any facility at the ``new'' points of $B_P(x_i, r_{x_i})$,
where a point is considered new if it is not in $\cup_{j<i} B_P(x_j, r_{x_j})$;
and (b) the smallest label appears at a new point
(and thus rule (P2) does not open a facility at $x_i$).
Let $t\in [0,1]$ be the fraction of points that are new in $B_P(x_i, r_{x_i})$.
Then the probability of (a) is roughly
$(1 - r_{x_i})^{\Omega(t / r_{x_i})} \approx e^{-\Omega(t)}$,
where this calculation crucially uses that (ii) does not occur,
which means all new points have a similar $r_p$ value as $x_i$,
and that for every point $x$, the ball $B_P(x, r_x)$ roughly contains $\Omega(1 / r_x)$ points,
as formalized in \Cref{fact:rp}.
Since the two events are independent
and since (b) happens with probability $t$ by symmetry,
we conclude that the probability we need to bound is at most $\exp(-O(t)) \cdot t \leq O(1)$.
Here, one can observe that (P1) and (P2) are ``complementing'' each other to make the said probability small:
when $t$ is large, the probability $\exp(-O(t))$ of (a), which comes from rule (P1), is small;
otherwise, the probability $t$ of (b), which comes from rule (P2), is small.

The idea of opening facilities using random labels
and analyzing the cost by constructing an assignment sequence
was previously used in~\cite{DBLP:conf/soda/AnNS15}.
The context there is of a dynamically changing input,
and this technique is used to limit changes in the solution over time,
while our goal is to have a fast parallel implementation.
Although the high level idea is similar, the setup is quite different,
as their algorithm needs a solution to a linear-programming (LP) relaxation for \FL,
while ours only needs the $r_p$ values;
and consequently also the analysis is different,
as their analysis uses the LP constraints and bounds the cost relative to LP value,
while our analysis crucially uses basic properties of the $r_p$ values,
as in \Cref{fact:rp}.

It remains to bound the opening cost which is the easy part.
We show that the number of open facilities is $O(\sum_p r_p)$ in expectation.
Indeed, for points selected by rule (P1) this is clear.
For rule (P2), since $B_P(p, r_p)$ contains at least $1/r_p$ points,
the probability that $p$ has the smallest label in $B_P(p, r_p)$ is at most $r_p$.

When implementing our algorithm in the MPC model, the only non-trivial part (except for estimating the $r_p$ values)
is to check that $p$ has the minimum label in the ball $B(p,r_p)$.
Nevertheless, it is sufficient to look for the minimum label
in a larger set that approximates the ball,
such as a set sandwiched between this ball and a ball whose radius is larger by $O(1)$-factor.
Therefore, the MPC primitive for geometric aggregation is sufficient to execute rule (P2).

Overall, these rules compute a set of open facilities.
To compute also an assignment of each point $p\in P$ to an open facility,
we use our approximate nearest-neighbor search algorithm,
to find for each point $p\in P$ its $O(1)$-approximately closest facility
(see \Cref{sec:nearest-neighbor-search}).
Notice that the assignment algorithm searches for the closest facility,
but the analysis is still based on the assignment sequence as above,
even though the connection cost of this sequence may substantially exceed the distance to the closest facility.

\paragraph{$k$-\textsc{Clustering}.}
Our algorithm for \kMedian follows the Lagrangian-relaxation framework
established by Jain and Vazirani~\cite{JV01}
(and used implicitly earlier by Garg~\cite{DBLP:conf/focs/Garg96}).
They managed to obtain a true $O(1)$-approximation for \kMedian
by leveraging a special property guaranteed by their algorithm for \FL,
namely, its output solution has opening cost  $\cost_O$ and connection cost $\cost_C$
that satisfy $\alpha \cdot \cost_O + \cost_C \leq \alpha \cdot \OPT$
for a certain $\alpha=O(1)$.
Unfortunately, this stronger property is obtained via a highly sequential primal-dual approach, and seems difficult to implement efficiently in MPC,
particularly because it is too sensitive for the known toolkit for high dimension,
like locality sensitive hashing (LSH) and our geometric aggregation via consistent hashing.

We therefore take another approach of
relying on a \emph{generic} $\gamma$-approximation algorithm for \FL,
and using it in a black-box manner to obtain bicriteria approximation for \kMedian.
Our algorithm can output $(1+\mu)k$ centers whose cost is
at most $O(1/\mu^2)$-factor larger than the optimum cost for $k$ centers,
for any desired $0 < \mu < 1$.
This type of tradeoff, where the number of centers is arbitrarily close to $k$,
was relatively less understood,
as previous work has focused mostly on a smaller $O(1)$-factor in the cost,
but using significantly more than $k$ centers~\cite{DBLP:journals/ipl/LinV92,DBLP:conf/stoc/LinV92,DBLP:conf/approx/AggarwalDK09,DBLP:conf/nips/Wei16,DBLP:journals/corr/HsuT16}.
The result of~\cite{DBLP:conf/approx/MakarychevMSW16} does give $(1 + \mu)k$ centers,
and is thus the closest to ours in terms of bicriteria bounds,
however it relies explicitly on LP rounding, which seems difficult to implement in MPC.
To the best of our knowledge,
obtaining $(1+\mu)k$ centers for $k$-clustering by a black-box reduction to \FL
was not known before.
We believe this new reduction, and the smooth tradeoff it offers,
may be of independent interest.

In more detail, the black-box reduction from \kMedian to \FL goes as follows:
Assume momentarily that we know (an approximation of) the optimal clustering cost $\OPT$,
and consider the clustering instance as an input for \FL with opening cost $\wopen := \OPT / k$.
The optimal cost of this instance is at most $k\cdot \wopen + \OPT = 2\OPT$
as the optimal \kMedian solution is also a feasible solution for \FL with at most $k$ facilities;
note that the choice of $\wopen$ balances the opening and connection costs in this solution.
We then run (any) $\gamma$-approximation algorithm for \FL,
and it will find a solution whose cost is at most $\gamma\cdot 2\OPT$.
The choice of $\wopen$ implies that the number of open facilities in this solution
is at most $\gamma\cdot 2\OPT / \wopen = 2\gamma\cdot k$.
Hence, we obtain a $(2\gamma,2\gamma)$-bicriteria approximation of \kMedian.
Finally, we remove the assumption of knowing $\OPT$
by running this procedure in parallel
for a logarithmic number of guesses for $\OPT$,
and taking the cheapest solution that uses at most $O(\gamma\cdot k)$ centers.
Using our MPC algorithm for \FL, this gives an efficient MPC algorithm for clustering
with $(O(\gamma),O(\gamma))$-bicriteria approximation guarantees.

In this approach, the number of centers (in the solution)
might exceed $k$ by a large constant factor.
We now use a different method to reduce it to be arbitrarily close to $k$.
First, we observe that the $(O(\gamma), O(\gamma))$-approximate solution can be used to get a \emph{weak coreset},
namely, a weighted set of at most $O(\gamma\cdot k)$ distinct points,
such that any $\alpha$-approximate set of centers for the coreset
is also an $O(\alpha\cdot \gamma)$-approximate solution for the original instance.
To obtain the coreset, we just move every point to its approximately nearest facility in the approximate solution, which is a standard step in the clustering literature (see e.g.~\cite{DBLP:conf/focs/GuhaMMO00}).
Then, weight $w(p)$ of a point $p$ is the number of original points rounded to $p$.
We note that the problem becomes trivial if the coreset fits in one machine, and thus the interesting case is when $k$ is very large.

Given this weak coreset,
we find an approximate solution with at most, say, $2k$ centers
and cost increased by another factor of $O(\gamma)$,
using the following simple but sequential algorithm:
Process the coreset points in order of non-increasing weights $w(\cdot)$,
and open a center at point $p$
if so far there is no center within distance $\OPT / (k\cdot w(p))$ from $p$;
here we again need a guess for $\OPT$.
In a nutshell, the analysis of this algorithm is based on averaging arguments;
intuitively, only few points in the weak coreset can have a relatively large connection cost in the optimal solution.

We then convert this sequential algorithm into a parallel one
using similar ideas as for \FL.
That is, for every point in parallel, we add it to the set of centers
using two separate rules:
(i) independently with probability $1/\gamma$; or
(ii) if there is no point with larger weight within distance $\OPT / (k\cdot w(p))$.
To implement (ii) in MPC, we need to ensure a consistent tie-breaking,
for which a small random perturbation of the weights is sufficient.
Finally, to efficiently find the point of maximum (perturbed) weight
in the neighborhood of every point,
we again employ our MPC primitive for geometric aggregation.

 \subsection{Related Work}
\label{sec:related}

\paragraph{Parallel and Distributed Algorithms.}

A more general metric setting of \FL has been studied earlier in the distributed Congest and CongestedClique models
(see, e.g., \cite{DBLP:journals/algorithmica/GehweilerLS14,HP15,HPS14}), and these results immediately transfer
into MPC algorithms with $O(n)$ local memory and $O(n^2)$ total space.
In particular, in combination with the recent result in \cite{CKPU23}, these results yield
an $O(1)$-round MPC algorithm for metric \FL.
In this general-metric setting of \FL, instances have size $O(n^2)$,
which makes the problem significantly different from our geometric setting,
e.g., instances in $\RR^d$ are trivial if the local memory is $O(n)$.
Furthermore, those results for general metrics rely on computing $O(1)$-ruling sets,
and by the conditional lower bounds in \cite{GKU19,CDP21},
this seems to require $\omega(1)$ rounds on a fully-scalable MPC.
Our algorithms bypass the obstacle of ruling sets
by leveraging the geometric structure in $\RR^d$
and one can view our high-level contribution
as proposing a setting 
avoiding that obstacle.

Clustering problems (e.g., \FL, \kMeans, \kMedian) have been also studied in the PRAM model of parallel computation \cite{BT10,DBLP:conf/spaa/BlellochGT12}. These algorithms can be implemented in the MPC model, but the logarithmic factors in the running time or the approximation ratio seem inherent, and they fall short of achieving $O(1)$ rounds in the fully-scalable regime.

\paragraph{Connections to Streaming.}
A closely related model is the streaming model, which mainly focuses on sequential processing of large datasets by a single machine with a limited (sublinear) memory.
In general, if a streaming algorithm is storing only a linear sketch and uses space $O(s^{1-\varepsilon})$ for a fixed $\varepsilon > 0$,
then it can be simulated on MPC in $O(\log_s N)$ rounds
(recall $s$ is the local memory per machine and $N$ is the input size);
this was also observed and mentioned in, e.g.,~\cite{CCJLW23}.
Thus, the various recent results on streaming algorithms in high dimension
can be readily applied in MPC, and we briefly discuss the most relevant ones in the following.
We note that streaming algorithms typically assume the input is discrete, i.e.,
$P\subseteq [\Delta]^d$ with $\Delta = \poly(n)$.

For both \kMedian~\cite{DBLP:conf/icml/BravermanFLSY17} and \kMeans~\cite{DBLP:journals/corr/abs-1802-00459},
it is possible to find $(1 + \epsilon)$-approximate solution with $k$ centers using space $\poly(\epsilon^{-1}kd \log \Delta)$.
However, these algorithms are not directly applicable in our setting,
since to simulate these algorithms it requires local memory size $s = \Omega(k)$ which is not fully scalable.
For \FL, \cite{Indyk04} gave $O(d \log \Delta)$-approximation (along with several other problems including minimum spanning tree and matching), using space $\poly(d \log \Delta)$.
Later on, an $O(d / \log d)$-approximation for \FL was obtained using similar space $\poly(d \log \Delta)$,
and alternatively $O(1 / \epsilon)$-approximation using space $O(n^{\epsilon})$~\cite{arxiv.2204.02095}.
However, these results for \FL can only estimate the optimal cost,
as storing the solution requires linear space (which is too costly since streaming algorithms aim to use sublinear space).
Hence, simulating these results only leads to estimating the optimal cost in MPC,
while our result for \FL can indeed find an approximate solution.

\paragraph{MPC Algorithms for MST.}
A similar issue of cost estimation versus finding approximate solutions
is present also for the minimum spanning tree (MST) problem.
The classical streaming algorithm of~\cite{Indyk04} was recently improved to an $O(\log n)$-approximation using the same space regime $\poly(d \log \Delta)$~\cite{DBLP:conf/stoc/ChenJLW22}
and to an $O(1 / \epsilon^2)$-approximation for the $n^{\epsilon}$ space regime~\cite{CCJLW23}.
However, these results are for estimating the optimal MST cost, and it is still open to find an $O(1)$-approximate solution for high-dimensional MST in $O(\log_{s}n)$ rounds of MPC.
Indeed, the currently best MPC algorithm finds an $O(1)$-approximate MST in $\tilde{O}(\log \log n)$ rounds~\cite{JMNZ24} (using local space polynomial in $n$),
	while in $O(\log_{s} n)$ rounds, one can only find an $O(1)$-approximation in a low dimension~\cite{DBLP:conf/stoc/AndoniNOY14} (see also~\cite{CZ24} for an exact MST algorithm for $d=2$)
	or a $\poly(\log n)$-approximation in a high dimension~\cite{DBLP:conf/spaa/AhanchiAHKZ23}.
For a related problem of single-linkage clustering in a low dimension, there is a $(1+\varepsilon)$-approximation on MPC in $O(\log n)$ rounds~\cite{YaroslavtsevV18}.

 \section{Preliminaries}
\label{sec:prelim}

For integer $n$, let $[n] \eqdef \left\{1, 2, \ldots, n\right\}$.
For a function $\varphi: X\rightarrow Y$,
the image of a subset $S \subseteq X$ is defined 
$\varphi(S) \eqdef \{ \varphi(x) : x \in S\}$,
and the preimage of $y\in Y$ is defined as
$\varphi^{-1}(y) \eqdef \left\{x\in X : \varphi(x) = y\right\}$.
A Euclidean ball centered at $x\in \mathbb{R}^{d}$ with radius $r \geq 0$
is defined as $B(x,r) \eqdef \left\{y\in \mathbb{R}^{d} : \dist(x,y)\leq r\right\}$,
where $\dist(x, y) \eqdef \|x - y\|_2$ refers throughout to Euclidean distance. 
For a set $P\subset \RR^d$, we define $B_P(x, r)  \eqdef B(x, r) \cap P$,
which is also a metric ball inside $P$.
Let $\diam(S)$ denote the diameter of $S\subseteq\mathbb{R}^{d}$.  
The aspect ratio of a point set $S \subset \mathbb{R}^d$ is the ratio between the maximum and minimum inter-point distance of $S$, i.e.,
$\frac{\max_{x, y \in S} \dist(x, y)} {\min_{x \neq y \in S} \dist(x, y)}$.

\begin{fact}[Generalized triangle inequality]
  \label{lemma:tri_ine_z}
  Let $(V, \rho)$ be a metric space, and let $z \geq 1$. Then
  \[
    \forall x, x', y \in V,
    \qquad
    \rho^z(x, y) \leq 2^{z-1} (\rho^z(x, x') + \rho^z(x', y)) .
  \]
\end{fact}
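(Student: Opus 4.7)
The plan is to reduce the claim to two well-known ingredients: the ordinary triangle inequality in the metric $\rho$, and the convexity of the power map $t \mapsto t^z$ on $[0,\infty)$ for $z \geq 1$. First, I would apply the standard triangle inequality to get $\rho(x,y) \leq \rho(x,x') + \rho(x',y)$. Since both sides are nonnegative and the function $t \mapsto t^z$ is monotone nondecreasing on $[0,\infty)$, I can raise both sides to the $z$-th power, which transforms the target inequality into the purely numerical statement
\[
  (a+b)^z \leq 2^{z-1}(a^z + b^z) \qquad \text{for all } a,b \geq 0,
\]
with $a = \rho(x,x')$ and $b = \rho(x',y)$.

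For this numerical inequality, the plan is to invoke convexity of $\varphi(t) = t^z$ on $[0,\infty)$, which holds precisely because $z \geq 1$ (and can be verified by noting $\varphi''(t) = z(z-1)t^{z-2} \geq 0$). Applying the definition of convexity at the midpoint yields $\left(\tfrac{a+b}{2}\right)^z \leq \tfrac{1}{2}(a^z + b^z)$, and multiplying through by $2^z$ gives exactly $(a+b)^z \leq 2^{z-1}(a^z + b^z)$. Chaining this with the monotone step from the first paragraph delivers the claimed bound.

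I do not anticipate any real obstacle. The only things worth double-checking are the edge cases: when $z = 1$ the constant $2^{z-1}$ collapses to $1$ and we recover the usual triangle inequality verbatim, and when one of $a,b$ equals $0$ the inequality degenerates to equality (for $z=1$) or a trivial strict bound, both of which are handled uniformly by the convexity argument above. No properties of the metric beyond the triangle inequality and nonnegativity are used, so the statement holds in any metric space.
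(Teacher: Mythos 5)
Your proof is correct. The paper states this as a \emph{Fact} without proof (it is folklore), and your argument — ordinary triangle inequality followed by midpoint convexity of $t\mapsto t^z$ on $[0,\infty)$ to get $(a+b)^z\le 2^{z-1}(a^z+b^z)$ — is exactly the standard derivation one would supply.
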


\paragraph{\textsc{Power-$z$ (Uniform) Facility Location}.}
Given a set of data points $P \subset \mathbb{R}^d$, a (uniform) opening cost $\wopen > 0$ and some $z \geq 1$,
the objective function of  \textsc{Power-$z$ (Uniform) Facility Location}
for a set of facilities $F \subset \mathbb{R}^d$ is defined as 
\[
    \fl_z(P, F) \eqdef |F| \cdot \wopen + \sum_{p \in P} \dist^z(p, F),
\]
where again $\dist(x, y) \eqdef \|x - y\|_2$ and
$\dist(x, S) \eqdef \min_{y \in S} \dist(x, y)$. 
From now on, we omit ``uniform'' from the name of the problem, and simply use \pzFL.
We denote the minimum value of a solution for \pzFL by
$\OPTfl_z(P) \eqdef \min_{F\subseteq \mathbb{R}^{d}}\fl_z(P,F)$;
we omit $P$ if it is clear from the context.

\paragraph{\kzC.}
Given a set of data points $P \subset \mathbb{R}^d$, an integer $k \geq 1$ and some $z \geq 1$,
the objective function of \kzC
for a center set $C \subset \mathbb{R}^d$ with $|C| \leq k$ is defined as
\[
  \cl_z(P, C) \eqdef \sum_{p \in P} \dist^z(p, C).
\]
Notice that the special cases $z = 1$ and $z = 2$ are called \kMedian and \kMeans, respectively.
We denote the minimum value of a solution for \kzC by
$\OPTkz(P) := \min_{C\subseteq \mathbb{R}^{d}: |C|\le k}\cl_z(P, C)$;
we again omit $P$ when it is clear from the context.

\paragraph{Consistent Hashing.}
As mentioned above, our MPC primitive for geometric aggregation
relies on \emph{consistent hashing}.
We define it below, and then state the best known bounds for its parameters,
which are near-optimal~\cite{DBLP:conf/icalp/Filtser20}.

\begin{definition}[{\cite[Definition 1.6]{arxiv.2204.02095}}]
\label{def:geometric_hashing} 
A mapping $\varphi: \mathbb{R}^{d}\rightarrow\mathbb{R}^{d}$ is called a
\emph{$\Gamma$-gap $\Lambda$-consistent hash}
with diameter bound $\ell>0$, or simply $(\Gamma,\Lambda)$-hash,\footnote{Note that this definition is scale invariant with respect to $\ell$,
  i.e., a scaling of $\RR^d$ will scale $\ell$ but not affect
  the parameters $\Gamma$ and $\Lambda$.
  Thus, upper and lower bounds can restrict attention to the case $\ell=1$.
}
if it satisfies: 
\begin{enumerate} [nosep]
\item Diameter: for every image $z\in \varphi(\mathbb{R}^{d})$, we have $\diam(\varphi^{-1}(z))\leq \ell$; and
\item Consistency: for every $S\subseteq\mathbb{R}^{d}$ with $\diam(S)\leq \ell/\Gamma$, we have $|\varphi(S)|\leq \Lambda$.
\end{enumerate}
\end{definition}

\begin{lemma}[{\cite[Theorem 5.1]{arxiv.2204.02095}}]
  \label{lemma:hash_bound}
  For every $\Gamma \in [8, 2d]$, 
  there exists a (deterministic) $(\Gamma, \Lambda)$-hash $\varphi:\RR^d\to\RR^d$
  where $\Lambda = \exp(8d / \Gamma) \cdot O(d \log d)$.
  Furthermore, $\varphi$ can be described using $O(d^2 \log^2 d)$ bits
  and one can evaluate $\varphi(x)$ for any point $x \in \mathbb{R}^d$ in space $O(d^2 \log^2 d)$.
\end{lemma}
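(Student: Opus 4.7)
My plan is to construct $\varphi$ via a deterministic ball-carving procedure based on a net, in the spirit of sparse partitions \cite{DBLP:conf/stoc/JiaLNRS05}. First, I would fix a net $N \subset \mathbb{R}^d$ at an appropriate scale---say, an $\ell/2$-net (pairwise separation $\Omega(\ell)$ and $\ell/2$-covering)---obtained deterministically by greedy thinning of a rescaled integer grid, so that both $N$ and an ordering on $N$ can be described in $\poly(d, \log(1/\ell))$ bits. Enumerate $N$ in this deterministic order as $z_1, z_2, \ldots$, and set
\[
  \varphi^{-1}(z_i) \eqdef B(z_i, \ell/2) \setminus \bigcup_{j<i} \varphi^{-1}(z_j),
\]
so that $\varphi(x)$ is the earliest net point whose closed $(\ell/2)$-ball contains $x$. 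Coverage of $N$ guarantees $\varphi$ is defined everywhere on $\mathbb{R}^d$.

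The diameter bound is then immediate, since every preimage is contained in a ball of radius $\ell/2$. The consistency bound is the crux of the proof: for $S \subseteq \mathbb{R}^d$ with $\diam(S) \le \ell/\Gamma$, every image $z \in \varphi(S)$ must satisfy $B(z, \ell/2) \cap S \neq \emptyset$, so $z \in B(s_0, \ell/2 + \ell/\Gamma)$ for any fixed $s_0 \in S$. A standard volume-packing argument using the $\Omega(\ell)$-separation of $N$ controls the number of such candidate net points. However, a single-scale net yields only a bound of roughly $(3 + 4/\Gamma)^d$, which is exponential in $d$ regardless of $\Gamma$.

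The main obstacle is to sharpen this packing bound so that the exponent scales as $d/\Gamma$ rather than $d$. The natural route is to combine the net with a hierarchical decomposition: nest grids of geometrically varying side lengths and, when ordering net points, prioritize them according to the coarsest grid cell containing them, so that any set $S$ of diameter $\ell/\Gamma$ is ``caught'' inside a single grid cell of the appropriate scale. Then the ordering ensures that once a net point's ball near $S$ is carved out, neighbors further away cannot contribute new images of points of $S$, and the total number of surviving images drops to $\exp(O(d/\Gamma)) \cdot O(d \log d)$. The $O(d \log d)$ slack absorbs the overhead of walking the hierarchy and the discretization needed for determinism.

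For the complexity claim, the grid (shift and scale) together with the hierarchical parameters fits in $O(d \log d)$ bits per level, and $O(\log d)$ levels suffice, for a total of $O(d^2 \log^2 d)$ bits to describe $\varphi$. To evaluate $\varphi(x)$, I would enumerate the $\exp(O(d))$ candidate net points within $B(x, \ell/2)$, each encoded in $O(d \log d)$ bits, and return the one of smallest order; this fits in working space $O(d^2 \log^2 d)$, giving the claim. The construction is scale-invariant in $\ell$, matching \Cref{def:geometric_hashing}.
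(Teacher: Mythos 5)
Note first that the paper does not prove this lemma; it simply cites Theorem~5.1 of [CJKVY22] (the consistent-hashing construction from the streaming facility-location paper). Your task, then, is to reprove that theorem from scratch, and your attempt has a genuine gap at exactly the step that constitutes the entire technical content of the result.

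Your ball-carving setup and the diameter bound are fine, and you correctly diagnose the obstacle: with a single $\Omega(\ell)$-separated net, every small set can in the worst case meet $2^{\Theta(d)}$ surviving preimages, which is useless no matter how large $\Gamma$ is. The problem is that the hierarchical fix you then sketch does not work as stated, and the argument for it is never actually made. The claim ``any set $S$ of diameter $\ell/\Gamma$ is caught inside a single grid cell of the appropriate scale'' is false for a \emph{deterministic} grid: at every scale, there are sets of arbitrarily small diameter that straddle a cell boundary, so no fixed (un-shifted) grid hierarchy isolates all small sets. Getting this ``isolation'' property is precisely what requires either a random shift or a family of $\Theta(d)$ carefully staggered shifted grids, and turning that into a single deterministic partition is the heart of the cited construction, not an afterthought. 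Likewise, the assertion that ``once a net point's ball near $S$ is carved out, neighbors further away cannot contribute new images of points of $S$'' is unsupported and is not generally true --- carving one ball removes only its interior, so other nearby balls are free to claim the remainder of $S$, and bounding how many can do so is exactly the consistency count you need to establish. Finally, the $O(d\log d)$ factor and the $O(d^2\log^2 d)$ description/evaluation bounds are also asserted rather than derived.

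In short, everything before ``The main obstacle is\ldots'' is correct but easy, and everything after it is a restatement of what must be proved rather than a proof. To actually recover $\Lambda = \exp(8d/\Gamma)\cdot O(d\log d)$ you would need to commit to a specific lattice (or a family of $O(d)$ shifted grids), a specific deterministic carving order, and then carry out a counting argument showing that the number of surviving balls whose carved region meets a set of diameter $\ell/\Gamma$ is $\exp(O(d/\Gamma))\cdot\poly(d)$; none of that appears here. I would encourage you to either work out this counting argument in detail or simply cite [CJKVY22, Theorem~5.1] as the paper does.
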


\begin{remark}
\label{remark:jl}

Our main results also hold under the assumption that $s \geq \polylog(dn)$,
which is preferable when $d\gg \log n$.
It follows by the well-known idea of applying
a JL transform $\pi$ (named after Johnson and Lindenstrauss~\cite{JL84})
with target dimension $O(\log n)$
as a preprocessing of the input $P \subset \RR^d$,
i.e., running our algorithm on $\pi(P)$ instead of $P$.
Then, with probability $1 - 1 / \poly(n)$,
all the guarantees in our results would suffer only an $O(1)$ factor. 
To implement this preprocessing in MPC using $\polylog (dn)$ words of local memory,
we use a bounded-space version of the JL transform~\cite{PRG00},
that requires only $\polylog (dn)$ words to specify $\pi$
(in comparison, a naive implementation of $\pi$ requires $O(d \log n)$ words).
One machine can randomly generate this specification of $\pi$ and broadcast it,
and then all machines can apply $\pi$ locally in parallel. 
This requires additional $O(\polylog (dn))$ local memory and $O(\log_{s}n)$ rounds,
and these additional costs are easily absorbed in our bounds.
Hence, in all our results we can assume without loss of generality
that $P$ is replaced by $\pi(P)$. 
Furthermore, after this preprocessing,
we can use $d = O(\log n)$ also in the consistent hashing bounds in \Cref{lemma:hash_bound}
to obtain a tradeoff $\Gamma = O(1 / \eps)$ and $\Lambda = O(n^{\eps})$,
for any desired fixed $\eps\in(0,1)$, 
and also the hash can be described using $\polylog (n)$ bits.
\end{remark}

 \section{MPC Primitive for Geometric Aggregation in High Dimension}
\label{sec:mpc}

In our MPC algorithms, we often face a scenario
where we want to compute something for each input point $p\in P$.
That computation is a relatively simple problem,
like computing the number of points in the ball $B_P(p,r)$ for some global value $r>0$.
A more general version is to allow different radii $r$ (local for each $p$);
another generalization is to compute some function $f$ over the points in $B_P(p,r)$,
like finding the point with smallest identifier or smallest distance to $p$.
A naive approach is to collect (copies of) all the points in $B_P(p,r)$
to the same machine, say the one holding $p$, and then compute $f$ there.
This is very challenging and our solution is to approximate these balls
by generating sets $A_P(p,r) \approx B_P(p,r)$,
and evaluate $f$ on these sets instead of on the balls.
The approximation here just means that the set $A_P(p,r)$
is sandwiched between a ball of radius $r$ and one of larger radius,
see~\eqref{eq:MPCsandwich}.
Informally, we thus compute $f(A_P(p,r)) \approx f(B_P(p,r))$,
but of course the approximation here need not be multiplicative.

Our MPC algorithm derives these sets $A_P(p,r)$ from consistent hashing (see Definition~\ref{def:geometric_hashing}),
and thus our description below requires access to such a hash function $\varphi$,
and moreover the final guarantees depend on the parameters $\Gamma$ and $\Lambda$ of the consistent hashing.
The theorem below is stated in general, i.e., for any $(\Gamma, \Lambda)$-hash,
but we eventually employ known constructions 
with $\Gamma = O(1/\eps)$ and $\Lambda = n^\eps$, for any desired $\eps>0$ (see \Cref{remark:jl} and \Cref{lemma:hash_bound} for details).  
Obviously, running this algorithm in MPC requires
an implementation of consistent hashing, which might require additional memory;
but this memory requirement is typically much smaller than $\sqrt{s}$,
and thus the hash function can be easily stored in each machine.

Yet another challenge is
that the entire set $A_P(p,r)$ might not fit in a single machine,
and we thus impose on $f$ another requirement.
We say that a function $f$ is \emph{composable}
if for every disjoint $S_1, \ldots,S_t \subseteq \RR^d$,
one can evaluate $f(S_1\cup\cdots\cup S_t)$
from the values of $f(S_{1}), \ldots, f(S_{t})$.\footnote{We use general $t$ here because of our intended application,
  but obviously it follows from the special case $t=2$.
}
In our context, $f$ maps finite subsets of $\RR^d$ to $\RR$.  For instance, $f(S)=|S|$ is clearly composable. 
For a few more interesting examples,
suppose every $x\in\RR^d$ is associated with a value $h(x)\in\RR$.
Now if $h(x)$ represents the weight of $x$,
then $f(S) = \sum_{x\in S} h(x)$ is the total weight of $S$;
and if $h(x)$ represents an identifier (perhaps chosen at random),
then $f(S) = \min_{x\in S} h(x)$ is the smallest identifier in $S$.

\begin{theorem} [Geometric Aggregation in MPC]
\label{thm:mpc}
There is a deterministic fully-scalable MPC algorithm with the following guarantees. 
Suppose that
\begin{itemize} \item the input is $r \geq 0$
     and a multiset $P\subset \RR^d$ of $n$ points
     distributed across machines with local memory $s\geq \poly(d\log n)$; and
   \item the algorithm has access
to a composable function $f$ (mapping finite subsets of $\RR^d$ to $\RR$) and
to a $(\Gamma, \Lambda)$-hash $\varphi:\RR^d\to\RR^d$.
\end{itemize}
Then the algorithm uses $O(\log_{s}n)$ rounds
and $O(\Lambda\cdot \poly(d)) \cdot \Ot(n)$ total space,
and computes for each $p\in P$ a value $f(A_P(p, r))$,
where $A_P(p, r)$ is an arbitrary set that satisfies
\begin{equation} \label{eq:MPCsandwich}
  B_P(p, r)\subseteq A_P(p, r)\subseteq B_P(p, 3\Gamma\cdot r).
\end{equation}
(In fact, the set $A_P(p, r)$ is determined by $\varphi$.)
\end{theorem}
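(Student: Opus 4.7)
The plan is to reduce the computation to standard MPC primitives---parallel sort and tree aggregation---by exploiting the partition of $\RR^d$ induced by $\varphi$. I scale $\varphi$ so its diameter bound is $\ell = 2\Gamma r$; by the consistency condition, every ball $B(p,r)$ (of diameter $2r = \ell/\Gamma$) then intersects at most $\Lambda$ parts of $\{\varphi^{-1}(z)\}$. The aggregation set I compute statistics of is
\[
A_P(p,r) \;:=\; \bigcup_{z \in H_p}\bigl(P\cap\varphi^{-1}(z)\bigr),
\qquad
H_p := \{\varphi(q): q\in B_P(p,r)\}.
\]
The sandwich \eqref{eq:MPCsandwich} is then immediate: $B_P(p,r)\subseteq A_P(p,r)$ holds by construction of $H_p$, and any $x\in A_P(p,r)$ shares a part of diameter at most $2\Gamma r$ with some $q\in B_P(p,r)$, so $\dist(x,p)\le \dist(x,q)+\dist(q,p)\le 2\Gamma r+r\le 3\Gamma r$ (using $\Gamma\ge 1$).

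The algorithm has four phases, each running in $O(\log_s n)$ rounds via standard MPC machinery. \emph{(1) Hashing:} every machine locally applies $\varphi$ to its points. \emph{(2) Part-level aggregation:} a parallel sort groups each active part contiguously, and composability of $f$ then enables a tree reduction to compute $f_z := f(P\cap\varphi^{-1}(z))$ for every $z\in \varphi(P)$, with the reduction distributed across machines if a single part does not fit on one. \emph{(3) Proximity replication:} each $q\in P$ enumerates the hashes $\varphi(B(q,r))$---at most $\Lambda$ by consistency---and ships a replica $(q,\varphi(q))$ to the machine(s) responsible for each such part. After this phase, the machine holding part $z$ possesses every $q$ whose ball $B(q,r)$ meets $z$; in particular, for any query point $p$ with $\varphi(p)=z$, all of $B_P(p,r)$ is available on the same machine as $p$. \emph{(4) Per-query aggregation:} each $p$ reads off $H_p$ locally, fetches the precomputed $\{f_z : z\in H_p\}$ (at most $\Lambda$ values, routed to the queries that need them), and composes them by composability to emit $f(A_P(p,r))$.

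The main obstacle is Phase~(3): simultaneously enumerating $\varphi(B(q,r))$ for every $q$ and routing the resulting replicas without blowing up the space. Consistency gives $|\varphi(B(q,r))|\le \Lambda$, so the total number of replicas is at most $\Lambda n$ and each takes $\poly(d)$ words, fitting the $\Lambda\cdot\poly(d)\cdot\Ot(n)$ space budget; a standard shuffle keeps no machine overloaded. To actually list $\varphi(B(q,r))$ on the fly I rely on the explicit construction behind \Cref{lemma:hash_bound}, whose grid-based structure lets one enumerate the image of a bounded region in $\poly(d,\Lambda)$ time by iterating over the candidate shifted cells meeting $B(q,r)$. Everything else---parallel sort, per-part tree reduction, and $\Lambda$-fold broadcasting of $f_z$ values---is routine MPC, executing in $O(\log_s n)$ rounds with the stated total space.
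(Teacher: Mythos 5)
Your high-level plan---partition via consistent hashing with diameter bound $\ell = 2\Gamma r$, aggregate $f$ per part, and combine per-query over the $\le\Lambda$ parts a ball touches---is exactly the paper's plan, and your sandwich argument for \eqref{eq:MPCsandwich} is correct. But Phase~(3)/(4) as written has a genuine gap.

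You claim that after proximity replication, ``the machine holding part $z$ possesses every $q$ whose ball $B(q,r)$ meets $z$; in particular, for any query point $p$ with $\varphi(p)=z$, all of $B_P(p,r)$ is available on the same machine as $p$,'' and that $p$ then ``reads off $H_p$ locally.'' This is not justified. Consistency bounds how many parts a single ball touches, not how many balls touch a single part: part $z$ can receive as many as $n$ replicas (e.g., if $\varphi^{-1}(z)$ lies within distance $r$ of every data point), so the replicas for $z$ cannot, in general, fit on one machine---and once they are split across machines, the local readoff of $H_p$ (which also requires per-$p$ distance filtering) no longer works. The ``standard shuffle'' remark does not repair this: it balances total load but does not co-locate all of $B_P(p,r)$ with $p$.

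The cleanest fix---which is exactly what the paper does---is to drop $H_p = \{\varphi(q):q\in B_P(p,r)\}$ in favor of the superset $\varphi(B(p,r))$, the image of the \emph{entire} Euclidean ball, not just its data points. You already enumerate $\varphi(B(q,r))$ locally in Phase~(3); use the same enumeration for each query $p$, and define $A_P(p,r):=\bigcup_{u\in\varphi(B(p,r))}(P\cap\varphi^{-1}(u))$. This still satisfies \eqref{eq:MPCsandwich} by the same triangle-inequality argument, and it requires no knowledge of which data points lie near $p$, so there is no replica-shipping step and no concentration issue. Concretely: emit tuples $(p,u)$ for $u\in\varphi(B(p,r))$ ($\le\Lambda$ per point), sort by $u$ and annotate each tuple with the precomputed $f(P_u)$ (handling a heavy $u$ by broadcast along its contiguous machine segment), then sort by $p$ and fold the $\le\Lambda$ annotated tuples via composability. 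Every step is a constant number of MPC sorts / converge-casts, giving $O(\log_s n)$ rounds and $O(\Lambda\cdot\poly(d))\cdot\Ot(n)$ total space. Your smaller $A_P$ defined via $H_p$ is also a valid choice for the theorem statement, but it is strictly harder to compute, and the extra work buys nothing.
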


\begin{proof}
Our algorithm makes use of the following standard subroutines in MPC, and we note that they are deterministic.
In the broadcast procedure, to send a message of length at most $\sqrt{s}$ from some machine $\mathcal{M}_0$ to every other ones,
one can build an $\sqrt{s}$-ary broadcasting tree whose nodes are the machines (with $\mathcal{M}_0$ as the root),
and send/replicate the message level-by-level through the tree (starting from the root).
Observe that the height of the tree is $O(\log_{s}n)$ and hence the entire process runs in $O(\log_{s}n)$ rounds.
The reversed procedure defined on the same broadcast tree, sometimes called converge-cast~\cite{mpc_book},
can be used to aggregate messages with size at most $\sqrt{s}$ distributed across machines to some root machine $\mathcal{M}_0$ (for instance, to aggregate the sum of vectors of length $\sqrt{s}$ distributed across machines), in $O(\log_{s}n)$ rounds.
In particular, it can be used to evaluate the composable function $f$ on a (distributed) set $S$,
where each machine $\mathcal{M}$ evaluates $f(S_{\mathcal{M}})$ for its own part $S_{\mathcal{M}} \subseteq S$,
and aggregate using converge-cast.

   \begin{lemma}[{Sorting in MPC~\cite{DBLP:conf/isaac/GoodrichSZ11}}]
      \label{lemma:sorting}
      There is a deterministic MPC algorithm that given a set $X$ of $N$ comparable items distributed across machines with local memory $s$,
      sorts $X$ such that each $x \in X$ knows its rank and $\forall x < y \in X$, it holds that the machine that holds $x$ has an ID no larger than that of $y$.
The algorithm uses $O(\log_{s}N)$ rounds and total space of $O(N)$ words. 
   \end{lemma}

We give an outline of our algorithm in \Cref{alg:mpc_compute_f};
the implementation details in MPC are discussed below.
The algorithm starts with ``partitioning'' $\mathbb{R}^d$ into buckets with respect to the $(\Gamma, \Lambda)$-hash $\varphi$,
and approximates each ball $B(p, r)$ by the union of buckets that this ball intersects.
This distorts the radius by at most an $O(\Gamma)$-factor.
Then, we evaluate the $f$ value on each bucket,
and the approximation to $f(B_P(p, r))$ is obtained by ``aggregating''
the $f$ value for the intersecting buckets of $B(p, r)$, where the composability of $f$ is crucially used.

\paragraph{Implementation Details.}
Here we discuss how each step of \Cref{alg:mpc_compute_f} is implemented efficiently in MPC.
In line~\ref{line:inverse}, since after the sorting, for each $u \in \varphi(P)$ the points in $P_u$,
i.e., the set of points $p$ such that $u = \varphi(p)$,  span a (partial) segment of machines with contiguous IDs,
one can use a converge-cast in parallel in each segment to aggregate $f(P_u)$.
In line~\ref{line:initial}, although the total space is sufficient to hold all tuples,
we may not have enough space to store the $O(\Lambda)$ tuples for a point $p$ in a single machine.
Instead, we allocate for every point $p$ a (partial) segment of machines whose total space is $O(\Lambda)$ (which can be figured out via sorting),
replicate $p$'s to them (via broadcast),
and generate $\varphi(B(p, r))$ in parallel on those machines.
Specifically, each machine in the segment is responsible for generating a part of $\varphi(B(p, r))$,
and a part can be generated locally without further communication since every machine holds the same deterministic $\varphi$.
Line~\ref{line:fPu} and line~\ref{line:fAp} can be implemented similarly by broadcast and converge-cast, respectively, in parallel on each segment of machines.

\begin{algorithm}[ht]
   \caption{MPC algorithm for evaluating $f(A_P(p, r))$ for $p \in P$, for given $P \subset \mathbb{R}^d, r > 0$}
   \label{alg:mpc_compute_f}
    \begin{algorithmic}[1]

\State each machine imposes the same $(\Gamma, \Lambda)$-hash $\varphi : \mathbb{R}^d \to \mathbb{R}^d$ with diameter bound $\ell := 2\Gamma r$ 

        \Comment{notice that $\varphi$ is deterministic, hence no communication is required}

        \State sort $P$ with respect to $\varphi(p)$ for $p \in P$ (using \Cref{lemma:sorting})

        \State for $u \in \varphi(P)$, evaluate and store $f(P_u)$ where $P_u := \varphi^{-1}(u) \cap P$ 
        \label{line:inverse}

        \State for each $p \in P$ and $u \in \varphi(B(p, r))$,
        create and store a tuple $(p, u)$
        \label{line:initial}

        \Comment{as $|\varphi(B(p, r))|=O(\Lambda)$ by \Cref{def:geometric_hashing}, the total space is enough to hold all tuples}

        \State sort the tuples with respect to $u$ (using \Cref{lemma:sorting})

        \State let $\mathcal{T}_u = \{(\cdot, u)\}$,
        append $f(P_u)$ to all tuples in $\mathcal{T}_u$
\label{line:fPu}

        \Comment{$f(P_u)$ is already evaluated and stored, as in line~\ref{line:inverse}}
\State sort the tuples with respect to $p$, and evaluate $f( A_P(p, r) )$ for each $p$, where
        \[
            A_P(p, r) := \bigcup_{u \in \varphi(B(p, r))} P_u 
         \]
         \label{line:fAp}
\end{algorithmic}
\end{algorithm}

\paragraph{Round Complexity and Total Space.}
The round complexity is dominated by the sorting, broadcast and converge-cast procedures,
which all take $O(\log_{s}n)$ rounds to finish and are invoked $O(1)$ times in total.
Therefore, the algorithm runs in $O(\log_{s}n)$ rounds.
The total space is asymptotically dominated by $\poly(d \log n)$ times the total number of tuples, which is $O(\Lambda)\cdot n$ by Definition~\ref{def:geometric_hashing}.

\paragraph{Correctness.}
Observe that the algorithm is deterministic, and hence there is no failure probability.
It remains to show that $A_P(p, r) $ satisfies that $B_P(p, r) \subseteq A_P(p, r) \subseteq B_P(p, 3\Gamma r)$.
Recall that $P_u = \varphi^{-1}(u)\cap P$ as defined in line~\ref{line:inverse},
and that $A_P(p, r)  = \bigcup_{u \in \varphi(B(p, r))} P_u$ as in line~\ref{line:fAp}.
Hence, we have
\[
   A_P(p, r) = P \cap \varphi^{-1}(\varphi(B(p, r))).
\]
Therefore, the first inequality is straightforward as $B_P(p, r)\subseteq \varphi^{-1}(\varphi(B(p, r)))\cap P=A_P(p, r)$ for any mapping $\varphi$.   

To prove the second inequality, fix some $p \in P$.
For a point $q\in A_P(p, r)$, by definition there is a $u_{q}\in \varphi(B_P(p, r))$ such that $q\in \varphi^{-1}(u_{q})$.
      Then by \Cref{def:geometric_hashing}, we have $\diam(\varphi^{-1}(u_{q}))\leq \ell= 2\Gamma r$
      which implies that any point $x\in \varphi^{-1}(u_{q})$ satisfies $\dist(x, q)\leq 2\Gamma r$. 
      Now, pick a point $x \in B_P(p, r)$ such that $\varphi(x) = u_q$; such a point exists as $u_{q}\in \varphi(B_P(p, r))$.
      Then by definition, $x \in B_P(p, r) \cap \varphi^{-1}(u_q)$ and $\dist(p, x)\leq r$. 
      Hence, by triangle inequality,
      we have that $\dist(p, q)\leq \dist(p, x) + \dist(x, q)\leq r + 2\Gamma r 
      \leq 3 \Gamma r$,
      which implies that $A_P(p, r)\subseteq B_P(p, 3\Gamma r)$. 
      This finishes the proof.
   \end{proof}

\subsection{Application to Nearest Neighbor Search}\label{sec:nearest-neighbor-search}
\label{sec:nns}
Given a set $X \subseteq \mathbb{R}^d$ of \emph{terminals} and a set $P \subseteq \mathbb{R}^d$ of data points,
the $\rho$-approximate nearest neighbor search problem asks to find for every $p \in P$ a terminal $x \in X$,
such that $\dist(p, x) \leq \rho \cdot \dist(p, X)$.
This process is useful in clustering and facility location problems,
since one can find an assignment of every data point to its approximately  nearest center/facility.
We show how to solve this problem using \Cref{thm:mpc},
provided the knowledge of the \emph{aspect ratio} $\Delta$ of $X\cup P$.

Pick an arbitrary point $x\in X\cup P$,
compute in $O(\log_s n)$ rounds 
the maximum distance $M := \max_{y\in X\cup P} \dist(x,y)$
from $x$ to every other point (via broadcast and converge-cast). 
Since $M$ is a $2$-approximation to $\diam(X \cup P)$,
we conclude that for every $x \neq y \in X \cup P$,
$M / \Delta \leq \dist(x, y) \leq 2M$.
Rescale the instance by dividing $M / \Delta$, then the distances are between $1$ and $O(\Delta)$.
Then, let $Z := \{ 2^i : 1 \leq 2^i \leq O(\Delta) \} $.
We apply \Cref{thm:mpc} in parallel for $r \in Z$ and $f$ such that $f(Y)$ for $Y \subseteq X$ finds the terminal with the smallest ID in $Y$ (where the ID of a point can be defined arbitrarily as long as it is consistent),
and $f$ returns $\bot$ if $Y = \emptyset$. This $f$ is clearly composable.
After we obtain the result of \Cref{thm:mpc}, i.e., $f(A_X(p, r))$ for $p \in P$ and $r \in Z$,
we find in parallel for each $p \in P$ the smallest $r \in Z$ such that $f(A_X(p, r)) \neq \bot$.
This way, we explicitly get for each point $p$ an approximately nearest facility in $X$.
This algorithm has approximation factor $\rho = O(\Gamma)$,
using total space by an $O(\log \Delta)$-factor larger than that of \Cref{thm:mpc},
while the round complexity remains $O(\log_{s}n)$.

We remark that techniques based on locality sensitive hashing (LSH)
can also be applied in MPC to solve the approximate nearest neighbor problem~\cite{DBLP:conf/icml/BhaskaraW18, CMZ22}.
LSH in fact achieves a slightly better tradeoff,
namely, an $O(c)$-approximation using total space $n^{1/c^2}\cdot \Ot(n)$,
while our approach requires total space $n^{1/c}\cdot \Ot(n)$,
by plugging in \Cref{lemma:hash_bound}
and assuming the dimension reduction in \Cref{remark:jl} is performed. 
Alternatively, if $d \leq O(\log n)$, one can obtain the same $n^{1/ c} \cdot \Ot(n)$ bound without applying the randomized dimension reduction in \Cref{remark:jl},
which leads to a deterministic algorithm,
whereas approaches based on LSH are inherently randomized.

 \section{Power-$z$ Facility Location}

In this section we design algorithms for Facility Location
and prove the following generalization of \Cref{thm:ufl_intro}.
Recall that our algorithms compute a solution
(e.g., a set of facilities) 
and not just its value. 

\begin{theorem}
\label{thm:ufl}
There is a randomized fully-scalable MPC algorithm
for approximating \pzFL, where $z\ge 1$ is a parameter,
with the following guarantees.
Suppose that
\begin{itemize} \item the input is an opening cost $\wopen>0$ and 
	a multiset $P$ of $n$ points in $\RR^d$ distributed across machines with local memory $s \geq \poly(d \log n)$; and
\item the algorithm has access to the same $(\Gamma, \Lambda)$-hash $\varphi:\RR^d\to\RR^d$ in every machine.
\end{itemize}
The algorithm uses $O(\log_{s}n)$ rounds and $O(\Lambda\poly( d)) \cdot \tilde{O}(n)$ total space,
and succeeds with probability at least $1-1/\poly(n)$
to compute a $\Gamma^{O(z)}$-approximate solution.
\end{theorem}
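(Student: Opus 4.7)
The plan is to implement the Mettu--Plaxton (MP) style algorithm sketched in \Cref{sec:techniques} as a fully-scalable MPC algorithm, extended from $z=1$ to arbitrary $z\ge 1$ via the generalized triangle inequality (\Cref{lemma:tri_ine_z}). The algorithm has three phases: (a) compute for every $p\in P$ an $O(\Gamma)$-approximation $\hat r_p$ of its MP radius $r_p$; (b) choose the facility set $F$ using rules (P1) and (P2) in parallel; (c) assign each data point to an approximately nearest facility in $F$. Throughout, the only communication primitive I rely on is the geometric aggregation of \Cref{thm:mpc} (plus sorting, broadcast, and converge-cast used inside it).

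For phase (a), a standard characterization of $r_p$ says that $r$ is a constant-factor approximation of $r_p$ iff $|B_P(p,r)| \approx \wopen/r^z$, so it suffices to know $|B_P(p,r)|$ for a geometric sequence of $O(\log n)$ radii covering the aspect ratio of $P$. I would invoke \Cref{thm:mpc} in parallel for each such $r$ with the composable function $f(S)=|S|$; because \Cref{thm:mpc} returns counts of sandwich sets $A_P(p,r)$ with $B_P(p,r)\subseteq A_P(p,r)\subseteq B_P(p,3\Gamma r)$, the resulting radius estimate is off by at most an $O(\Gamma)$ factor, which will be folded into the final approximation. For phase (b), rule (P1) is executed with zero communication: the machine holding $p$ opens $p$ independently with probability $\Theta(\hat r_p^z/\wopen)$. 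Rule (P2) is executed by invoking \Cref{thm:mpc} once more with radius $\hat r_p$ and the composable function $f(S)=\min_{x\in S} h(x)$, where $h(x)\in[0,1]$ is a uniformly random label attached to $x$; point $p$ is opened iff the returned minimum equals $h(p)$. Phase (c) uses the nearest-neighbor subroutine of \Cref{sec:nearest-neighbor-search} with terminals $X=F$, giving an $O(\Gamma)$-approximate assignment for every $p\in P$.

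For the analysis, the opening cost is the easy part: the expected number of facilities opened by (P1) is $\sum_p \Theta(\hat r_p^z/\wopen)$, and by the basic MP identity $\sum_p r_p^z=\Theta(\OPTfl_z)$ this contributes $O(\Gamma^{O(z)})\cdot\OPTfl_z$; for (P2), the same MP characterization gives $|B_P(p,\hat r_p)|=\Omega(\wopen/\hat r_p^z)$, so $p$ has the minimum label with probability $O(\hat r_p^z/\wopen)$ and the bound follows identically. For the connection cost of a fixed $p$, I build the assignment sequence $(p=x_0,x_1,x_2,\ldots)$ where $x_{i+1}$ is the minimum-label point in the (sandwiched) ball around $x_i$, and prove by induction on $i$ that conditioned on reaching $x_i$, either (i) the sequence terminates at $x_i$ with constant probability, or (ii) $\hat r_{x_{i+1}}\le \hat r_{x_i}/2$. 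The argument partitions the ball around $x_i$ into ``old'' and ``new'' points relative to $\cup_{j<i} B_P(x_j,\hat r_{x_j})$, and combines the independent events that (P1) opens no new point (probability $\exp(-\Omega(t))$) and that the minimum label falls on a new point (probability $t$); their product is bounded by a constant uniformly in the new-fraction $t$. Telescoping gives an expected total connection distance of $O(\Gamma)\hat r_p$, which by \Cref{lemma:tri_ine_z} yields connection cost $O(\Gamma^{O(z)})\hat r_p^z$, and summing over $p$ produces the claimed $\Gamma^{O(z)}$ approximation.

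The main obstacle, I expect, is to rigorously reconcile the fact that rules (P1) and (P2) are executed on the \emph{sandwiched} sets $A_P(\cdot,\cdot)$ rather than on exact balls, because this distortion couples with the sequence argument in two places: the ``old/new'' partition is with respect to $A$-sets that depend on the (deterministic) hash $\varphi$, and the termination-probability bound must still hold after the radius blows up by $3\Gamma$ at every step. I plan to absorb this by choosing $\hat r_p$ slightly conservatively, by working with the lower ball $B_P(p,\hat r_p)$ to lower-bound the ``gain'' probability in (P2), and by charging the $3\Gamma$ distortion once per level in the assignment sequence; since the $r$ values drop geometrically, the total damage remains a $\Gamma^{O(z)}$ factor. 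The remaining bookkeeping is routine given \Cref{thm:mpc}: round complexity is $O(\log_s n)$ because each of the $O(\log n)$ parallel invocations of \Cref{thm:mpc} and the final nearest-neighbor search contribute $O(\log_s n)$ rounds, and the total space is $O(\Lambda\cdot\poly(d))\cdot\tilde O(n)$ as claimed.
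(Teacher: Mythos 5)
Your proposal follows essentially the same three-phase plan as the paper: estimate $\hat r_p$ via geometric aggregation and the $|B_P(p,r)|\approx\wopen/r^z$ characterization (the paper's \Cref{lemma:rp_offline}), open facilities via the two parallel rules (P1)/(P2) (\Cref{alg:offline_choose_fac}), and assign via approximate nearest-neighbor search; the opening-cost argument and the old/new split with the $\exp(-\Omega(t))\cdot t$ product bound for the reassignment sequence (\Cref{lemma:crit_prob_cond}) are likewise the same. The one place where your sketch hand-waves past the hardest detail is the step ``telescoping gives an expected total connection distance of $O(\Gamma)\hat r_p$'': along the sequence $\hat r$ is not monotone, since an ``extend'' step (probability $\le 1/\tau$) can \emph{increase} it by an $\eta=O(\alpha\beta)$ factor via \Cref{lemma:rp_ub}, so a naive telescoping over halving steps does not close on its own. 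The paper formalizes this with the critical-subsequence device (\Cref{def:crit_seq}) and a recursive bound $\ell(i,r)\le\ell(i,r/2)+2^{z-1}[\ell(i+1,\eta r)+\eta^z r^z]/\tau$, proved by induction on $(n-i,r)$ lexicographically and closed by choosing $\tau\ge 2^{2z}\eta^z$; your plan to ``charge the $3\Gamma$ distortion once per level'' would need to be fleshed out into exactly this kind of argument, but the underlying idea you describe (small extension probability compensating for the radius blow-up) is the right one.
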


At a high level, our algorithm for \Cref{thm:ufl} is based on the Mettu-Plaxton algorithm~\cite{DBLP:journals/siamcomp/MettuP03}, particularly a variant in~\cite{DBLP:journals/algorithmica/GehweilerLS14} modified to handle the power-$z$ case.
We start with defining a key notion of $r_p$ value (per data point $p \in P$),
which was first introduced in~\cite{DBLP:journals/siamcomp/MettuP03}
and later generalized to the power-$z$ case in~\cite{DBLP:journals/algorithmica/GehweilerLS14}.

\begin{definition}[\cite{DBLP:journals/siamcomp/MettuP03,DBLP:journals/algorithmica/GehweilerLS14}] \label{def:r_p}
   For every $p \in P$, let $r_p > 0$ be such that
    \begin{equation} \label{eq:rp}
        \sum_{x \in B_P(p, r_p)} \big( r_p^z - \dist^z(p, x) \big) = \wopen.
    \end{equation}
\end{definition}

It is easy to see that $r_{p}$ is well-defined
(because the LHS is continuous and increasing with respect to $r_p\ge0$),
and that
\begin{equation}
   \label{eqn:rp_range}
   \frac{\wopen}{|P|}\leq r_{p}^z \leq \wopen. 
\end{equation}
Next, we state several important properties of the $r_p$ values,
particularly that the sum of $r_p^z$ is a good approximation for $\OPTfl_z$ (\Cref{fact:rp}),
and that these $r_p$ values are ``smooth'',
i.e., nearby points have comparable values (\Cref{lemma:rp_ub});
for $z=1$ this is actually a Lipschitz condition.

\begin{fact}[Lemmas 1 and 2 in \cite{DBLP:journals/algorithmica/GehweilerLS14}]
\label{fact:rp}
The following holds:
\begin{itemize}
\item For every $p \in P$,
  $|B_P(p, r_p)|\ge \wopen / r_p^z$.
\item $2^{-O(z)} \OPTfl_z \leq \sum_{p \in P}{r_p^z} \leq 2^{O(z)}  \OPTfl_z$.
\end{itemize}
\end{fact}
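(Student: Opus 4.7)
The plan is to prove the two bullets separately, following the Mettu-Plaxton strategy adapted to general $z \geq 1$ via \Cref{lemma:tri_ine_z}. The first bullet is immediate from \eqref{eq:rp}: since each summand satisfies $r_p^z - \dist^z(p,x) \leq r_p^z$, we get $\wopen \leq |B_P(p, r_p)| \cdot r_p^z$, hence $|B_P(p, r_p)| \geq \wopen/r_p^z$.

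For the upper bound $\sum_{p} r_p^z \leq 2^{O(z)}\, \OPTfl_z$, I would fix an optimal solution $F^*$, partition $P$ into clusters $\{C_f\}_{f \in F^*}$ by nearest facility, and set $d_p := \dist(p, F^*)$, so that $\OPTfl_z = |F^*|\wopen + \sum_p d_p^z$. The core claim is the cluster-wise inequality
\[
  \sum_{p \in C_f} r_p^z \;\leq\; 2^{O(z)} \Bigl(\wopen + \sum_{q \in C_f} d_q^z\Bigr),
\]
which implies the bound after summing over $f \in F^*$. To prove the claim, for each $p \in C_f$ I would exhibit a witness radius $\rho_p$ for which the LHS of \eqref{eq:rp} evaluated at $\rho_p$ is already $\geq \wopen$, thereby forcing $r_p \leq \rho_p$. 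Using \Cref{lemma:tri_ine_z}, any $q \in C_f$ with $d_q^z$ below a suitable threshold $t^z$ lies in $B_P(p, \rho_p)$ and contributes $\rho_p^z - \dist^z(p, q) \geq \rho_p^z / 2$ to the sum, because $\dist^z(p, q) \leq 2^{z-1}(d_p^z + d_q^z)$. Concretely, I would set $\rho_p^z := 2^{O(z)} (d_p^z + t^z)$, with $t$ chosen as a weighted percentile of $\{d_q\}_{q \in C_f}$ so that the number of qualifying $q$ times $\rho_p^z$ dominates $\wopen$; summing $r_p^z \leq \rho_p^z$ over $p \in C_f$ then yields the claim after rearrangement.

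For the lower bound $\OPTfl_z \leq 2^{O(z)}\sum_p r_p^z$, I would exhibit a feasible \pzFL solution of this cost by a Mettu-Plaxton-style greedy: sort $P$ by $r_p$ ascending and open $p$ iff no previously opened facility lies within distance $2 r_p$ of $p$; let $F$ denote the opened set. Every $p$ then has some $f \in F$ with $\dist(p, f) \leq 2 r_p$, giving connection cost at most $2^z \sum_p r_p^z$. To bound the opening cost $|F|\cdot\wopen$, I would charge the $\wopen$ paid at each $f \in F$ to the at least $\wopen / r_f^z$ points of $B_P(f, r_f)$ guaranteed by the first bullet, spending $O(r_f^z)$ per such point. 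The main obstacle is to show that these charging regions are essentially disjoint across distinct opened facilities: if $f, f' \in F$ are both opened in that order, then $\dist(f, f') > 2 r_{f'} \geq 2 r_f$, which via \Cref{lemma:tri_ine_z} should imply that each $p$ is charged by only $O(1)$ opened facilities up to a $2^{O(z)}$ slack. This step requires the smoothness of the $r_p$ values (that $r_p = \Theta(r_f)$ for $p \in B_P(f, r_f)$), which itself follows from a Lipschitz-type estimate derived from the defining equation \eqref{eq:rp} combined with \Cref{lemma:tri_ine_z}; this is the most delicate part of the argument.
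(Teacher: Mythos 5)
The paper does not prove this Fact itself; it imports it from \cite{DBLP:journals/algorithmica/GehweilerLS14}, so there is no in-paper argument to compare against, and the assessment below is of your argument on its own terms. Your first bullet is correct and immediate. The plan for the upper bound $\sum_p r_p^z \leq 2^{O(z)}\OPTfl_z$ is workable, but the witness radius $\rho_p$ must also absorb an additive $\Theta(\wopen/|C_f|)$ term: with $t$ a median of $\{d_q\}_{q\in C_f}$ you do get $\Omega(|C_f|)$ qualifying points, but when $d_p$ and $t$ are both small the LHS of \eqref{eq:rp} evaluated at $\rho_p^z = 2^{O(z)}(d_p^z+t^z)$ can fall short of $\wopen$, so the witness is not valid. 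Taking, say, $\rho_p^z := 2^z(d_p^z + t^z) + 4\wopen/|C_f|$ fixes this, and then $\sum_{p\in C_f}\rho_p^z \leq 2^{O(z)}\bigl(\wopen + \sum_{q\in C_f}d_q^z\bigr)$ follows because $|C_f|\,t^z \leq 2\sum_q d_q^z$.

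The lower bound, however, has a genuine gap. The smoothness you invoke, that $r_p = \Theta(r_f)$ for $p\in B_P(f,r_f)$, is \emph{false}: \Cref{lemma:rp_ub} gives only the one-sided bound $r_p \leq 2^{(z-1)/z}(r_f+\dist(f,p)) = O(r_f)$, and no reverse bound holds. Concretely, with $z=1$ and $\wopen=1$, let $P$ consist of $f$ together with $M$ copies of a point $q$ at distance $1$ from $f$; then $r_f=1$ and $q\in B_P(f,r_f)$, yet $r_q=1/M$. So charging each of the $\geq \wopen/r_f^z$ points of $B_P(f,r_f)$ at rate $r_f^z$ cannot be compared pointwise to $\sum_p r_p^z$. (Your worry about disjointness is a non-issue: the greedy rule forces $\dist(f,f') > 2r_{f'} \geq r_f + r_{f'}$, so the balls $B(f,r_f)$, $f\in F$, are \emph{exactly} pairwise disjoint.) What must be shown is the aggregate bound $\sum_{x\in B_P(f,r_f)} r_x^z \geq \wopen/2^{O(z)}$. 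For $z=1$ this is clean: $r_f\leq r_x+\dist(f,x)$ gives $r_f-\dist(f,x)\leq r_x$, hence $\wopen = \sum_{x\in B_P(f,r_f)}(r_f-\dist(f,x))\leq\sum_{x\in B_P(f,r_f)}r_x$, and disjointness yields $|F|\wopen\leq\sum_p r_p$. But for $z>1$ the per-term inequality $r_f^z-\dist^z(f,x)\leq 2^{O(z)}r_x^z$ fails (one can have $r_x\to 0$ while $\dist(f,x)$ and $r_f$ are both $\approx 1$ and still satisfy \Cref{lemma:rp_ub}), so the step you flag as ``most delicate'' requires a genuinely different, more global argument that your sketch does not provide.
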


\begin{claim}
\label{lemma:rp_ub}
For all $p, q \in P$ and $z \geq 1$, it holds that
$r_q \leq  2^{(z - 1) / z} (r_p + \dist(p, q))$.
\end{claim}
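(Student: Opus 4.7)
Let $\delta \eqdef \dist(p,q)$ and $R \eqdef 2^{(z-1)/z}(r_p + \delta)$; I aim to show $r_q \le R$. The plan is to use the equation in \Cref{def:r_p} as a monotonicity lever: since the map $\rho \mapsto \sum_{x \in B_P(q,\rho)}(\rho^z - \dist^z(q,x))$ is continuous and non-decreasing in $\rho \ge 0$ and equals $\wopen$ exactly at $\rho = r_q$, it suffices to exhibit the witness $R$ for which the corresponding sum is at least $\wopen$; this would give $r_q \le R$.

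First, since $2^{(z-1)/z} \ge 1$ for $z \ge 1$, we have $R \ge r_p + \delta$, so the ordinary triangle inequality gives $B_P(p, r_p) \subseteq B_P(q, R)$. Thus it is enough to prove the \emph{term-by-term} inequality
\[
R^z - \dist^z(q, x) \;\ge\; r_p^z - \dist^z(p, x) \qquad \text{for every } x \in B_P(p, r_p),
\]
because summing over $B_P(p, r_p)$ yields exactly $\wopen$ on the right-hand side by \eqref{eq:rp}, while the left-hand side only grows when summed over the larger set $B_P(q, R)$.

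For the term-by-term step I would invoke \Cref{lemma:tri_ine_z} to upper-bound the perturbation,
\[
\dist^z(q, x) \;\le\; 2^{z-1}\bigl(\delta^z + \dist^z(p,x)\bigr),
\]
and pair it with the elementary inequality $(r_p + \delta)^z \ge r_p^z + \delta^z$, which holds for $z \ge 1$ by convexity of $t \mapsto t^z$ (equivalently, $(1+t)^z \ge 1 + t^z$ for $t \ge 0$). After multiplying the latter by $2^{z-1}$ to form $R^z$, the two combine to yield
\[
R^z - \dist^z(q, x) \;\ge\; 2^{z-1}\bigl(r_p^z - \dist^z(p, x)\bigr) \;\ge\; r_p^z - \dist^z(p, x),
\]
using $r_p^z \ge \dist^z(p, x)$ (since $x \in B_P(p, r_p)$) and $2^{z-1} \ge 1$. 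The one subtlety I expect to be the main design choice is matching the $2^{z-1}\delta^z$ ``loss'' from the generalized triangle inequality against the $2^{z-1}\delta^z$ produced by $R^z \ge 2^{z-1}(r_p^z + \delta^z)$: this exact cancellation is what forces the constant $2^{(z-1)/z}$ in the statement. Everything else in the argument is routine and non-probabilistic.
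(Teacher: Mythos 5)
Your proof is correct and follows essentially the same route as the paper's: both rely on monotonicity of the defining sum in \eqref{eq:rp}, the containment $B_P(p,r_p)\subseteq B_P(q,R)$, the generalized triangle inequality (\Cref{lemma:tri_ine_z}), and the superadditivity $(r_p+\delta)^z\ge r_p^z+\delta^z$, arranged so that the two $2^{z-1}\delta^z$ contributions cancel. Your term-by-term framing is a cleaner way to organize the paper's chain of summed inequalities, but it is not a genuinely different argument.
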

\begin{proof}
By the triangle inequality, a ball around $p$ of radius $r_p$
is contained in a ball around $q$ of radius $\dist(q,p)+r_p$, 
i.e., $B(p, r_p) \subseteq B(q, r_p+\dist(q,p))$.
Thus, if we consider the value
$2^{(z - 1) / z}(r_p + \dist(p, q)) \geq r_p + \dist(p, q)$ as a ``candidate'' for $r_q$
and plug it into the LHS of~\eqref{eq:rp}, we see that
\begin{align*}
  \sum_{x \in B_P(q, 2^{(z - 1) / z}(r_p + \dist(q,p)))} & \left[ 2^{z - 1}( r_p + \dist(q,p))^z - \dist^z(q,x) \right] \\
  &\geq \sum_{x \in B_P(q, r_p + \dist(q,p))} \left[ 2^{z - 1} r_p^z + 2^{z - 1}\dist^z(q,p) - \dist^z(q,x) \right] \\
  &\geq \sum_{x \in B_P(q, r_p + \dist(q,p))} \left[ 2^{z - 1} r_p^z - 2^{z - 1} \dist^z(p, x) \right] \\
&\geq 2^{z - 1} \sum_{x \in B_P(p, r_p)} \left[ r_p^z - \dist^z(p,x) \right]
  = 2^{z - 1}\wopen \geq \wopen. 
\end{align*}
Since the LHS of~\eqref{eq:rp} is increasing,
we conclude that $r_q \leq 2^{(z - 1) / z} (r_p+\dist(q,p))$. 
\end{proof}

\paragraph{Proof Plan for \Cref{thm:ufl}.}
Our algorithm first estimates $r_p$ for each $p \in P$ and
then finds a set of facilities whose expected cost
is bounded by $O(1) \sum_p r^z_p$.
The MPC implementations of both steps rely on the geometric primitive
provided by \Cref{thm:mpc}.
We first describe these two steps as ordinary (not MPC) algorithms, 
in \Cref{sec:estimate_rp,sec:finding_fac_offline};
importantly, these algorithms require access to balls of the form $B(p,r)$,
but they work well with the ``approximation'' provided by \Cref{thm:mpc}.
We then convert these algorithms to MPC algorithms in \Cref{sec:mpc_ufl},
which completes the proof of \Cref{thm:ufl}.

\subsection{Estimating $r_p$ Values}
\label{sec:estimate_rp}

The next lemma provides a straightforward way to estimate $r_p$ (for every $p\in P$),
given the cardinalities of subsets $A_P(p, r) \subseteq P$
that ``approximate'' $B_P(p, r)$ in the sense of \Cref{thm:mpc}.
The key property used in the proof is that the LHS of \eqref{eq:rp} is increasing with respect to $r_p$.
Notice that our fast running time relies on
using the approximate sets $A_P(p, r)$ instead of the exact sets $B_P(p, r)$,
which might be harder to compute (require larger running time) in the MPC model.

\begin{lemma}[Approximating $r_p$ value]
\label{lemma:rp_offline}
Let $\beta > 1$ and suppose that for every $p \in P$ and $r \geq 0$
one has access to a subset $A_P^\beta(p, r)\subseteq P$ satisfying
$B_P(p, r) \subseteq A_P^\beta(p, r) \subseteq B_P(p, \beta r)$.
Given $p\in P$, let $\hat{r}>0$ be the smallest integral power of $\beta$ such that $|A_P^\beta(p, \hat{r})| \geq \wopen / (2 \beta^z \hat{r}^z)$.
Then
\[
  r_p / (3\beta) \leq \hat{r} \leq \beta r_p.
\]
\end{lemma}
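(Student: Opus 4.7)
}

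The plan is to prove the two bounds separately, relying on the monotonicity of the function
$g(r) \eqdef \sum_{x\in B_P(p,r)}\bigl(r^z - \dist^z(p,x)\bigr)$, which is nondecreasing in $r$ and satisfies $g(r_p)=\wopen$ by definition of $r_p$, together with Fact~\ref{fact:rp} and the sandwich $B_P(p,r)\subseteq A_P^\beta(p,r)\subseteq B_P(p,\beta r)$.

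First I would verify the upper bound $\hat{r}\le \beta r_p$ by exhibiting a feasible candidate. Let $j$ be the smallest integer with $\beta^j \ge r_p$, so $\beta^j \le \beta r_p$. Using $A_P^\beta(p,\beta^j)\supseteq B_P(p,\beta^j)\supseteq B_P(p,r_p)$ and the cardinality bound $|B_P(p,r_p)|\ge \wopen/r_p^z$ from Fact~\ref{fact:rp}, one gets
\[
    |A_P^\beta(p,\beta^j)| \;\ge\; \frac{\wopen}{r_p^z} \;\ge\; \frac{\wopen}{(\beta^j)^z} \;\ge\; \frac{\wopen}{2\beta^z(\beta^j)^z},
\]
so $\beta^j$ meets the feasibility condition in the lemma. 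By minimality, $\hat{r}\le \beta^j\le \beta r_p$.

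For the lower bound $\hat{r}\ge r_p/(3\beta)$, I would set $r' \eqdef \beta\hat{r}$ and use the sandwich inclusion in the other direction: $A_P^\beta(p,\hat{r})\subseteq B_P(p,r')$. The defining inequality of $\hat{r}$ then yields
\[
    |B_P(p,r')| \;\ge\; |A_P^\beta(p,\hat{r})| \;\ge\; \frac{\wopen}{2\beta^z\hat{r}^z} \;=\; \frac{\wopen}{2r'^z}.
\]
Now I would plug the radius $3r'$ into $g$: every point of $B_P(p,r')$ is included in the sum at radius $3r'$, and each summand is at least $(3r')^z-r'^z=(3^z-1)r'^z\ge 2r'^z$ for $z\ge 1$. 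Therefore
\[
    g(3r') \;\ge\; (3^z-1)\,r'^z\cdot |B_P(p,r')| \;\ge\; (3^z-1)\,r'^z\cdot \frac{\wopen}{2r'^z} \;\ge\; \wopen \;=\; g(r_p).
\]
By monotonicity of $g$, this forces $3r'\ge r_p$, i.e., $\hat{r}\ge r_p/(3\beta)$, as desired.

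The only nonroutine step is choosing the ``correct'' slack factor: one has to insert the $\beta$ in $r'=\beta\hat{r}$ to absorb the outer radius of the sandwich, and one has to blow up the radius by the factor $3$ (rather than $2$) when invoking $g$, so that $(3^z-1)/2\ge 1$ holds uniformly in $z\ge 1$. Once these two choices are made, the rest is bookkeeping with the monotonicity of $g$ and Fact~\ref{fact:rp}; I do not foresee any further obstacle.
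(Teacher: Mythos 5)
Your proof is correct and follows essentially the same approach as the paper's: both arguments exploit the sandwich $B_P(p,r)\subseteq A_P^\beta(p,r)\subseteq B_P(p,\beta r)$ together with the monotonicity of the defining quantity for $r_p$. The only cosmetic differences are that you prove the upper bound directly by exhibiting the witness $\beta^j$ via Fact~\ref{fact:rp} (whereas the paper argues by contradiction from minimality and an inline chain of inequalities), and you package the lower-bound computation through the auxiliary function $g$ and its strict monotonicity rather than as a contrapositive chain; both substitutions are sound and yield the same constants.
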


\begin{proof}
If $r \geq r_p$, then
\begin{equation}
   \label{eqn:r_geq_rp}
   r^z |A_P^\beta(p, r)| 
   = \sum_{x \in A_P^\beta(p, r)} r^z
   \geq \sum_{x \in A_P^\beta(p, r)}   \big( r_p^z - \dist^z(p, x) \big)
   \geq \sum_{x \in B_P(p, r_p)} \big( r_p^z - \dist^z(p, x) \big)
   = \wopen.
\end{equation}
And if $r < r_p / (3\beta)$, then
\begin{align}
   2\beta^z r^z |A_P^\beta(p, r)| 
   &\leq \sum_{x \in A_P^\beta(p, r)} (3\beta r)^z - (\beta r)^z
   \leq \sum_{x \in A_P^\beta(p, r)} (3 \beta r)^z - \dist^z(p, x) \nonumber \\
   &< \sum_{x \in A_P^\beta(p, r)} r_p^z - \dist^z(p, x)
   \leq \sum_{x \in B_P(p, \beta r)} r_p^z - \dist^z(p, x) \nonumber \\
   &< \sum_{x \in B_P(p, r_p)} r_p^z - \dist^z(p, x) = \wopen. \label{eqn:r_l_rp}
\end{align}
The lower bound $\hat{r} \geq r_p / (3\beta)$
follows from the contrapositive of \eqref{eqn:r_l_rp}. 
For the upper bound, by definition of $\hat{r}$ we know that
$|A_P^\beta(p, \hat{r} / \beta)| < \wopen / (2 \beta^z (\hat{r}/\beta)^z) < \wopen / \hat{r}^z$.
Now if we assume that $\hat{r} > \beta r_p$, 
then we get from \eqref{eqn:r_geq_rp} that 
$|A_P^\beta(p, \hat{r} / \beta)| \geq \wopen / (\hat{r}/\beta)^z > \wopen / \hat{r}^z$,
arriving at contradiction.
Therefore, $r_p / (3\beta) \leq \hat{r} \leq \beta r_p$.
\end{proof}

\subsection{Opening the Facilities}
\label{sec:finding_fac_offline}

We proceed to describe our algorithm for computing the facilities,
i.e., selecting points from $P$ to open as facilities,
and we rely henceforth on a few easily justified assumptions.
First, we assume that we already have for every $p \in P$ 
an estimate $\hat{r}_p\in [r_p,\alpha r_p]$ for fixed $\alpha \geq 1$;
this is justified by applying \Cref{lemma:rp_offline} to obtain these estimates,
eventually setting $\alpha \eqdef 3\beta^2$ (in \Cref{sec:mpc_ufl}). 
Second, by rounding these estimates to the next power of $2$,
we can assume that whenever $\hat{r}_p > \hat{r}_q$, 
then actually $\hat{r}_p \geq 2 \hat{r}_q$
(recall that we aim for $O(1)$-approximation and do not optimize the constant).
Third, we assume that for a parameter $\beta \geq 1$ and every $p\in P$,
one has access to a subset $A^\beta_P(p, \hat{r}_p) \subseteq P$ satisfying 
$ B_P(p, \hat{r}_p) \subseteq A^\beta_P(p, \hat{r}_p) \subseteq B_P(p, \beta \hat{r}_p)$;
this is justified as before, by applying \Cref{thm:mpc}.
We present our algorithm for computing the set of facilities $F\subseteq P$
in \Cref{alg:offline_choose_fac},
and then prove that it achieves $(\alpha \beta)^{O(z)}$-approximation in expectation.
More specifically, the opening cost
is bounded in \Cref{lemma:offline_fac_opening}, which is rather immediate, 
and the connection cost is bounded in \Cref{lemma:offline_fac_finding},
whose proof is more lengthy and involved.

\begin{algorithm}[H]
   \caption{Computing a facility set $F\subseteq P$, given $P$, $\hat{r}_p \in [r_p, \alpha r_p]$ and $A_P^\beta(p, \hat{r}_p)$ for every $p \in P$}
   \label{alg:offline_choose_fac}
    \begin{algorithmic}[1]

      \State let $F \gets \emptyset$, and 
      let $\tau  \gets  (\alpha \beta)^{\Theta(z)} $ 

      \State for each $p \in P$, pick a uniformly random \emph{label} $h(p) \in [0, 1]$ 
      \State for each $p \in P$: 
         \begin{itemize}
            \item[(P1)] add $p$ to $F$ with probability $\tau \cdot \hat{r}_p^z / \wopen$ (independently of $h$)
            \item[(P2)] add $p$ to $F$ if $p$ has the smallest label in $A^\beta_P(p, \hat{r}_p)$
         \end{itemize} \label{line:open}
\end{algorithmic}
\end{algorithm}

\begin{lemma}[Opening Cost]
   \label{lemma:offline_fac_opening}
   Let $F \subseteq P$ be the set returned by \Cref{alg:offline_choose_fac}.
   Then $\E[|F| \cdot \wopen] = \alpha^{O(z)} \tau \cdot \OPTfl_z$.
\end{lemma}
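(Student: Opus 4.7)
The plan is to split the contribution to $|F|$ by rule, using $|F| \le |F_1| + |F_2|$ where $F_1$ and $F_2$ are the (possibly overlapping) sets of points added by rules (P1) and (P2) respectively, and to bound each expectation separately by a constant (depending on $z$) factor times $\sum_{p\in P} r_p^z$. The final step will be to invoke \Cref{fact:rp} to pass from $\sum_p r_p^z$ to $\OPTfl_z$.

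First I would handle $F_1$, which is immediate from linearity of expectation: each $p$ is added to $F_1$ independently with probability $\tau\hat{r}_p^z/\wopen$, so
\[
  \E[|F_1|\cdot \wopen] = \sum_{p\in P} \tau\,\hat{r}_p^z
  \le \tau\cdot \alpha^z \sum_{p\in P} r_p^z,
\]
using $\hat{r}_p \le \alpha r_p$.

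Next I would handle $F_2$. The key observation is that because $\hat{r}_p \ge r_p$ and $A^\beta_P(p,\hat{r}_p) \supseteq B_P(p,\hat{r}_p)$, we have
\[
  |A^\beta_P(p,\hat{r}_p)| \ge |B_P(p,r_p)| \ge \wopen/r_p^z,
\]
where the last inequality is the first bullet of \Cref{fact:rp}. Since the labels $h(\cdot)$ are i.i.d.\ continuous, the probability that a fixed point $p$ has the smallest label within $A^\beta_P(p,\hat{r}_p)$ is $1/|A^\beta_P(p,\hat{r}_p)| \le r_p^z/\wopen$. Summing over $p$ gives
\[
  \E[|F_2|\cdot \wopen] \le \sum_{p\in P} r_p^z.
\]

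Finally, by the second bullet of \Cref{fact:rp}, $\sum_{p\in P} r_p^z \le 2^{O(z)}\OPTfl_z$, so combining the two bounds yields
\[
  \E[|F|\cdot\wopen] \le (\tau\alpha^z + 1)\cdot 2^{O(z)}\,\OPTfl_z
  = \alpha^{O(z)}\tau\cdot\OPTfl_z,
\]
where we use $\tau \ge 1$ to absorb the additive $1$. There is no real obstacle here; the only care needed is to observe that the two rules may select the same point, and to handle this by the trivial union bound $|F|\le|F_1|+|F_2|$ rather than inclusion–exclusion, since we only need an upper bound on the expected opening cost.
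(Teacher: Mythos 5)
Your proof is correct and follows essentially the same approach as the paper: decompose by rule, observe that (P1) contributes $\tau\hat r_p^z$ per point by independence, use $|A^\beta_P(p,\hat r_p)|\ge|B_P(p,r_p)|\ge\wopen/r_p^z$ to bound the (P2) probability by $r_p^z/\wopen$, and finish with \Cref{fact:rp}. The only cosmetic difference is that you explicitly name the sets $F_1,F_2$ and invoke a union bound, whereas the paper phrases the same argument point-by-point.
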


\begin{proof}
Each point $p \in P$ is opened only if it is selected in either (P1) or (P2) in line~\ref{line:open} of \Cref{alg:offline_choose_fac},
hence its expected opening cost is bounded by the sum of the two cases.
In case (P1), $p$ is selected with probability $\tau \cdot \hat{r}_p^z / \wopen$,
hence its expected opening cost is $\tau \cdot \hat{r}_p^z$.
In case (P2), $p$ is selected with probability 
$
  \frac{1}{ |A^{\beta}_P(p, \hat{r}_{p})| }
  \leq \frac{1}{ |B_P(p, \hat{r}_p)| }
  \leq \frac{1}{ |B_P(p, r_p)| }
  \leq \frac{r_p^z}{\wopen} 
$,
where the last inequality is by \Cref{fact:rp},
hence its expected opening cost is at most $r_p^z$. 
The expectation of the total opening cost is thus at most
$(\tau+1) \sum_{p} \hat{r}_p^z \leq  \tau \alpha^z 2^{O(z)}\OPTfl_z$,
where we used \Cref{fact:rp}.
\end{proof}

\begin{lemma}[Connection Cost]
   \label{lemma:offline_fac_finding}
   Let $F \subseteq P$ be the set returned by \Cref{alg:offline_choose_fac}.
   Then
   \[
      \E[\sum_{p \in P} \dist^z(p, F)] = 2^{O(z)}\alpha^z \beta^z \cdot \OPTfl_z.
   \]
\end{lemma}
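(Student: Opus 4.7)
The plan is to bound $\E[\dist^z(p,F)]$ separately for each $p\in P$, following the assignment-sequence strategy outlined in \Cref{sec:techniques}. Set $x_0\eqdef p$ and recursively define $x_{i+1}$ to be the point with smallest label $h(\cdot)$ in $A_P^\beta(x_i,\hat{r}_{x_i})$; stop at the first index $t$ for which $A_P^\beta(x_t,\hat{r}_{x_t})\cap F\neq\emptyset$. This certainly happens once $x_{t+1}=x_t$ (then rule~(P2) opens $x_t$), so $t$ is finite almost surely. Since each hop satisfies $\dist(x_i,x_{i+1})\le\beta\hat{r}_{x_i}$ and the terminating facility lies within distance $\beta\hat{r}_{x_t}$ of $x_t$, the triangle inequality gives $\dist(p,F)\le\beta\sum_{i=0}^{t}\hat{r}_{x_i}$, so it suffices to bound $\E\bigl[\bigl(\sum_i\hat{r}_{x_i}\bigr)^z\bigr]$.

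The structural claim driving the analysis, as in \Cref{sec:techniques}, is a dichotomy: conditioned on the partial history $(x_0,\ldots,x_i)$, either (i)~the sequence terminates at step $i$ with probability at least an absolute constant $c>0$, or (ii)~$\hat{r}_{x_{i+1}}\le\hat{r}_{x_i}/2$. Assuming the dichotomy, I partition the sequence into maximal \emph{stable} blocks of consecutive indices sharing the same rounded $\hat{r}$-value (the rounding assumption forces distinct $\hat{r}$-values to differ by a factor of at least $2$). Inside each block case~(i) applies at every step, so the block's length is stochastically dominated by a geometric random variable with parameter $c$, yielding $O(1)$-moment bounds. Between consecutive blocks, case~(ii) forces the $\hat{r}$-value to at least halve, so the block radii $R_0>R_1>\cdots$ satisfy $R_j\le 2^{-j}\hat{r}_p$. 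Writing $\sum_i\hat{r}_{x_i}=\sum_j L_jR_j$ where $L_j$ is the length of block~$j$, the geometric decay of $R_j$ combined with the $O(1)$ moment bounds on $L_j$ yields, by a routine power-mean/Minkowski argument, $\E\bigl[\bigl(\sum_i\hat{r}_{x_i}\bigr)^z\bigr]=2^{O(z)}\hat{r}_p^z$. Plugging in $\hat{r}_p\le\alpha r_p$, summing over $p\in P$, and invoking \Cref{fact:rp} then yields the claimed bound $2^{O(z)}\alpha^z\beta^z\,\OPTfl_z$.

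The main obstacle is proving the dichotomy, which I attack by contrapositive: suppose $\hat{r}_{x_{i+1}}>\hat{r}_{x_i}/2$, and show that the sequence terminates at $x_i$ with constant probability. Let $N\subseteq A_P^\beta(x_i,\hat{r}_{x_i})$ be the set of ``new'' points, i.e., those lying in no earlier $A_P^\beta(x_j,\hat{r}_{x_j})$, and set $t_i\eqdef|N|/|A_P^\beta(x_i,\hat{r}_{x_i})|\in[0,1]$. The key point is that, conditioned on the history $(x_0,\ldots,x_i)$, the labels on $N$ and the (P1) coins on $N$ are still jointly independent and free. Non-termination at step $i$ requires both
\begin{enumerate}[nosep]
   \item[(a)] (P1) opens no point of $N$: using $|A_P^\beta(x_i,\hat{r}_{x_i})|\ge|B_P(x_i,\hat{r}_{x_i})|\ge\wopen/\hat{r}_{x_i}^z$ by \Cref{fact:rp} (so $|N|\ge t_i\wopen/\hat{r}_{x_i}^z$), and that every $q\in N$ has $\hat{r}_q=\Theta_z(\hat{r}_{x_i})$ by \Cref{lemma:rp_ub} combined with the hypothesis $\hat{r}_{x_{i+1}}>\hat{r}_{x_i}/2$ and the diameter bound $\diam(A_P^\beta(x_i,\hat{r}_{x_i}))\le 2\beta\hat{r}_{x_i}$, the per-point opening probability is $\Theta(\tau\hat{r}_{x_i}^z/\wopen)$, whence $\Pr[(\mathrm{a})]\le \exp(-\Omega(\tau t_i))$;
   \item[(b)] the smallest label in $A_P^\beta(x_i,\hat{r}_{x_i})$ lies in $N$, which by symmetry of labels on $N$ has probability exactly $t_i$.
\end{enumerate}
Since (a) and (b) depend on disjoint coin sets they are independent, so $\Pr[\text{non-term}]\le t_i\exp(-\Omega(\tau t_i))$. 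Choosing $\tau=(\alpha\beta)^{\Theta(z)}$ large enough, this expression is bounded below $1$ by an absolute constant, establishing~(i) and closing the argument.
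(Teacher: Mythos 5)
Your bound on $\dist(p,F)$ via the assignment sequence, and the idea that rules (P1) and (P2) complement each other through the quantity $t_i\exp(-\Omega(\tau t_i))$, mirror the paper's overview. But the way you convert this into a bound on $\E[(\sum_i\hat{r}_{x_i})^z]$ has two genuine gaps, both tied to the same phenomenon: \emph{the sequence can move to a point with a larger $\hat{r}$-value}, and neither your block decomposition nor your bound on event~(a) handles this.

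\textbf{The block decomposition fails.} You partition the sequence into maximal blocks of constant $\hat{r}$-value and then assert ``the block radii $R_0>R_1>\cdots$ satisfy $R_j\le 2^{-j}\hat{r}_p$.'' This does not follow from your dichotomy. When case~(i) holds (termination probability $\ge c$) but the sequence nonetheless continues, nothing constrains $\hat{r}_{x_{i+1}}$: the minimum-label point in $A_P^\beta(x_i,\hat r_{x_i})$ may perfectly well have $\hat{r}$-value larger than $\hat{r}_{x_i}$. So a block can end with an \emph{increase} in radius, and the radii are not monotone. Bounding the expected length of the sequence (which is what your dichotomy gives) does not bound the weight $\sum_i\hat{r}_{x_i}$, because the per-step contribution can exceed $\hat{r}_p$ by an unbounded factor; indeed $\hat{r}_{x_i}$ is only bounded by the global quantity $\alpha\wopen^{1/z}$, which can be as large as $\Omega(n^{1/z}\hat{r}_p)$. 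The paper resolves this by a qualitatively different device: it extracts the \emph{critical subsequence} of indices where $\hat{r}$ does not decrease (Definition~4.7), shows the remaining steps form geometrically decaying runs (Lemma~4.8), and then proves a two-parameter recursion $\ell(i,r)\le\ell(i,r/2)+2^{z-1}[\ell(i+1,\eta r)+\eta^z r^z]/\tau$ whose structure encodes the trade-off between the unbounded ``stay'' probability (which only halves $r$) and the small ``extend'' probability $1/\tau$ (which may multiply $r$ by $\eta$). That recursion is what produces the $O(\hat{r}_p^z)$ bound; no simple block argument substitutes for it.

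\textbf{The bound on event~(a) is unjustified.} To get $\Pr[(\mathrm{a})]\le\exp(-\Omega(\tau t_i))$ you need every $q\in N$ to have $\hat{r}_q=\Theta_z(\hat{r}_{x_i})$, so that every new point is opened by (P1) with probability $\Theta(\tau\hat{r}_{x_i}^z/\wopen)$. But Claim~4.4 (\Cref{lemma:rp_ub}) gives only an \emph{upper} bound on $\hat{r}_q$ in terms of $\hat{r}_{x_i}$; there is no reverse Lipschitz bound, and invoking the hypothesis ``$\hat{r}_{x_{i+1}}>\hat{r}_{x_i}/2$'' says nothing about the other $q\in N$. A new point $q$ with $\hat{r}_q\ll\hat{r}_{x_i}$ has near-zero (P1) probability, so $\Pr[(\mathrm{a})]$ can be close to $1$ even when $t_i$ is large, destroying the $t_i\exp(-\Omega(\tau t_i))$ estimate. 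The paper sidesteps this by defining $\Anewgeq=\{q\in\Anew:\hat{r}_q\ge\hat{r}_{x_t}\}$ and bounding only $\Pr[\EPone]\le\exp(-|\Anewgeq|\tau\hat{r}_{x_t}^z/\wopen)$ and $\Pr[\EPtwo]\le|\Anewgeq|/|A_P^\beta(x_t,\hat{r}_{x_t})|$; that is, it bounds only the probability of the extend event, not total non-termination, and the restriction to $\Anewgeq$ is exactly what makes the per-point probability lower bound legitimate. To repair your argument you would need to make this same restriction, at which point you are forced back to bounding ``extend'' probability alone and back into a recursion of the paper's type.
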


\begin{proof}
It suffices to bound the expected connection cost of each $p\in P$ by
$\E[\dist^z(p, F)] = 2^{O(z)}\alpha^z \beta^z \cdot r_p^z$, 
because we have $\sum_{p} r_p^z \le 2^{O(z)}\OPTfl_z$ by \Cref{fact:rp}.
To this end, fix henceforth $p\in P$
and consider $\dist(p, F)$, which is not easy to analyze directly,
because it depends on the closest point to $p$ that is opened as a facility.

\paragraph{Bounding the Connection Cost via a Reassignment Process.}
We analyze the distance $\dist(p, F)$
by identifying a sequence of points $S = (x_0,\ldots,x_t)$
that starts at $x_0:=p$ and its length $t\ge0$ is random (not predetermined).
We will bound $\dist(p, F)$ using the ``weight'' of this sequence,
defined as the total distance traveled by following this sequence of points,
taking care of the power $z$ using the generalized triangle inequality,
which asserts that $(a + b)^z  \leq 2^{z - 1} (a^z + b^z)$ for all $a, b\geq 0$.

We construct this sequence using an auxiliary algorithm,
described formally in \Cref{alg:def_seq}.
The idea is to test whether the current point $p$ (or a nearby point) is open by (P1) or (P2),
and if not then greedily move to a nearby point;
repeating this step until reaching an open facility forms a sequence $S$,
which we view as a process that repeatedly reassigns the current point.
We emphasize that \Cref{alg:def_seq} and the sequence $S$
are used only in our analysis;
furthermore, they depend on the random coins of \Cref{alg:offline_choose_fac}
(without any additional coins), 
hence our probabilistic analysis below refers to the random choices
made in \Cref{alg:offline_choose_fac}.

When dealing with sequences, we shall use the following notation. 
For two sequences $S'$ and $S''$, let $S' \circ S''$ denote their concatenation,;
when $S''=(x)$ is a singleton sequence we write $S'\circ x$,
which is a shorthand for appending the point $x$ to the sequence $S'$.
In addition, let $S' \sqsubseteq S''$ to denote that $S'$ is a prefix of $S''$.
We shall use $S'$, $S''$ and $T$ to denote fixed sequences,
reserving $S$ for the random sequence constructed by \Cref{alg:def_seq};
for instance, $S' \sqsubseteq S$ denotes an event,
namely, that \Cref{alg:def_seq} initially follows the given sequence $S'$,
after which it may or may not proceed further.

\begin{algorithm}[H]
\caption{Finding an assignment sequence $S = (x_0 = p, \ldots, x_t)$ for a given $p \in P$}
\label{alg:def_seq}
\begin{algorithmic}[1]
	\Require{point $p \in P$ and for every $q \in P$, $A^\beta_P(q, \hat{r}_q)$ as in \Cref{alg:offline_choose_fac}}
	\State let $i \gets 0$, $x_0 \gets p$, $S \gets (x_0)$ 
	\While{(P1) selects no point from $A_P^\beta(x_i, \hat{r}_{x_i})$ and (P2) does not select $x_i$} \label{line:while}
	
\State let $x_{i+1}$ be the point from $A_P^\beta(x_i, \hat{r}_{x_i})$ with the smallest label \label{line:next_point}
	\State let $S \gets S \circ x_{i+1}$ and $i \gets i + 1$
	\EndWhile
\end{algorithmic}
\end{algorithm}

The next lemma shows that \Cref{alg:def_seq} is well-defined
and always outputs a sequence of length at most $n$. 
\begin{lemma}
\label{lemma:terminate}
If $x_i$ passes the test in line~\ref{line:while},
then line~\ref{line:next_point} must find $x_{i + 1}\in P$
with $h(x_{i + 1}) < h(x_i)$.
This implies that \Cref{alg:def_seq} terminates after at most $n$ iterations.
\end{lemma}

\begin{proof}
Since $x_i$ passes the test, 
it does \emph{not} have the smallest label in $A^\beta_P(x_i, \hat{r}_{x_i})$.
Then in line~\ref{line:next_point},
$x_{i + 1}$ is set to the point with the smallest label in $A^\beta_P(x_i, \hat{r}_{x_i})$,
which certainly contains $x_i$.
Thus, $x_{i + 1}$ exists and has a smaller label than $x_i$,
which implies that the number of iterations is at most $|P|=n$. 
\end{proof}

\begin{definition}
   \label{def:weight_local_seq}
   The \emph{weight} of a sequence $S' = (x_0,\ldots,x_t)$ is 
$w(S') := \sum_{i=0}^t  \hat{r}_{x_i}$.
   A single point $p$ is viewed as a singleton sequence,
   and its weight is $w(p) = \hat{r}_p$.
We call $S'$ \emph{local} if
   for all $i \geq 1$, $x_i \in A^\beta_P(x_{i-1}, \hat{r}_{x_{i-1}})$.
\end{definition}

We can now bound the connection cost via $w(S)$,
where $S = (x_0 = p, \ldots, x_t)$ is the sequence generated by \Cref{alg:def_seq}. 
First, observe that $S$ is local, and thus 
\begin{equation}
\label{eq:local_seq}
  \forall i \geq 1, \qquad
  \dist(x_i, x_{i - 1}) \leq \beta \hat{r}_{x_{i  - 1}}.
\end{equation}
Second, the stopping condition of \Cref{alg:def_seq} implies that 
the last point $x_t$ must satisfy that
$A^\beta_P(x_t, \hat{r}_{x_t}) \cap F \neq \emptyset$ due to (P1)
or that $x_t\in F$ due to (P2);
in both cases, $A^\beta_P(x_t, \hat{r}_{x_t}) \cap F \neq \emptyset$.
Therefore, the connection cost of $p$ can be bounded by
\begin{align}
   \dist(p, F)
   \leq \dist(p, x_t) + \beta \hat{r}_{x_t}
   &\leq \sum_{i=1}^t \dist(x_{i-1}, x_i) + \beta \hat{r}_{x_t} 
   \leq \beta \sum_{i=0}^t \hat{r}_{x_i}
   = \beta \cdot w(S),
\label{eqn:exp_conn}
\end{align}
where the second inequality uses the triangle inequality,
and the third one uses~\eqref{eq:local_seq}.

\paragraph{Bounding the Connection Cost via the Critical Subsequence.}
We can bound the weight of a local sequence (and in particular of $S$)
using a certain subsequence, defined below as its \emph{critical subsequence}.
Informally, this subsequence skips steps where the $\hat{r}_p$ value is decreasing,
which by our assumptions means decreasing by factor $2$ or more;
it follows that every contiguous subsequence that is skipped
must have geometrically decreasing weights,
and thus its total weight is bounded by the weight of its immediately preceding point,
which is a part of the critical subsequence.
\begin{definition}[Critical Subsequence]
\label{def:crit_seq}
The \emph{critical subsequence} of a sequence $S' = (x_0 = p, \ldots, x_t)$ is 
   $\crit(S') := (x_i:\ i \geq 1 \text{ and } \hat{r}_{x_i} \geq \hat{r}_{x_{i-1}} )$.
\end{definition}

\begin{lemma}
\label{lemma:crit}
For every local sequence $S'=(x_i)_i$,
we have $w(S')^z \leq 2^{2z-1}  [ \hat{r}_{x_{0}}^z + w(\crit(S'))^z ]$.
\end{lemma}

\begin{proof}
Partition $S'$ into maximal contiguous subsequences
such that each element in $\crit(S')$ starts a new subsequence.
Thus, in each contiguous subsequence $T = (x_i, \ldots)$ the weights are decreasing
(note that $T$ may be singleton, which is considered decreasing). 
By our assumption that $\hat{r}_{p} < \hat{r}_q$ implies $\hat{r}_p \leq \hat{r}_q/2$,
we obtain that 
$w(T) \leq \hat{r}_{x_i} \cdot \sum_{j \geq 0} 2^{-j}\leq 2 \hat{r}_{x_i}$,
and aggregating over all such subsequences $T$, we have 
\[
  w(S') \leq 2 [\hat{r}_{x_0} + w(\crit(S))]. 
\]
The lemma follows by the generalized triangle inequality. 
\end{proof}

Using \Cref{lemma:crit}, we see that \eqref{eqn:exp_conn} implies that 
\begin{equation}
   \label{eqn:w_dist_to_crit}
  \dist^z(p, F)
  \leq \beta^z \cdot w(S)^z
  \leq \beta^z \cdot 2^{2z-1}  [ \hat{r}_{x_{0}}^z + w(\crit(S))^z ]. 
\end{equation}
It remains to bound $\E[w(\crit(S))^z]$,
for which we need the following notation. 
For a sequence $S'$, define
\begin{equation} \label{eqn:ellSprimeDefn}
  \ell(S') := \E[w(\crit(S)\setminus \crit(S'))^z \mid S' \sqsubseteq S] ;
\end{equation}
To understand this,
recall that $S$ denotes the random sequence generated by \Cref{alg:def_seq},
and $S' \sqsubseteq S$ is the event that the algorithm initially follows the given sequence $S'$,
hence $\ell(S')$ is informally the expected ``extra cost'' of the critical subsequence
generated by the algorithm (extra means after following $S'$),
where we use that 
$w(\crit(S)\setminus \crit(S')) = w(\crit(S)) - w(\crit(S'))$.
For $i=0,\ldots,n$ and $r > 0$, 
let $\mathcal{S}_{i, r}$ be the set of all sequences $S'$ (of points in $P$) that
(a) start at the point $p$;
(b) have $|\crit(S')| = i$; and
(c) end at any point $x'$ with $\hat{r}_{x'} \leq r$. 
Notice that by the definition of a critical subsequence, 
$\mathcal{S}_{o,r}$ (i.e., for $i=0$) can contain only sequences
with decreasing $\hat{r}$ value.
Next, define
\[
   \ell(i, r) := \max_{S' \in \mathcal{S}_{i, r}} \ell(S') ,
\]
which is the maximum expected extra cost over all prefixes in $\mathcal{S}_{i, r}$.
Observe that bounding $\ell(i, r)$, even only for $i=0$, would be useful because
the trivial sequence $S_0 := (x_0 = p)\in \mathcal{S}_{0, \hat{r}_p}$
has $\crit(S_0) = \emptyset$,
and therefore 
\begin{equation}
   \label{eqn:Eto_ell0}
   \E[w(\crit(S))^z]
   = \E[w(\crit(S) \setminus \crit(S_0)) ^z \mid S_0 \sqsubseteq S] 
   = \ell(S_0)
   \leq \ell(0, \hat{r}_p).
\end{equation}

\paragraph{Bounding $\ell(0, r)$.}
We shall prove that $\ell(i, r) \leq 2^z r^z$ for all $i$ and $r$ by induction.
More precisely, we shall bound $\ell(i,r)$ using a recursive formula,
which is essentially a weighted sum of $\ell(i,r/2)$ and $\ell(i+1,2r)$,
hence the induction will be on all pairs $(n-i,r)$, ordered lexicographically.
While the upper bound $2^z r^z$ is independent of $i$,
including $i$ in the inductive hypothesis is useful, and even crucial,
as induction only on $r$ would cause circular dependence
(the case of $r$ goes to $r/2$ and $2r$, and then back to $r$).
Informally, our main observation is that the coefficient of $\ell(i+1,2r)$
is a small constant $1/\tau$ (\Cref{lemma:crit_prob_cond}),
and thus when the recursive formula is expanded repeatedly,
the contribution of terms $\ell(i',\cdot)$ decays as $i'$ increases,
and the dominant contribution comes from the one term where $i'=i$.

To establish the recursive formula that bounds $\ell(i, r)$,
fix $i$, $r$, and a sequence $S' \in \mathcal{S}_{i, r}$.
Let $x'$ be the last point in $s'$, then $\hat{r}_{x'}\leq r$. 
Assuming that \Cref{alg:def_seq} initially follows the sequence $S'$,
formalized by conditioning on $S' \sqsubseteq S$,
and now the algorithm can proceed in three possible ways,
depending on its next iteration:
\begin{itemize}
\item ``stop'':
  The algorithm terminates,
  hence no more points are appended to the current sequence.
  This is formalized by $S = S'$, because that $S$ denotes the final sequence
  and by the conditioning $S'$ is the algorithm's ``current'' sequence. 
  It follows that $\crit(S)\setminus \crit(S'))$ is empty and has weight $0$,
  so this case contributes $0$ to the expected extra cost,
  regardless of its probability.
\item ``stay'':
  The algorithm's next iteration appends the current sequence
  with a point $x''$ that does not change its critical subsequence,
  which is formalized by $S' \circ x'' \sqsubseteq S$
  and $|\crit(S' \circ x'')| = |\crit(S')|$.
\item ``extend'':
  The algorithm's next iteration appends the current sequence
  with a point $x''$ that extends its critical subsequence, 
  which is formalized by $S' \circ x'' \sqsubseteq S$
  and $|\crit(S' \circ x'')| = |\crit(S')| +1$.
\end{itemize}
Define $\Pstay := \{ x'' \in P : |\crit(S' \circ x'')| = |\crit(S')| \}$
and $\Pextend := \{ x'' \in P : |\crit(S' \circ x'')| = |\crit(S')| + 1  \}$.
These two sets form a partition of $P$ 
and may include points that are too far from the last point of $S'$,
and thus have zero probability to be selected by the algorithm's next iteration. 
We can thus expand the definition in~\eqref{eqn:ellSprimeDefn} and write
\begin{equation} \label{eqn:ellSprime}
  \ell(S')
  = \E[w(\crit(S) \setminus \crit(S'))^z \mid S' \sqsubseteq S]
  = \ellstay(S') +  \ellextend(S'),
\end{equation}
where 
\begin{align*}
  \ellstay(S')
  &:= \sum_{x'' \in \Pstay}
    \Pr\big[S' \circ x'' \sqsubseteq S \mid S' \sqsubseteq S\big] \cdot
    \E\big[w(\crit(S) \setminus \crit(S' \circ x''))^z \mid S' \circ x'' \sqsubseteq S\big] ,
  \\
  \ellextend(S')
  &:= \sum_{x'' \in \Pextend}
    \Pr\big[S' \circ x'' \sqsubseteq S \mid S' \sqsubseteq S\big] \cdot
    \E\big[(w(\crit(S) \setminus \crit(S' \circ x'')) + \hat{r}_{x''})^z \mid S' \circ x'' \sqsubseteq S\big] .
\end{align*}

\paragraph{Bounding $\ellstay(S')$.}
Consider $x'' \in \Pstay$ with a nonzero probability to be selected.
Then $|\crit(S')| = |\crit(S' \circ x'')|$,
and recalling that $x'$ denotes the last point in $S'$, 
we get $\hat{r}_{x''} < \hat{r}_{x'}$
and in fact $\hat{r}_{x''} \leq \hat{r}_{x'} / 2 \leq r / 2$.
This implies that $S' \circ x'' \in \mathcal{S}_{i, r / 2}$ and 
\[
 \E[w(\crit(S) \setminus \crit(S' \circ x''))^z \mid S' \circ x'' \sqsubseteq S]  
 = \ell(S' \circ x'')
 \leq \ell(i, r / 2).
\]
By the trivial bound
$\sum_{x'' \in \Pstay}\Pr[S' \circ x'' \sqsubseteq S \mid S' \sqsubseteq S] \leq 1$,
we obtain 
\begin{equation}
   \label{eqn:ellstay}
   \ellstay(S') \leq \ell(i, r / 2).
\end{equation}

\paragraph{Bounding $\ellextend(S')$.}
Consider $x'' \in \Pextend$ with a nonzero probability to be selected,
hence $S' \circ x''$ is local. 
Then using \Cref{lemma:rp_ub}, 
\begin{align*}
   \hat{r}_{x''}
   \leq \alpha r_{x''}
   \leq \alpha 2^{(z - 1) / z} (r_{x'} + \dist(x', x'')) 
   \leq \alpha 2^{(z - 1) / z} (\hat{r}_{x'} + \beta \hat{r}_{x'})
   \leq \alpha 2^{(z - 1) / z} (\beta + 1) r .
\end{align*}
For the sake of brevity, let
\begin{equation}
   \label{eqn:eta}
   \eta := \alpha 2^{(z - 1) / z} (\beta + 1),
\end{equation}
so $\hat{r}_{x''} \leq \eta r$.
Therefore,
\begin{align}
  \E[ & (w(\crit(S) \setminus \crit(S' \circ x'')) + \hat{r}_{x''})^z \mid S' \circ x'' \sqsubseteq S] \nonumber \\
   &\leq 2^{z - 1} \cdot \E[w(\crit(S) \setminus \crit(S' \circ x''))^z + \hat{r}_{x''}^z \mid S' \circ x'' \sqsubseteq S] \nonumber \\
&\leq 2^{z - 1} \cdot \big[ \ell(i + 1, \eta r) + \eta^z r^z \big] , \label{eqn:extend_exp}
\end{align}
where the first inequality uses the generalized triangle inequality
and the second one uses the definition of $\ell(i + 1, \eta r)$.
We now need the following lemma,
whose proof appears in \Cref{sec:proof_lemma_crit_prob}.

\begin{restatable}{lemma}{lemmacritprobcond}
\label{lemma:crit_prob_cond}
Let $S'$ be a sequence that starts from $x_0 = p$.
Then
\begin{equation}
  \label{eqn:extend_prob}
  \sum_{x'' \in \Pextend} \Pr[S' \circ x'' \sqsubseteq S \mid S' \sqsubseteq S] \leq 1 / \tau.
\end{equation}
\end{restatable}

Putting together \eqref{eqn:extend_exp} and \eqref{eqn:extend_prob},  we obtain
\begin{equation}
  \label{eqn:ellextend}
  \ellextend(S') \leq 2^{z-1} \cdot \big[ \ell(i + 1, \eta r) + \eta^z r^z \big] / \tau.
\end{equation}

\paragraph{Bounding $\ell(i, r)$ Recursively.}
Plugging our bounds from~\eqref{eqn:ellstay} and~\eqref{eqn:ellextend}
into \eqref{eqn:ellSprime}, we obtain 
\begin{equation}
   \ell(i, r)
   = \max_{S' \in \mathcal{S}_{i,r}} \ell(S')
   \leq \ell(i, r / 2) + 2^{z - 1} \cdot \big[ \ell(i + 1, \eta r)  +  \eta^z r^z  \big] / \tau.
\end{equation}

We are ready to prove by induction that $\ell(i, r) \leq 2^z r^z$ for all $0\le i \leq n$ and all $r > 0$.
As mentioned earlier, the induction will be on all pairs $(n-i,r)$, ordered lexicographically.
In the base case $i = n$, \Cref{alg:def_seq} already terminates
and therefore $\ell(n, r) = 0$ for all $r > 0$.
For the inductive step,
assume that the induction hypothesis holds for all pairs that precede $(i,r)$ in the above order,
and let us verify it for $(i,r)$, as follows. 
\begin{align*}
   \ell(i, r)
   &\leq \ell(i, r / 2) + 2^{z - 1} \cdot \big[ \ell(i + 1, \eta r)  +  \eta^z r^z  \big] / \tau \\
   &\leq r^z + 2^{z - 1} \cdot [2^z \eta^z r^z + \eta^z r^z] / \tau \\
   &= r^z (1 + 2^{z - 1}(2^z \eta^z + \eta^z) / \tau)  \\
   &\leq 2^z r^z,
\end{align*}
where the last inequality holds by setting 
$\tau \geq 2^{2z}\eta^z = (\Theta(1)\alpha\beta)^z$,
recalling the value of $\eta$ from~\eqref{eqn:eta}.

\paragraph{Concluding \Cref{lemma:offline_fac_finding}.}
Putting together~\eqref{eqn:Eto_ell0} and the above bound
$\ell(i, r) \leq 2^z r^z$,
we have
\[
   \E[w(\crit(S))^z]
   \leq \ell(0, \hat{r}_p)
   \leq 2^z \hat{r}_p^z.
\]
Plugging this bound into~\eqref{eqn:w_dist_to_crit},
the expected connection cost from $p$ is 
\begin{align*}
  \E[\dist^z(p, F)]
  \leq \beta^z 2^{2z - 1} [ \hat{r}^z_{p} + \ell(0, \hat{r}_{p}) ]
  \leq \beta^z 2^{2z - 1} [ \hat{r}^z_{p} + 2^z \hat{r}^z_{p} ]
  \leq \beta^z 2^{O(z)} \cdot \hat{r}^z_{p}.
\end{align*}
Thus, the expectation of the total connection cost is at most
\[
   \sum_{p\in P} (\beta^z 2^{O(z)}\cdot \alpha^z  r_p^z)
   \leq 2^{O(z)}\alpha^z\beta^z \cdot \OPTfl_z,
\]
which proves \Cref{lemma:offline_fac_finding}.
\end{proof}

\subsubsection{Proof of \Cref{lemma:crit_prob_cond}}
\label{sec:proof_lemma_crit_prob}

\lemmacritprobcond*

To prove this lemma, which examines a single iteration of \Cref{alg:def_seq},
let us rewrite the lefthand-side of~\eqref{eqn:extend_prob}, as follows. 
Fix $S' = (x_0 = p, \ldots, x_t)$ as before,
and let $\Econd$ be the event that $S' \sqsubseteq S$,
i.e., \Cref{alg:def_seq} gets to line~\ref{line:while}
with the current sequence equal to $S'$.
We shall be conditioning on $\Econd$ throughout this proof,
but to avoid confusion we shall often write this conditioning explicitly.
Let $\Eextend$ be the event that $\Econd$ holds and
in iteration $i=t$ of \Cref{alg:def_seq}, 
$x_t$ passes the test in line~\ref{line:while} 
and the point picked in line~\ref{line:next_point} is from $\Pextend$.
Observe that the lefthand-side of~\eqref{eqn:extend_prob}
can be written as
$\Pr[ \Eextend \mid \Econd ] $, 
and it remains to bound this quantity.

Observe that \Cref{alg:offline_choose_fac} makes two random choices
for each point $q\in P$, 
one is whether to open $q$ by (P1), 
and the other is the random label $h(q)$ used in (P2). 
The event $\Econd$ may depend on the random choices for points in the set
$\Acond := \bigcup_{i = 0}^{t - 1} A_P^\beta(x_i, \hat{r}_{x_i})$,
but not on those for points outside $\Acond$.
The choices for points outside $\Acond$ and inside $\Acond$ are independent,
hence by the principle of deferred decisions,
conditioning on $\Econd$ does not change the distribution of choices for points in
$\Anew := A_P^\beta(x_{t}, \hat{r}_{x_t}) \setminus \Acond$,
which in turn affect the event $\Eextend$ (e.g., whether $x_t$ passes the test).
Since points with $\hat{r}$ value at least $\hat{r}_{x_t}$ are of particular importance,
we also define 
$\Anewgeq := \{ q \in \Anew:\ \hat{r}_q \geq \hat{r}_{x_t} \}$.

We can write $\Eextend|\Econd$ as the intersection of two events, as follows.
Define $\EPone$ to be the event that no point in $A_P^\beta(x_t, \hat{r}_{x_t})$ is selected by (P1).
Define $\EPtwo$ to be the event that $x_t $ is not selected by (P2)
and the point with the smallest label in $A_P^\beta(x_t, \hat{r}_{x_t})$,
denoted $x_{t + 1}$ is in $\Pextend$, i.e., $\hat{r}_{x_{t+1}} \geq \hat{r}_{x_t}$.
Strictly speaking, in both definitions (just as in $\Eextend$), 
we also require that $\Econd$ holds,
and therefore $\Eextend=\EPone\land\EPtwo$.
We claim that $\EPone$ and $\EPtwo$ are conditionally independent (given $\Econd$), i.e., 
\begin{align}
  \Pr[\Eextend \mid \Econd]
  = \Pr[\EPone \land \EPtwo \mid \Econd] 
  = \Pr[\EPone \mid \Econd] \cdot \Pr[\EPtwo \mid \Econd].
  \label{eqn:sub_events}
\end{align}
One can easily verify this claim, 
for instance by reading our analysis below of these two events;
in a nutshell,
$\EPone \mid \Econd]$ depends only on random choices made by (P1),
specifically regarding points outside $\Acond$,
while $\EPtwo \mid \Econd$ depends only on random labels used by (P2),
specifically the relative order between labels outside and inside $\Acond$.

\paragraph{Bounding $\Pr[\EPone | \Econd]$.}
When $\EPone|\Econd$ occurs, 
no point from $\Anew$ is selected by (P1),
which in turn is independent of $\Econd$,
because $\Anew$ is disjoint of $\Acond$.
Therefore,
\begin{align}
  \Pr[\EPone \mid \Econd]
  &\leq \Pr[\text{no point in $\Anew$ is selected by (P1)} \mid \Econd] \nonumber \\
  &= \Pr[\text{no point in $\Anew$ is selected by (P1)}] \nonumber \\
  &= \prod_{q \in \Anew} \big(1 - \tau\cdot \hat{r}_q^z / \wopen\big) \nonumber \\
&\leq \exp\big(- |\Anewgeq|\cdot \tau\cdot \hat{r}_{x_t}^z / \wopen\big) .
    \label{eqn:EP1}
\end{align}

\paragraph{Bounding $\Pr[\EPtwo | \Econd]$.}
When $\Econd$ occurs, 
each point $x_i$, for $i=1,\ldots,t$,
has the smallest label in $A_P^\beta(x_{i - 1}, \hat{r}_{x_{i-1}})$,
and it follows by induction that $x_t$ has the smallest label in $\Acond$.
When $\EPtwo$ occurs,
some $x_{t+1}\neq x_t$ has the smallest label in $A_P^\beta(x_t, \hat{r}_{x_t})$,
and thus $h(x_{t+1}) < h(x_t) = \min(h(\Acond))$,
implying that $x_{t+1} \notin \Acond$ and in fact $x_{t+1} \in \Anewgeq$.
Therefore,
\begin{align} \label{eqn:qAnew}
  \Pr[\EPtwo \mid \Econd]
  \leq \sum_{q \in \Anewgeq} \Pr\left[ h(q) = \min(h(A_P^\beta(x_t, \hat{r}_{x_t})))  \bigm| \Econd \right]. 
\end{align}
The conditioned events in the summation depend only on the randomness of $h$,
and certainly not on random choices made by (P1).
In addition, the conditioning on $\Econd$
effectively provides only a partial-order information about $h$,
and says that for each $i=1,\ldots,t$ we have
$h(x_i) = \min(h(A_P^\beta(x_{i - 1}, \hat{r}_{x_{i-1}})))$.
The next lemma captures this scenario using a simplified probabilistic model. 

\begin{claim}
\label{lemma:partial_order}
Fix a finite domain $\mathcal{U}$,
and let $g : \mathcal{U} \to [0, 1]$ be random, 
such that each $g(a)$ is chosen independently and uniformly from $[0,1]$.
Then for every subset $S \subseteq \mathcal{U}$, element $a \in S$,
and partial order $\mathcal{P}$ on $S \setminus \{a\}$, we have
\[
  \Pr_g[g(a) = \min(g(S)) \mid g \POconsistent \mathcal{P}] = 1 / |S|,
\]
where $g \POconsistent \mathcal{P}$ denotes consistency
in the sense that $g(x) < g(y)$ whenever $x \prec_{\mathcal{P}} y$.
\end{claim}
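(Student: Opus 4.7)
The plan is to prove the claim by a symmetry argument that exploits the fact that $a$ does not appear in the partial order $\mathcal{P}$. Since each $g(x)$ is continuous, almost surely all values $\{g(x):x\in S\}$ are distinct, so $g$ induces a uniformly random total order on $S$, equivalently a uniformly random ranking $\mathrm{rk}: S \to \{1,\ldots,|S|\}$. I would phrase the whole argument in terms of this ranking, which drops all continuous issues.

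The key observation is a product decomposition. The uniform ranking on $S$ is statistically equivalent to drawing independently (i) the induced relative ranking on $T := S \setminus \{a\}$ (a uniform permutation of $T$), and (ii) the position $\mathrm{rk}(a) \in \{1,\ldots,|S|\}$ of $a$, uniformly and independently of (i). This is a standard fact about uniform permutations: one may generate a uniform permutation on $S$ by first ordering $T$ uniformly and then inserting $a$ in a uniformly random slot among the $|S|$ possible ones. Because the partial order $\mathcal{P}$ constrains only elements of $T$, the event $\{g \POconsistent \mathcal{P}\}$ is measurable with respect to (i) alone. Consequently, conditioning on this event leaves (ii) unchanged, so $\mathrm{rk}(a)$ is still uniform on $\{1,\ldots,|S|\}$. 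In particular,
\[
\Pr[g(a) = \min(g(S)) \mid g \POconsistent \mathcal{P}]
= \Pr[\mathrm{rk}(a) = 1 \mid g \POconsistent \mathcal{P}]
= \frac{1}{|S|}.
\]

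I expect no serious obstacle here; the only step worth writing carefully is the independence claim in (i)/(ii), which I would either justify by the explicit insertion procedure described above or, equivalently, by conditioning on the unordered multiset $g(T)$ and noting that $g(a)$ is independent of $g|_T$ and uniform on $[0,1]$, so its rank among $g(S)$ is uniform on $\{1,\ldots,|S|\}$ regardless of $g|_T$ (and hence regardless of whether $g|_T \POconsistent \mathcal{P}$). Either framing yields the desired $1/|S|$ bound directly.
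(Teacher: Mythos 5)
Your proof is correct and takes essentially the same symmetry approach as the paper: both observe that, since $\mathcal{P}$ constrains only $S\setminus\{a\}$, the relative rank of $a$ within $S$ is independent of the event $\{g\POconsistent\mathcal{P}\}$ (the paper phrases this as conditioning on the full ordering of $S\setminus\{a\}$ and then counting permutations, which is the same product decomposition you use).
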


\begin{proof}
Clearly, all permutations of $g(S)$ have the same probability. 
Now observe that if we expand the conditioned event to reveal
the complete ordering of $S\setminus\{a\}$,
then the probability that $g(a) = \min(g(S))$ is $1/|S|$,
by counting the relevant permutations of $g(S)$.
The claim follows by the law of total probability.
\end{proof}
    
We can bound $\Pr[ h(q) = \min(h(A_P^\beta(x_t, \hat{r}_{x_t}))) \mid \Econd ]$
for a given $q \in \Anewgeq$
by applying \Cref{lemma:partial_order}
with $g := h$, $a := q$,
$S := \Anew \cup \Acond = \bigcup_{i = 1}^t A_P^\beta(x_t, \hat{r}_{x_t})$
which indeed contains $a$, 
and a partial order $\mathcal{P}$ defined from the event $\Econd$,
namely, that for each $i=1,\ldots,t$ we have
$h(x_i) = \min(h(A_P^\beta(x_{i - 1}, \hat{r}_{x_{i-1}})))$,
noticing that this is indeed a partial order on $S\setminus \{a\}$.
Hence, 
\[
  \forall q \in \Anewgeq,
  \qquad 
  \Pr[ h(q) = \min(h(A_P^\beta(x_t, \hat{r}_{x_t}))) \mid \Econd ]
    = \frac{1}{|\Anew \cup \Acond|}
    \leq \frac{1}{|A_P^\beta(x_t, \hat{r}_{x_t})|}.
\]
Plugging this into \eqref{eqn:qAnew}, we obtain
\begin{equation}
  \label{eqn:EP2}
  \Pr[\EPtwo \mid \Econd]
  \leq \frac{|\Anewgeq| }{ |A_P^\beta(x_t, \hat{r}_{x_t})| }.
\end{equation}

\paragraph{Putting It Together.}
We now use \eqref{eqn:sub_events}, \eqref{eqn:EP1}, and~\eqref{eqn:EP2}
to obtain 
\begin{align*}
  \Pr[\mathcal{\Eextend} \mid \Econd]
  &= \Pr[\EPone \mid \Econd] \cdot \Pr[\EPtwo \mid \Econd] \\
  &\leq \exp\big(- |\Anewgeq|\cdot \tau\cdot \hat{r}_{x_t}^z / \wopen \big)
    \cdot \frac{ |\Anewgeq| }{|A_P^\beta(x_t, \hat{r}_{x_t})|} \\
  &\leq \exp\Big(- \tau\cdot \frac{|\Anewgeq|}{|B_P(x_t, r_{x_t})|} \Big)
    \cdot \frac{ |\Anewgeq| }{|B_P(x_t, r_{x_t})|} 
    \leq \frac{1}{\tau},
\end{align*}
where the second inequality uses \Cref{fact:rp}
and that $\hat{r}_q \geq r_q$ for all $q \in P$,
and the last inequality uses that $x \exp(-x) \le 1$ for all $x\in\RR$. 
This concludes the proof of \Cref{lemma:crit_prob_cond}.
\qed

\subsection{MPC Implementation: Proof of \Cref{thm:ufl}}
\label{sec:mpc_ufl}

The MPC algorithm for \Cref{thm:ufl} consists of two MPC implementations,
one for \Cref{lemma:rp_offline} that we present in \Cref{alg:mpc_rp},
and one for \Cref{alg:offline_choose_fac} that we present in \Cref{alg:mpc_fac}.
Both MPC implementations tightly follow their sequential counterpart,
with the key steps implemented using the MPC primitive provided by \Cref{thm:mpc}.

\begin{algorithm}[ht]
   \caption{MPC implementation of \Cref{lemma:rp_offline}, to compute $\hat{r}_p$ that approximates $r_p$ for each $p \in P$}
   \label{alg:mpc_rp}
   \begin{algorithmic}[1]
     \State let $\beta \gets 3\Gamma$ as in \Cref{thm:mpc}
     \State let 
      $Z \gets \{ \beta^i :\ i \in \mathbb{Z} \text{ and } \wopen / (3\beta|P|)\leq \beta^{iz} \leq  \beta \wopen\}$
      \label{line:defZ}
      
        \Comment{all possible values of $\hat{r}_p$, per \eqref{eqn:rp_range} and \Cref{lemma:rp_offline} }

      \State for each $p \in P$ and $r \in Z$, compute $|A^\beta_P(p, r)|$ where $B_P(p, r) \subseteq A^\beta_P(p, r)  \subseteq B_P(p, \beta r)$

      \Comment{done \emph{in parallel} for all $r \in Z$, using \Cref{thm:mpc} with $f:S\mapsto |S|$}
      \State for each $p \in P$, find the smallest $\hat{r}_p \in Z$ such that $|A^\beta(p, \hat{r}_p)| \geq \wopen / (2\beta^z \hat{r}_p^z)$ as in \Cref{lemma:rp_offline} 

      \Comment{done by sorting $|A^\beta(p, r)|$ by $p$}
      \State for each $p \in P$, rescale $\hat{r}_p \gets \hat{r}_p \cdot 3\beta$
   \end{algorithmic}
\end{algorithm}

\begin{algorithm}[ht]
   \caption{MPC implementation of \Cref{alg:offline_choose_fac}, to compute a facility set $F\subseteq P$ }
   \label{alg:mpc_fac}
   \begin{algorithmic}[1]
      \Require{$\hat{r}_p \in [r_p, \alpha r_p]$ for each $p \in P$, resulting from \Cref{alg:mpc_rp}}

      \State let $\beta \gets 3\Gamma$ as in \Cref{thm:mpc}, 
        and round each $\hat{r}_p$ to the next integral power of $2$
      \State let $F \gets \emptyset$, and 
      let $\tau  \gets  (\alpha \beta)^{\Theta(z)} $ 
        
      \For{each machine $\mathcal{M}$ and each point $p$ stored in $\mathcal{M}$, in parallel}
         \State assign a uniformly random label $h(p)\in[0,1]$ and store it with $p$
         \State add $p$ to $F$ with probability $\tau \cdot \hat{r}_p^z / \wopen$ (independently of $h$)
      \EndFor
      \State for each $p \in P$ and $r \in \{ \hat{r}_q : q \in P \}$ 
      compute $\hat{h}_{\min}^{r}(p) \gets \min(h(A_P^{\beta}(p, r)))$
\label{line:mpc_hmin}

      \Comment{done in parallel for all possible $r$, using \Cref{thm:mpc} with $f:S\mapsto \min(h(S))$ }

      \State for each $p \in P$,
        let $\hat{h}_{\min}(p) \gets \hat{h}_{\min}^{\hat{r}_p}(p)$
        and if $\hat{h}_{\min}(p) = h(p)$ then add $p$ to $F$
   \end{algorithmic}
\end{algorithm}

The overall algorithm (for \pzFL) starts with executing \Cref{alg:mpc_rp}
to compute the $\hat{r}_p$ values.
By \Cref{lemma:rp_offline} and \Cref{thm:mpc}, the computed values
satisfy that $r_p \leq \hat{r}_p \leq 3 \beta^2 = O(\Gamma^2)$,
where $\Gamma$ is the consistent-hashing parameter used in \Cref{thm:mpc}.

Next, the overall algorithm executes \Cref{alg:mpc_fac}
to compute a set denoted $F\subset P$ of open facilities,
where we set $\alpha \eqdef 3\beta^2 = \poly(\Gamma)$
as in \Cref{lemma:rp_offline}. 
Notice that this algorithm is an MPC implementation of \Cref{alg:offline_choose_fac}.
Hence by \Cref{{lemma:offline_fac_opening},lemma:offline_fac_finding},
as well as \Cref{thm:mpc} and the above guarantee about \Cref{alg:mpc_rp},
we conclude that
\[
   \E[\fl_z(P, F) ] \leq \Gamma^{O(z)} \cdot \OPTfl_z.
\]

To achieve the required $1 / \poly(n)$ failure probability bound,
one can use a standard amplification trick,
by running the algorithm for $O(\log n)$ times in parallel
and taking one with minimum value.
However, this method requires evaluating the objective for the found facilities,
which is nontrivial, since the number of facilities might be large
(e.g., not fit into a single machine),
but one can apply \Cref{thm:mpc} again to find an approximate nearest-neighbor
(facility) in parallel for each data point $p\in P$
(a more detailed description appears in \Cref{sec:nearest-neighbor-search}),
and then aggregate their distances (their power-$z$ sum).
This additional step increases the round complexity and total space 
by $O(\log n)$ factor, which is still within the bounds of \Cref{thm:ufl}.
This finishes the analysis for the approximation ratio and failure probability.

Finally, for the round complexity and total space,
observe that in both \Cref{alg:mpc_rp,alg:mpc_fac},
these are dominated by the parallel invocations of \Cref{thm:mpc},
and the number of parallel invocations in \Cref{alg:mpc_rp} is
$|Z| \leq O(\log n)$, and in \Cref{alg:mpc_fac} it is 
$| \{\hat{r}_p : p \in P\} | \leq |Z| \leq O(\log n)$. All other steps can be easily implemented in $O(\log_{s}n)$ rounds
with at most $\polylog (n)$ factor of additional space.
This completes the proof of \Cref{thm:ufl}.

 \section{Clustering}
\label{sec:clustering}

In this section, we develop an MPC algorithm for $\kzC$ that outputs a bicriteria approximate solution,
using only relatively few extra centers compared to the optimal solution.
Specifically, given $\mu > 0$,
the algorithm computes a \kzC solution that uses at most $(1+\mu)k$ centers 
and has connection cost at most $(\eps^{-O(z)}\mu^{-2})$-times larger than the optimal cost $\OPTkz$,
using total space $O(n^{1+\eps})$ for any $\eps > 0$;
the trade-off between the total space and the approximation is given by the parameters of consistent hashing,
similarly as for \pzFL in \Cref{thm:ufl}.
We stress that this result is most interesting when $k$ substantially exceeds the local memory $s$ in each machine.
We give a more precise statement, where the guarantees
are expressed in terms of the
consistent-hashing parameters $\Gamma$ and $\Lambda$ used by the algorithm.

\begin{theorem}
\label{thm:clustering}
There is a randomized fully scalable MPC algorithm for \kzC (for some $z\geq 1$)
with the following guarantees.
Suppose that
\begin{itemize} \item the input is an integer $k\geq 1$, parameter $\mu\in (0,1)$, aspect ratio $\Delta $ of $P$,
  and a multiset $P\subset \RR^d$ of $n$ points
  distributed across machines with local memory $s\geq \poly(d\log n)$; and
\item the algorithm has access
to the same $(\Gamma, \Lambda)$-hash $\varphi:\RR^d\to\RR^d$ in every machine. \end{itemize}
The algorithm uses $O(\log_{s}n)$ rounds
and $O(\Lambda \poly(d) \log(\Delta)) \cdot \Ot(n)$ total space,
and succeeds with probability at least $1-1/\poly(n)$
to compute a set $C\subseteq P$ of at most $(1+\mu)k$ centers,  
whose connection cost is 
$\cl_z(P, C) \le \Gamma^{O(z)}\mu^{-2}\cdot \OPTkz(P)$,
where $\OPTkz(P)$ is the optimal \kzC cost.
\end{theorem}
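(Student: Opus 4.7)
The plan is to execute the Lagrangian-relaxation strategy outlined in \Cref{sec:techniques}, in three layers. First, I would reduce \kzC to \pzFL in a black-box manner: given a guess $G$ of $\OPTkz(P)$, run \Cref{thm:ufl} on $P$ with opening cost $\wopen := G/k$. When $G \in [\OPTkz, 2\OPTkz]$, the optimal \kzC solution is feasible for this \pzFL instance with value at most $k\wopen + \OPTkz \le 2\OPTkz$, so the returned solution $F$ satisfies $\fl_z(P,F) \le \Gamma^{O(z)} \cdot \OPTkz$, and consequently $|F| \le \Gamma^{O(z)} k$. Since $\OPTkz$ is unknown, I would run this in parallel for $O(\log \Delta)$ geometrically spaced guesses $G$ and retain the cheapest solution whose size is at most $\Gamma^{O(z)} k$; correctness of the guessing uses that the objective values lie in $[1, \poly(n \Delta)]$ after rescaling, and the cost of evaluating each solution is handled by an approximate nearest-neighbor query (\Cref{sec:nearest-neighbor-search}) followed by converge-casting the per-point costs.

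Second, I would use this $(\Gamma^{O(z)}, \Gamma^{O(z)})$-bicriteria solution to construct a weak coreset $Q \subseteq F$ of size at most $\Gamma^{O(z)} k$: assign each $p \in P$ to its approximately nearest $q \in F$ (via \Cref{thm:mpc} applied with $f$ being smallest-ID selection), and set the weight $w(q)$ as the number of points assigned to $q$. A standard triangle-inequality argument (analogous to \cite{DBLP:conf/focs/GuhaMMO00}) shows that any $\alpha$-approximate \kzC solution on $(Q,w)$ is a $2^{O(z)}\alpha \Gamma^{O(z)}$-approximate solution on $P$. Importantly, $Q$ may still be too large to fit on a single machine when $k$ is large, so the next step must be a genuine MPC algorithm rather than a local computation.

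Third, I would run a parallel greedy algorithm on $(Q,w)$ that mimics the sequential rule ``process points in order of non-increasing weight and open $q$ if no previously opened center lies within distance $R_q := (G/(k\cdot w(q)))^{1/z}$''. The parallelization uses, for every $q \in Q$, two rules analogous to those in \Cref{alg:offline_choose_fac}: (P1$'$) open $q$ independently with probability $\mu/\Gamma^{O(z)}$, and (P2$'$) open $q$ if it has the maximum perturbed weight inside $A^\beta_Q(q, R_q)$, where a tiny random tie-breaking perturbation ensures uniqueness. Evaluating (P2$'$) requires, for each $q$, the maximum weight inside an approximate ball, which is exactly the composable-statistic setup of \Cref{thm:mpc} applied in parallel for the $O(\log n)$ relevant radii $R_q$. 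Checking the overall resource bounds is then routine: $O(\log_s n)$ rounds and $O(\Lambda \poly(d) \log \Delta) \cdot \tilde{O}(n)$ total space follow from chaining \Cref{thm:ufl}, \Cref{thm:mpc}, and the parallel guessing.

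\paragraph{Main Obstacle.}
The hard part will be the analysis of the parallel greedy step, showing simultaneously that (i) the expected number of opened centers is at most $(1+\mu)k$ and (ii) the expected connection cost is $O(\mu^{-2})\cdot \OPTkz$. For (i), the (P1$'$) contribution is $(\mu/\Gamma^{O(z)}) \cdot |Q| \le \mu k$ by the coreset size bound, while the (P2$'$) contribution requires a charging argument: each opened point $q$ can be charged to an optimal cluster of $\OPTkz$-solution whose representative in $Q$ lies within $R_q$, using an averaging argument over the per-cluster connection budget to bound the expected excess by $\mu k$; the factor $\mu^{-1}$ in the radius and the factor $\mu^{-1}$ from Markov's inequality combine to give $\mu^{-2}$ in (ii). The subtlety is that (P1$'$) and (P2$'$) must be shown \emph{conditionally independent complements} in the same style as the \pzFL analysis of \Cref{lemma:crit_prob_cond}, so that, in any assignment sequence of points whose weights do not drop geometrically, the probability of needing to extend the sequence decays fast enough. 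I expect this cost analysis, carried out via a critical-subsequence argument paralleling \Cref{lemma:crit}, to be the main technical work, with the MPC implementation itself reducing to straightforward applications of \Cref{thm:mpc} and \Cref{thm:ufl}.
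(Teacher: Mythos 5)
Your proposal follows the same three-layer structure as the paper: Lagrangian reduction to \pzFL with guessed opening cost, construction of a weak coreset of $O(\gamma)\cdot k$ weighted points, and a parallel greedy step on the coreset with two rules (independent sampling and a local extremality test over approximate balls). The reduction, coreset construction, and MPC implementation sketch are all on track.

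The one place where your plan diverges from what is actually needed is your assessment of the final cost analysis. You anticipate that bounding the expected connection cost requires a critical-subsequence argument and a delicate ``conditionally independent complements'' analysis paralleling \Cref{lemma:crit_prob_cond} and \Cref{lemma:crit}. That machinery was necessary for \pzFL because the $r_p$ values along the assignment sequence can both increase and decrease, so the sequence cost is not controlled step-by-step and one must isolate the increasing (critical) steps. In the coreset setting the situation is fundamentally easier: rule (C2) always moves to the point of \emph{maximum} perturbed weight in the neighborhood, so weights are monotone non-decreasing along the assignment sequence $(p_0,\ldots,p_t)$, and the search radii $R_{p_i} = (\rho/w(p_i))^{1/z}$ are therefore monotone non-increasing. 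Every hop contributes at most $\beta R_{p_0}$, so the weighted cost of $p$ is bounded by a function of $t$ and $\rho$ alone, with no geometric decomposition needed. Likewise, bounding $\Pr[\text{extend}]$ does not need the complementary interplay of the two rules: at each step the monotonicity of weights forces the current approximate ball to contain at least one point not yet seen (otherwise $p_i$ would already be the local maximum and (C2) would open it), and that fresh point is opened by (C1) independently with probability $\mu/\gamma$. Hence $\Pr[t > \ell] \le (1-\mu/\gamma)^\ell$ directly, giving $\E[t] \le \gamma/\mu$, and combined with $\rho = \Theta(\OPTkzGuess/(\mu k))$ this yields the $\mu^{-2}$ factor without any of the heavier \pzFL bookkeeping. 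The center-count bound for (C2) is also cleanest if you first show that every point opened by (C2) would also be opened by the sequential greedy (because passing the (C2) test implies being the local maximum in a set containing the exact ball), and then invoke the sequential analysis; your direct per-cluster charging argument is the right substance but is better run once for the sequential algorithm and then inherited.

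So: correct approach, same route as the paper, but you have overestimated the difficulty of the final step — drop the critical-subsequence plan and exploit the monotonicity of weights along the assignment sequence instead.
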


We develop the algorithm in the following three steps. 
Let $\gamma := \Gamma^{O(z)}$ be the approximation factor of
algorithm $\algoFLz$ for \pzFL (from \Cref{thm:ufl}). 
\begin{enumerate}
\item We use algorithm $\algoFLz$ to reduce the
  original instance with $n$ points to a \emph{weak coreset} $P_w$,
  that is, an instance with at most $2\gamma\cdot k$ distinct weighted points of total weight $n$, 
  such that the optimal cost of the new instance $P_w$ is
  $\OPTkz(P_w) \le O(\gamma)\cdot \OPTkz(P)$. 
\item We then design a \emph{sequential} algorithm for weak coresets.
  This algorithm computes a set $C$ of at most $(1+\mu) k$ centers
  that approximates the optimal connection cost (of the weak coreset)
  within factor $2^z \gamma / \mu$.
\item We provide a parallel version of the sequential algorithm,
  and show how to implement it in the MPC model.
  This parallel algorithm relies on similar ideas to those used
  in the facility location algorithm $\algoFLz$.
\end{enumerate}

\paragraph{Preprocessing.}
We assume without loss of generality that $P$ has more than $k$ distinct points.
Otherwise, clearly $\OPTkz(P) = 0$, 
and that in $O(\log_{s}n)$ rounds one can determine
whether or not this is the case by sorting and de-duplicating the input
(using the algorithm of Lemma~\ref{lemma:sorting}). 
Moreover, since $\OPTkz(P)$ is between the minimum inter-point distance in $P$ and $n \cdot \diam(P)$,
we can first compute a $2$-approximation of $\diam(P)$ in $O(\log_s n)$ rounds
(similarly as in \Cref{sec:nns}),
and rescale the point set so that $1 \le \OPTkz(P) \le n\cdot O(\Delta)$.

\subsection{Reduction to Weak Coresets}

We provide a reduction to an instance with bounded number of weighted points $P_w$.
For completeness, we define the weighted clustering cost $\cl_z(P_w, C)$ as the weighted analogue of $\cl_z(P, C)$ in the natural way:
$\cl_z(P_w, C) := \sum_{p \in P_w} \dist^z(p, C)\cdot w(p)$.

\begin{lemma}\label{lemma:clustering-reduction_to_weighted}
Suppose there is a $\gamma$-approximation MPC algorithm $\algoFLz$ for \pzFL
that succeeds with probability at least $1-1/\poly(n)$.
Then there is a randomized fully scalable MPC algorithm that,
given the same input as in Theorem~\ref{thm:clustering},
computes a set $P_w\subseteq P$ of at most $O(\gamma)\cdot k$ weighted points
such that
\begin{itemize} \item the total weight of points in $P_w$ is $n$, and
\item with probability at least $1-1/\poly(n)$,
	it holds that $\OPTkz(P_w) \le O(\gamma)\cdot \OPTkz(P)$.
\end{itemize}
Compared to $\algoFLz$, the algorithm uses $O(\log_{s}n)$ additional rounds
and its total space is factor $O(\log (n \Delta))$ larger than that of $\algoFLz$.
\end{lemma}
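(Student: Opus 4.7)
}
The plan is to apply the standard Lagrangian trick that turns a \pzFL algorithm into a bicriteria algorithm for \kzC, and then use the resulting facilities as the support of the weak coreset by moving each input point to its approximately nearest open facility. The key observation is that if we knew $\OPT := \OPTkz(P)$ and set the opening cost to $\wopen := \OPT/k$, then the optimal \kzC solution (which uses at most $k$ centers) is feasible for \pzFL with cost at most $k\cdot\wopen+\OPT = 2\OPT$. Running $\algoFLz$ would then return a set of facilities $F\subseteq P$ with $\fl_z(P,F)\le 2\gamma\cdot \OPT$, which immediately gives $|F|\le 2\gamma\cdot k$ and $\sum_{p\in P}\dist^z(p,F) \le 2\gamma\cdot \OPT$.

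Since $\OPT$ is unknown, I would first use the preprocessing to bound $1\le \OPT \le n\cdot O(\Delta)$, and then run $\algoFLz$ in parallel on $O(\log(n\Delta))$ geometric guesses $g\in\{2^i\}$ using opening cost $\wopen_g := g/k$; this is where the extra $\log(n\Delta)$ factor in total space comes from. For each guess I count the number of returned facilities (via a simple converge-cast sum) and, using the approximate nearest-neighbor primitive from \Cref{sec:nearest-neighbor-search} on the returned set $F_g$, I compute an $O(\Gamma^z)$-factor estimate of $\sum_{p\in P}\dist^z(p,F_g)$. I then select the guess $g^\star$ minimizing this estimated cost among those with $|F_g|\le 2\gamma k$. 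A standard argument shows that some guess with $g^\star\in[\OPT,2\OPT]$ is selected and satisfies both bounds, so the corresponding set $F := F_{g^\star}$ has $|F|\le 2\gamma k$ and $\sum_{p\in P}\dist^z(p,F)\le O(\gamma)\cdot\OPT$.

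To form $P_w$, I use the approximate nearest-neighbor primitive once more to map each $p\in P$ to some $f(p)\in F$ with $\dist(p,f(p))\le O(\Gamma)\cdot \dist(p,F)$, then aggregate via sorting so that each distinct $f\in f(P)$ receives weight $w(f):=|f^{-1}(f)|$. Total weight is $n$ by construction and $|P_w|\le |F|\le 2\gamma k = O(\gamma)\cdot k$. To bound $\OPTkz(P_w)$, I plug in $C^\star$, the optimal \kzC center set for $P$, and use \Cref{lemma:tri_ine_z}:
\begin{align*}
\OPTkz(P_w) \;\le\; \cl_z(P_w,C^\star)
&= \sum_{p\in P}\dist^z(f(p),C^\star) \\
&\le 2^{z-1}\sum_{p\in P}\bigl(\dist^z(f(p),p)+\dist^z(p,C^\star)\bigr) \\
&\le 2^{z-1}\bigl(O(\Gamma^z)\cdot O(\gamma)\cdot \OPT+\OPT\bigr) \;=\; O(\gamma)\cdot \OPT,
\end{align*}
where the last step absorbs $\Gamma^{O(z)}$ into $\gamma=\Gamma^{O(z)}$.

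The main technical obstacle will be arguing correctness of the guess selection under the approximate cost estimates: the nearest-neighbor primitive only yields $O(\Gamma^z)$-approximate connection costs, so I need to show that picking the guess with smallest \emph{estimated} cost still gives a set $F$ whose \emph{true} connection cost is $O(\gamma)\cdot\OPT$, which follows because the relative error is a factor that can be absorbed into $\gamma$. A secondary bookkeeping point is the failure probability: each parallel invocation of $\algoFLz$ succeeds with probability $1-1/\poly(n)$, and by a union bound over the $O(\log(n\Delta))$ guesses the whole procedure still succeeds with high probability, after adjusting the polynomial. The additional round complexity is $O(\log_s n)$ (dominated by sorting, converge-cast and the nearest-neighbor primitive), and the total space is a factor $O(\log(n\Delta))$ above that of $\algoFLz$, as required.
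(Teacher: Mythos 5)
Your proposal matches the paper's proof of Lemma~\ref{lemma:clustering-reduction_to_weighted} essentially step by step: the Lagrangian reduction with opening cost $\OPT/k$, the bound $|F|\le 2\gamma k$, constructing $P_w$ by snapping each point to its (approximately) assigned facility, the generalized-triangle-inequality bound on $\OPTkz(P_w)$, and the parallel guessing of $\OPT$ over $O(\log(n\Delta))$ powers of two with selection of the cheapest candidate among those opening at most $2\gamma k$ facilities. You are somewhat more explicit than the paper about two bookkeeping points that the paper only gestures at — how the cost of each guess is estimated via approximate nearest-neighbor search (and why selecting by \emph{estimated} cost suffices), and carrying the $2^{z-1}$ factor from the generalized triangle inequality before absorbing it into $O(\gamma)$ — but these are fleshed-out details rather than a different route, so the two proofs coincide.
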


\begin{proof}
	Assume first we have foreknowledge of a guess $\OPTkzGuess$
	such that $\OPTkz(P)\le \OPTkzGuess\le 2\OPTkz(P)$. 
	Consider $P$ as an instance of Power-$z$ Facility Location with opening cost $\hat{\wopen} := \OPTkzGuess / k$,
	and note that $\OPT^{\hat{\wopen}}_z \le k\cdot \hat{\wopen} + \OPTkz(P) \leq 2\cdot  \OPTkzGuess$ as the optimal solution for \kzC is a particular
	solution to the Facility Location instance (using at most $k$ facilities).
	Now run $\algoFLz$ with this opening cost $\hat{\wopen}$, 
        and obtain a solution with cost at most
        $\gamma\cdot \OPT^{\hat{\wopen}}_z \le 2\gamma \cdot  \OPTkzGuess$\footnote{
        	We assume that the solution specifies an assignment of points to facilities.
        	Otherwise, given just a set of facilities (distributed across machines), we run
			an MPC algorithm for approximate nearest neighbor search,
			for instance, based on LSH or \Cref{thm:mpc}; see \Cref{sec:nearest-neighbor-search}.
			We assume that $\gamma$ accounts for the potential increase of the connection cost
			due to possibly not finding the closest facility to each point.
	    }.
	The number of facilities used by this approximate solution is clearly at most
	\[
          \frac{\gamma\cdot \OPT^{\hat{\wopen}}_z}{\hat{\wopen}}
          \le \frac{2\gamma\cdot \OPTkzGuess}{\OPTkzGuess / k}
          = 2\gamma\cdot k\,.
        \]
        Given this approximate solution,
        ``move'' every input point $p\in P$ to its assigned facility (which is approximately the nearest one)
	and give each facility weight equal to the number of points moved to it. 
	This can be implemented in $O(\log_{s}n)$ rounds by sorting (using the algorithm of Lemma~\ref{lemma:sorting}).
	Hence, the facilities form a set of at most $2\gamma\cdot k$ weighted points $P_w$ of total weight exactly $n$.
        Since the points are moved using an approximate solution 
        whose connection cost is at most $2\gamma \cdot \OPTkzGuess$, 
	we have by the triangle inequality that
$\OPTkz(P_w) \le 2\gamma \cdot \OPTkzGuess + \OPTkz(P) \le O(\gamma)\cdot \OPTkz(P)$.
			
	Finally, we deal with the assumption of knowing an approximation of $\OPTkz(P)$ in advance.
	Recall that we assume $P$ has more than $k$ distinct points, 
	and that $1 \le \OPTkz(P) \le O(n \Delta)$ where $\Delta$ is the aspect ratio of $P$.
	We try in parallel all powers of 2 between $1$ and $O(n \Delta)$ as ``guesses'' for $\OPTkz(P)$.
	That is, we execute $\algoFLz$ with opening cost $2^i /k$ for
	$i = 0, \ldots, \lfloor \log_2 O(n\Delta)\rfloor$, and
	we condition on that every call of $\algoFLz$ is successful (i.e., returns a solution satisfying the approximation guarantee),
	which happens with probability at least $1-1/\poly(n)$.
	We then choose the cheapest approximate solution that uses at most $2\gamma\cdot k$ facilities,
        which succeeds because at least one guess is in the desired range. 
	Such guessing of $\OPTkz$ requires a factor of $O(\log (n \Delta))$ extra total space.
\end{proof}

Note that if $O(\gamma\cdot k)$ is less than the machine space $s$, one can in principle
solve \kzC on the weak coreset using just one machine. Thus the main challenge is to do this
for large $k \gg s$.

\subsection{Sequential Algorithm for Weak Coresets}

Suppose we are given a set of points $P_w \subseteq \RR^d$ with weights $w : P_w \to \mathbb{R}_+$, and 
$|P_w| \leq \gamma \cdot k$ for some $\gamma \geq 1$.
Furthermore, assume we have a guess $\OPTkzGuess$ that $2$-approximates $\OPTkz(P_w)$,
i.e., $\OPTkz(P_w)\le \OPTkzGuess\le 2\cdot \OPTkz(P_w)$
(we deal with this assumption in Section~\ref{sec:clustering-parallel} for the parallel version,
similarly as in Lemma~\ref{lemma:clustering-reduction_to_weighted}).
Finally, the algorithm gets $\mu > 0$ and will ensure that at most $\mu\cdot k$ extra centers are used.

The sequential algorithm to find a set of centers $C$ is simple: 
We go over the points from the heaviest to the lightest and open a center at the current point $p$
if it is far from all previously opened centers. The right distance threshold
turns out to be $2 \big[ \OPTkzGuess / (\mu\cdot k\cdot w(p)) \big]^{1/z}$;
note that it is larger for points with smaller weight.
See Algorithm~\ref{alg:sequential_clustering} for details.

\begin{algorithm}[H]
	\caption{Sequential clustering algorithm for weak coresets}
	\label{alg:sequential_clustering}
	\begin{algorithmic}[1]
		\Require{point set $P_w$ weighted by $w : P_w \to \mathbb{N}$ and a guess $\OPTkzGuess$ for $\OPTkz(P_w)$}
		\State initialize $C \gets \emptyset$
		\For{each $p \in P_w$ in the non-increasing order by weight (breaking ties arbitrarily)}
			\If{$\dist(p, C)^z\cdot w(p) > 2^z \cdot \OPTkzGuess / (\mu\cdot k)$}
				\State add $p$ to $C$
			\EndIf
		\EndFor
	\end{algorithmic}
\end{algorithm}

Let $C$ be the set of centers computed by the sequential algorithm.
The next two lemmas bound the weighted clustering cost for $C$,
which is straightforward,
and the number of centers in $C$, which is the main part. 

\begin{lemma}\label{lemma:clustering-sequential-connection_cost}
  If the guess is correct then 
  $\cl_z(P_w, C) \le 2^{z+1}\gamma\cdot \OPTkz(P_w) / \mu$.
\end{lemma}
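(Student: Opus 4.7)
The bound follows directly from the greedy stopping rule, and the plan is a one-paragraph calculation with only a small amount of bookkeeping.

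First I would classify the points of $P_w$ into those added to $C$ (contributing zero to the connection cost) and those skipped by the algorithm. For a skipped point $p$, the algorithm's test failed at the moment $p$ was processed, so if $C_p \subseteq C$ denotes the set of centers that had been opened before $p$ was processed, we have
\[
  \dist(p, C_p)^z \cdot w(p) \;\le\; 2^z \cdot \OPTkzGuess / (\mu\cdot k).
\]
Since $C_p \subseteq C$, monotonicity gives $\dist(p, C)^z \le \dist(p, C_p)^z$, and therefore every single point of $P_w$ contributes at most $2^z \cdot \OPTkzGuess / (\mu\cdot k)$ to the weighted cost $\cl_z(P_w, C)$.

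Summing this uniform per-point bound over all $p \in P_w$ and using $|P_w| \le \gamma\cdot k$ yields
\[
  \cl_z(P_w, C) \;\le\; |P_w| \cdot \frac{2^z \cdot \OPTkzGuess}{\mu\cdot k} \;\le\; \frac{2^z \gamma \cdot \OPTkzGuess}{\mu}.
\]
Finally, plugging in the guess assumption $\OPTkzGuess \le 2\cdot \OPTkz(P_w)$ gives the claimed inequality $\cl_z(P_w, C) \le 2^{z+1}\gamma\cdot \OPTkz(P_w)/\mu$.

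There is no real obstacle here: the heart of the argument is the simple observation that a skipped point pays at most the threshold value, and the size bound $|P_w| \le \gamma k$ from the weak-coreset guarantee turns a per-point bound into the total bound. The only small subtlety is invoking monotonicity of $\dist(p, \cdot)$ under set inclusion to pass from the ``time-of-processing'' center set $C_p$ to the final set $C$, which is why the threshold rule suffices even though $C$ grows over the course of the algorithm.
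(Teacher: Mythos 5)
Your proof is correct and follows the same approach as the paper's: bound each skipped point's contribution by the threshold, multiply by $|P_w| \le \gamma k$, and apply the guess assumption. You are slightly more explicit than the paper in introducing $C_p$ and invoking monotonicity of $\dist(p,\cdot)$ to pass from the time-of-processing center set to the final $C$, a detail the paper leaves implicit.
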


\begin{proof}
	Since every point $p\in P_w \setminus C$ satisfies $\dist(p, C)^z\cdot w(p) \le 2^z\cdot \OPTkzGuess / (\mu\cdot k)$
	and there are at most $\gamma \cdot k$ distinct points in $P_w$,
	we have that
	\[
          \cl_z(P_w, C) = \sum_{p \in P_w} \dist(p, C)^z\cdot w(p) \le \gamma \cdot k\cdot 2^z\cdot \OPTkzGuess / (\mu\cdot k) \le 2^{z+1}\cdot \gamma\cdot \OPTkz(P_w) / \mu,
        \]
	where the last inequality uses that $\OPTkzGuess$ is a $2$-approximation of $\OPTkz(P_w)$.
\end{proof}

\begin{lemma}\label{lemma:clustering-sequential-centers}
	$|C| < (1 + \mu) k$.
\end{lemma}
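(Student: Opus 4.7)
My plan is to charge every ``extra'' center in $C$ beyond $k$ against the cost $\OPTkz(P_w)$ of an optimum $k$-center solution $C^*$. I would fix such a $C^*$ together with a nearest-center map $\sigma:P_w\to C^*$, and for every $c^*\in C^*$ define $A(c^*):=\{p\in C:\sigma(p)=c^*\}$; writing $r_p:=\dist(p,\sigma(p))$, the contribution of $A(c^*)$ to $\OPTkz(P_w)$ is $\sum_{p\in A(c^*)}r_p^z\,w(p)$. Since at most $k$ clusters $A(c^*)$ are nonempty, $|C|-k\le\sum_{c^*:A(c^*)\neq\emptyset}(|A(c^*)|-1)$, so it would suffice to show that each cluster of size $m\ge 1$ contains at least $m-1$ points $p$ whose individual contribution $r_p^z\,w(p)$ exceeds $\OPTkzGuess/(\mu k)$.

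Inside a cluster, I would order its points $p_1,\ldots,p_m$ by the time the algorithm inserts them into $C$; since the greedy rule processes by non-increasing weight, $w(p_1)\ge\cdots\ge w(p_m)$. For every pair $i<j$, the insertion condition for $p_j$ gives $\dist(p_i,p_j)>2(\OPTkzGuess/(\mu k\cdot w(p_j)))^{1/z}$, while the triangle inequality through $c^*$ gives $\dist(p_i,p_j)\le r_{p_i}+r_{p_j}$. Hence at least one of $r_{p_i},r_{p_j}$ exceeds $(\OPTkzGuess/(\mu k\cdot w(p_j)))^{1/z}$: if it is $r_{p_j}$ then directly $r_{p_j}^z\,w(p_j)>\OPTkzGuess/(\mu k)$; if it is $r_{p_i}$ then, using $w(p_i)\ge w(p_j)$, $r_{p_i}^z\,w(p_i)\ge r_{p_i}^z\,w(p_j)>\OPTkzGuess/(\mu k)$. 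Calling a point \emph{expensive} when $r_p^z\,w(p)>\OPTkzGuess/(\mu k)$, every pair in $A(c^*)$ has an expensive endpoint, so the non-expensive points form an independent set in the complete graph on $A(c^*)$ and thus number at most one, yielding at least $m-1$ expensive points.

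To finish, I would sum across clusters and use $\OPTkzGuess\ge\OPTkz(P_w)$: assuming $|C|>k$,
\begin{equation*}
\OPTkz(P_w) \;\ge\; \sum_{p\in C} r_p^z\,w(p) \;>\; (|C|-k)\cdot \OPTkzGuess/(\mu k) \;\ge\; (|C|-k)\cdot \OPTkz(P_w)/(\mu k).
\end{equation*}
If $\OPTkz(P_w)>0$, dividing yields $|C|<(1+\mu)k$ immediately. The degenerate case $\OPTkz(P_w)=0$ forces $P_w$ to contain at most $k$ distinct points, so the greedy rule opens at most $k\le(1+\mu)k$ centers and the bound holds trivially; the case $|C|\le k$ is also immediate.

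The subtle point I expect will be in the per-cluster argument: the insertion criterion for $p_j$ only directly involves $w(p_j)$, yet I need to convert the disjunction $\max(r_{p_i},r_{p_j})>(\OPTkzGuess/(\mu k\cdot w(p_j)))^{1/z}$ into a per-point bound of the form $r_p^z\,w(p)>\OPTkzGuess/(\mu k)$ for the \emph{specific} endpoint called expensive. The weight-monotone insertion order is exactly what upgrades $w(p_j)$ to $w(p_i)$ in the ``$r_{p_i}$ dominates'' case; this interplay between the greedy order and the triangle inequality is the only nontrivial step, after which the vertex-cover observation on $K_m$ is immediate.
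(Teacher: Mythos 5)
Your proof is correct and takes essentially the same approach as the paper: fix an optimal $k$-clustering, use the greedy insertion test together with the non-increasing weight order to argue that each optimal cluster contains at most one center of $C$ whose weighted connection cost falls below the threshold $\OPTkzGuess/(\mu k)$, then sum over clusters to bound $|C|-k$ against $\OPTkz(P_w)$. The only cosmetic difference is that you apply the ordinary triangle inequality $\dist(p_i,p_j)\le r_{p_i}+r_{p_j}$ and then raise to the $z$-th power, whereas the paper applies the generalized power-$z$ triangle inequality (Fact~\ref{lemma:tri_ine_z}) directly to $\dist(p_a,p_b)^z$; the two calculations yield the same constants because the algorithm's threshold already carries the matching $2^z$ factor.
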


\begin{proof}
	Suppose otherwise, i.e., $|C| \ge (1 + \mu)\cdot k$.
	Consider an optimal \kzC solution of cost $\OPTkz(P_w)$ and a cluster $P_j\subseteq P_w$ of this solution centered at $q$.
	Let $\ell_j := |C\cap P_j|$ be the number of points from $C$ in cluster $P_j$.
	The goal is to show that the weighted connection cost of any cluster $P_j$ is more than $(\ell_j-1)\cdot \OPTkz(P_w) / (\mu\cdot k)$ in the optimal solution.
	This is sufficient to get a contradiction, 
	since
	\begin{align*}
		\sum_j (\ell_j-1)\cdot \frac{\OPTkz(P_w)}{\mu\cdot k}
		&\ge |C|\cdot \frac{\OPTkz(P_w)}{\mu\cdot k} - k\cdot \frac{\OPTkz(P_w)}{\mu\cdot k}
		\\
		&\ge (1 + \mu)\cdot k\cdot \frac{\OPTkz(P_w)}{\mu\cdot k} - k\cdot \frac{\OPTkz(P_w)}{\mu\cdot k}
		= \OPTkz(P_w)
		\,,
	\end{align*}
	where the first inequality uses that $\sum_j \ell_j = |C|$ and that there are at most $k$ clusters,
	and the second inequality follows from $|C| \ge (1 + \mu)\cdot k$.
	This implies that the cost of the optimal solution is more than $\OPTkz(P_w)$, a contradiction.

	To analyze the cost of cluster $P_j$,
	let $p_1, p_2, \dots, p_\ell$ be the points from $C$ in cluster $P_j$ in the non-increasing order by their weights, i.e.,
	$w(p_1)\ge w(p_2)\ge \cdots\ge w(p_\ell)$ (breaking ties in the same way as in the sequential algorithm).
	We claim that at most one point $p_i$ satisfies $\dist(p_i, q)^z\cdot w(p_i) \le \OPTkz(P_w) / (\mu\cdot k)$, i.e.,
	its weighted connection cost is too low.
	Suppose otherwise, i.e., that there are points $p_a, p_b \in P_j$ with $a < b$ that have such low weighted connection cost to $q$.
	Then
	\begin{align*}
		\dist(p_a, p_b)^z\cdot w(p_b)
		&\le 2^{z-1}\cdot \dist(p_a, q)^z\cdot w(p_b) + 2^{z-1}\cdot \dist(p_b, q)^z\cdot w(p_b)
		\\
		&\le 2^{z-1}\cdot \dist(p_a, q)^z\cdot w(p_a) + 2^{z-1}\cdot \dist(p_b, q)^z\cdot w(p_b)
		\\
		&\le 2^z\cdot \frac{\OPTkz(P_w)}{\mu\cdot k}
		\le 2^z\cdot \frac{\OPTkzGuess}{\mu\cdot k}\,,
	\end{align*}
	where the first inequality is by the generalized triangle inequality (Lemma~\ref{lemma:tri_ine_z}),
	the second step is by $w(p_a) \ge w(p_b)$ as $a < b$,
	the third inequality uses the assumptions on $\dist(p_a, q)^z\cdot w(p_a)$ and $\dist(p_b, q)^z\cdot w(p_b)$,
	and the last inequality follows from that $\OPTkzGuess$ is not an underestimation by assumption.
	This contradicts that the sequential algorithm opens a center at $p_b$, since the center at $p_a$ was added to $C$ earlier.
	
	Thus, the total connection cost of cluster $P_j$ is more than $(\ell_j-1)\cdot \OPTkz(P_w) / (\mu\cdot k)$, which is a contradiction
	with that the optimal clustering cost is $\OPTkz(P_w)$, as explained above.
\end{proof}

\subsection{Parallel Algorithm for Weak Coresets}
\label{sec:clustering-parallel}

In Algorithm~\ref{alg:offline_bi_criteria_clustering},
we give a parallel version of the sequential algorithm,
without MPC implementation details.
The input is again a weak coreset $P_w$ and the algorithm requires a guess $\OPTkzGuess$ of the optimal cost for $P_w$,
which we use to set parameter $\rho := 2^z \cdot \OPTkzGuess / (\mu\cdot k)$.
We subsequently analyze the algorithm, assuming that $\OPTkz(P_w)\le \OPTkzGuess\le 2\cdot \OPTkz(P_w)$.
We then remove the assumption of having such a guess $\OPTkzGuess$.
Finally, we show how to implement Algorithm~\ref{alg:offline_bi_criteria_clustering} in the MPC model.
These steps rely on similar ideas as we used for Facility Location, adjusted to the setting
of \kzC for a bounded number of weighted points.

We use rules (C1) and (C2) similar to rules (P1) and (P2) in Algorithm~\ref{alg:offline_choose_fac} for Facility Location.
We note the crucial differences: (C1) opens a center at any point with \emph{constant} probability $\mu / \gamma$,
unlike in (P1) where the probability depends on the $r_p$ value.
The constant probability is possible here as our instance is a coreset with bounded number of distinct points.
For (C2), instead of looking for the smallest label in a neighborhood of radius roughly $r_p$,
we find the maximum weight in a ball of radius $\sqrt[z]{\rho / w(p)}$; thus, 
the weight can be thought of as an analogue of the inverse $r_p$ value.

We note that we perturb the weights at random to $w'(p)$ in line \ref{ln:alg-clustering-parallel-perturb} of Algorithm~\ref{alg:offline_bi_criteria_clustering}
to resolve ties between same-weight points in a consistent and easy-to-parallelize way (without affecting the original partial order).
We assume that for any $p\neq q$, $w'(p)\neq w'(q)$.

To efficiently implement rule (C2), we again require a method that finds 
for every point $q \in P_w$,
the minimum perturbed weight in a set $A^\beta_{P_w}(p, \sqrt[z]{\rho / w(p)})$
that approximates $ B_{P_w}(p, \sqrt[z]{\rho / w(p)})$,
specifically
$ B_{P_w}(p, \sqrt[z]{\rho / w(p)}) \subseteq A^\beta_{P_w}(p, \sqrt[z]{\rho / w(p)}) \subseteq B_{P_w}(p, \beta\sqrt[z]{\rho / w(p)})$ 
for some $\beta > 1$.
To obtain such an approximation, we later explain how to employ the primitive provided in \Cref{thm:mpc}.

\begin{algorithm}[H]
   \caption{Parallel clustering algorithm}
   \label{alg:offline_bi_criteria_clustering}
    \begin{algorithmic}[1]
    	\Require{parameter $\rho > 0$, point set $p \in P_w$ weighted by $w : P_w \to \mathbb{N}$}
    	\Require{primitive evaluates, given point $q\in P_w$ and radius $r>0$, the minimum function over 
    		a set $A^\beta_{P_w}(p, r)$ that approximates $B_{P_w}(p, r)$}
        \State initialize $\hat{C} \gets \emptyset$
        \State for each $p \in P_w$, let $w'(p) \gets w(p) + h(p)$ be the perturbed weight, where $h(p)\in [0, 1)$ is uniformly random
        	\label{ln:alg-clustering-parallel-perturb}
        \State for each $p \in P_w$
            \begin{itemize}
                \item[(C1)] add $p$ to $\hat{C}$ with probability $\mu / \gamma$ (independently of other points and of $h$)
                \item[(C2)] add $p$ to $\hat{C}$ if $p$ has the largest perturbed weight in a set $A^\beta_{P_w}(p, \sqrt[z]{\rho / w(p)})$ 
\end{itemize}
\end{algorithmic}
\end{algorithm}

\paragraph{Parallel Algorithm Analysis.}
We first bound $|\hat{C}|$.
In a nutshell, that the number of centers selected by (C2) is bounded by $(1+\mu)\cdot k$ follows directly 
from the analysis of the sequential algorithm. For (C1), the number of centers is just the sum of independent Bernoulli random variables.

\begin{lemma}\label{lemma:clustering-parallel-centers}
  $\E[|\hat{C}|] < (1 + 2\mu) k$ and
  $\Pr[ |\hat{C}| \geq (1 + 3\mu) k ] \leq \exp(-\mu k / 3)$.
\end{lemma}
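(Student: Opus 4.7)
The plan is to bound $|\hat{C}|$ by separately analyzing the contributions of rules (C1) and (C2). Let $C_1 \subseteq \hat{C}$ and $C_2 \subseteq \hat{C}$ denote the sets of points added by the respective rules (not necessarily disjoint), so $|\hat{C}| \le |C_1|+|C_2|$.

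First, I would argue that $|C_2| < (1+\mu)k$ \emph{deterministically}, by reducing to Lemma~\ref{lemma:clustering-sequential-centers}. Consider running the sequential Algorithm~\ref{alg:sequential_clustering} on $P_w$, processing points in non-increasing order of the \emph{perturbed} weight $w'(p)$ (this refines the order by $w(p)$ because $w(p)\in \NN$ and $h(p)\in[0,1)$, which also provides the ``arbitrary'' tie-breaking required by the sequential algorithm). Call its output $C_{\mathrm{seq}}$. The key claim is $C_2 \subseteq C_{\mathrm{seq}}$: if $p$ is added by (C2), then $p$ has the maximum $w'$-value in $A^\beta_{P_w}(p,\sqrt[z]{\rho/w(p)})$, and in particular in the smaller set $B_{P_w}(p,\sqrt[z]{\rho/w(p)}) \subseteq A^\beta_{P_w}(p,\sqrt[z]{\rho/w(p)})$. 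When the sequential algorithm processes $p$, every center $q$ already in $C_{\mathrm{seq}}$ satisfies $w'(q)>w'(p)$, and so must lie outside $B_{P_w}(p,\sqrt[z]{\rho/w(p)})$, i.e.\ $\dist(p,q)>\sqrt[z]{\rho/w(p)}$; this is exactly the condition under which the sequential algorithm opens $p$. Hence $C_2 \subseteq C_{\mathrm{seq}}$, and Lemma~\ref{lemma:clustering-sequential-centers} yields $|C_2|\le |C_{\mathrm{seq}}| < (1+\mu)k$.

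Second, I would bound $|C_1|$ by elementary independent-trials arguments. Since $|P_w|\le\gamma k$ and each point is independently selected with probability $\mu/\gamma$, the random variable $|C_1|$ is stochastically dominated by $Y \sim \mathrm{Bin}(\gamma k,\mu/\gamma)$, which has mean $\mu k$. Linearity of expectation then gives
\[
\E[|\hat C|] \le |C_2| + \E[|C_1|] < (1+\mu)k + \mu k = (1+2\mu)k,
\]
establishing the first part of the lemma. For the tail bound, apply the multiplicative Chernoff bound to $Y$ with $\delta=1$: $\Pr[Y\ge 2\mu k] \le \exp(-\mu k/3)$. Since $|C_2|<(1+\mu)k$ holds deterministically, the event $\{|\hat C|\ge(1+3\mu)k\}$ forces $\{|C_1|\ge 2\mu k\}$, which has probability at most $\exp(-\mu k/3)$.

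The main obstacle, and the only nontrivial step, is the subset claim $C_2 \subseteq C_{\mathrm{seq}}$. Getting it right requires two ingredients to line up simultaneously: (i) the \emph{perturbed} weights impose a total order refining the original weight order, so that the sequential algorithm can be run under the same tie-breaking as the parallel rule (C2); and (ii) the approximate ball $A^\beta$ is a \emph{superset} of the true ball $B$, so that ``maximum over $A^\beta$'' is strictly stronger than the condition needed by the sequential analysis. Once these are established, the remaining arithmetic and the Chernoff application are routine.
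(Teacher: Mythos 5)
Your proof is correct and follows essentially the same route as the paper's: decompose $\hat{C}$ into the (C1)- and (C2)-contributions, observe that the (C2)-selected points form a subset of what the sequential algorithm (with tie-breaking by perturbed weight) would open so that Lemma~\ref{lemma:clustering-sequential-centers} gives the deterministic $(1+\mu)k$ bound, and then control the (C1) part by a Binomial/Chernoff argument. Your write-up just spells out the subset claim $C_2\subseteq C_{\mathrm{seq}}$ and the choice of Chernoff parameter a bit more explicitly than the paper does.
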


\begin{proof}
	We first analyze the centers opened by (C2). The key observation is that if Algorithm~\ref{alg:offline_bi_criteria_clustering}
	opens a center at $p$, then the sequential algorithm (with tie-breaking given by the perturbed weights $w'(p)$)
	also opens a center at $p$. Indeed, this follows since if $p$ passes the test in (C2), then 
	it must have the largest weight in $A^\beta_{P_w}(p, \sqrt[z]{\rho / w(p)})\supseteq B_{P_w}(p, \sqrt[z]{\rho / w(p)})$ and thus,
	using the definition of $\rho = 2^z \cdot \OPTkzGuess / (\mu\cdot k)$,
	the sequential algorithm adds $p$ to $C$.
	(In more detail, note that all points $q\in B(p, \sqrt[z]{\rho / w(p)})$ satisfy 
	$\dist(p, q)^z\cdot w(p) \le \rho$ and $\rho$ is the threshold in the sequential algorithm.)
	Using Lemma~\ref{lemma:clustering-sequential-centers}, we get that (C2) opens less than $(1 + \mu)\cdot k$ centers
        (with probability $1$ because \Cref{lemma:clustering-sequential-centers} is deterministic).
	
	To analyze (C1), each of $\gamma \cdot k$ distinct points is selected with probability $\mu / \gamma$, 
	which together with the bound for (C2) implies $\E[|\hat{C}|] < (1 + 2\mu)\cdot k$.
	To get the high-probability bound, we apply the multiplicative Chernoff bound for a sum of independent Bernoulli random variables in a straightforward way.
\end{proof}

\begin{lemma}\label{lemma:clustering-parallel-connection_cost}
	$\E[\cl_z(P_w, \hat{C})] \le O(2^z\cdot \beta^z\cdot \gamma^2\cdot \OPTkz(P_w) / \mu^2)$.
\end{lemma}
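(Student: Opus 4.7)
The plan is to bound the expected connection cost of each point $p\in P_w$ using a reassignment-sequence analysis that mirrors the proof of \Cref{lemma:offline_fac_finding} for \pzFL. Fix $p\in P_w$ and construct a random sequence $S=(x_0=p, x_1, \ldots, x_t)$ in which $x_{i+1}$ is the point of maximum perturbed weight $w'$ in $A^\beta_{P_w}(x_i, r_i)$ with $r_i:=\sqrt[z]{\rho/w(x_i)}$; stop at step $i$ when either (C1) opens some point in $A^\beta_{P_w}(x_i, r_i)$ or when $x_i$ has the maximum perturbed weight in that set so that (C2) opens it. Because perturbed weights strictly increase along $S$, the sequence terminates within $|P_w|$ steps. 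By the triangle inequality, $\dist(p, \hat{C})\le \beta \sum_{i=0}^{t} r_i$, since the stopping condition guarantees that some point within distance $\beta r_t$ of $x_t$ lies in $\hat C$.

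After rounding weights down to powers of $2$ (using the perturbations only for tie-breaking), define a critical subsequence $\crit(S)$ in the spirit of \Cref{def:crit_seq}, consisting of indices at which the rounded weight does not strictly increase from the previous step (so the radius does not strictly decrease). Between consecutive critical indices the rounded weight doubles and hence the radii decrease geometrically, so by an argument analogous to \Cref{lemma:crit} one gets $\bigl(\sum_i r_i\bigr)^z \le 2^{O(z)}\bigl[r_0^z + \bigl(\sum_{x\in\crit(S)} r_x\bigr)^z\bigr]$. This reduces the task to bounding the expected contribution of the critical subsequence, for which I set up the two-parameter induction $\ell(i,r)$ exactly as in \Cref{sec:finding_fac_offline}, yielding the recursion $\ell(i,r)\le \ell(i,r/2)+2^{z-1}[\ell(i+1,\eta r)+\eta^z r^z]/\tau_*$, where $\eta=O(\beta)$ comes from a Lipschitz-type bound on how $r_{x_{i+1}}$ compares to $r_{x_i}$ along a local step, and $\tau_*:=\gamma/\mu$ plays the role of the parameter $\tau$ in \pzFL.

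The main technical obstacle is proving the clustering analogue of \Cref{lemma:crit_prob_cond}: conditional on $\Econd$, the probability of a critical extension is at most $O(\mu/\gamma)$. Following \Cref{sec:proof_lemma_crit_prob}, I decompose the extend event into $\EPone\wedge\EPtwo$ which are conditionally independent given $\Econd$. The sub-event $\EPone$ (that no point in a specific new subset is opened by (C1)) has probability $\exp(-|\Anewgeq|\cdot\mu/\gamma)$ by independence of (C1)'s coin flips across points. The sub-event $\EPtwo$ (that the maximum perturbed weight in $A^\beta_{P_w}(x_t,r_t)$ lies in $\Anewgeq$ rather than at $x_t$) is the delicate part: unlike \pzFL where the random labels provide full symmetry, here only the tie-breaking perturbations are random. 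The key insight is that conditioning on $\Econd$ only reveals a partial order on the perturbed weights from prior iterations, and the perturbations are otherwise i.i.d.\ uniform; then a permutation-symmetry argument along the lines of \Cref{lemma:partial_order} shows $\Pr[\EPtwo\mid \Econd]\le |\Anewgeq|/|A^\beta_{P_w}(x_t,r_t)|$. Combining via $x e^{-x}=O(1)$ yields the desired $O(\mu/\gamma)$ bound.

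Unrolling the induction gives $\ell(0, r_p)=O(r_p^z)$, and hence, combining with the decomposition above and the standard bound $w(p)\cdot r_p^z=\rho$, one obtains $\E[\dist(p,\hat{C})^z \cdot w(p)] \le 2^{O(z)}\beta^z(\gamma/\mu)\,\rho$, where the additional $\gamma/\mu$ factor arises because the non-critical ``runs'' between consecutive critical steps have expected length bounded by the inverse of the (C1)-opening probability. Summing over the $|P_w|\le \gamma k$ distinct points of the weak coreset and substituting $\rho \le 2^{z+1}\OPTkz(P_w)/(\mu k)$ yields the stated bound $\E[\cl_z(P_w, \hat{C})] \le O(2^z \beta^z \gamma^2 \OPTkz(P_w)/\mu^2)$.
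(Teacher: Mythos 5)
Your proposal imports the entire critical-subsequence and two-parameter-induction machinery of the \pzFL analysis (\Cref{def:crit_seq}, \Cref{lemma:crit}, the recursion on $\ell(i,r)$, and \Cref{lemma:crit_prob_cond}), but the paper's proof is deliberately much simpler and does not use any of it. The key structural difference from \pzFL is that the perturbed weight $w'(p_i)$ is \emph{strictly increasing} along the assignment sequence, so the radii $\rho_i = \sqrt[z]{\rho/w(p_i)}$ are automatically non-increasing. Hence there is no need to skip over geometrically-decaying subsequences: the paper directly bounds the weighted connection cost by (essentially) $t\cdot\beta^z\rho$ and then shows $\E[t]\le\gamma/\mu$ via a one-line geometric survival argument, using that each further step requires at least one \emph{new} point to appear in $A^\beta_{P_w}(p_\ell,\rho_\ell)$, and any such new point is opened by (C1) independently with probability $\mu/\gamma$.

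More importantly, the step you need does not actually hold in the clustering setting, so the import is not just over-engineered but broken. Your analogue of \Cref{lemma:crit_prob_cond} claims $\Pr[\Eextend\mid\Econd]\le\mu/\gamma$ by combining $\Pr[\EPone\mid\Econd]\le\exp(-|\Anewgeq|\cdot\mu/\gamma)$ with $\Pr[\EPtwo\mid\Econd]\le|\Anewgeq|/|A^\beta_{P_w}(x_t,r_t)|$ and ``$xe^{-x}=O(1)$''. In the \pzFL proof this calculation works only because \Cref{fact:rp} gives a density lower bound $|A^\beta_P(x_t,\hat r_{x_t})|\ge|B_P(x_t,r_{x_t})|\ge\wopen/\hat r_{x_t}^z$, which converts the exponent $|\Anewgeq|\cdot\tau\hat r_{x_t}^z/\wopen$ into $\tau\cdot|\Anewgeq|/|A^\beta|$ and lets $x e^{-\tau x}\le 1/\tau$ close the argument. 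In the coreset setting there is no analogous lower bound on $|A^\beta_{P_w}(x_t,r_t)|$: the approximate ball may contain only $x_t$ itself and one new point $q$ with $w(q)=w(x_t)$ (after rounding). Then $\Pr[\EPtwo\mid\Econd]$ can be as large as $1/2$ while $\Pr[\EPone\mid\Econd]=1-\mu/\gamma$, so the product is $\Theta(1)$, not $O(\mu/\gamma)$. Your final ``the additional $\gamma/\mu$ factor arises because the non-critical runs have expected length $\gamma/\mu$'' also conflicts with the claim $\ell(0,r_p)=O(r_p^z)$, which in the \pzFL template already absorbs the non-critical runs; these two statements together indicate the decomposition has not actually been carried through. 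The fix is exactly what the paper does: abandon the $\EPone\wedge\EPtwo$ split entirely, observe that ``sequence continues past step $\ell$'' already implies the existence of a new point not opened by (C1), and bound $\E[t]$ as a geometric random variable with parameter $\mu/\gamma$.
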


\begin{proof}
	We show that the expected weighted connection cost of every point $p\in P_w$ is $\dist(p, \hat{C})^z \cdot w(p)\le O(\beta^z \cdot \gamma \cdot \rho / \mu)$,
	which implies the lemma by summing over all $\gamma \cdot k$ distinct points and using $\rho = 2^z \cdot \OPTkzGuess / (\mu\cdot k)\le 2^{z+1}\cdot \OPTkz(P_w) / (\mu\cdot k)$.
	
	To this end, similarly as in Algorithm~\ref{alg:def_seq}, consider the assignment sequence $p_0 = p, p_1, \dots, p_t$
	such that $p_t\in \hat{C}$ and for $i = 0, \dots, t-1$, $p_i\notin \hat{C}$ and $p_{i+1}$ is the point with the largest perturbed weight in $A^\beta_{P_w}(p_i, \rho_i)$,
	where we let $\rho_i := \sqrt[z]{\rho / w(p_i)}$ to simplify notation.
	We first observe that $w'(p_{i+1})\ge w'(p_i)$, by the definition of $p_{i+1}$.
	It follows that the weighted connection cost for $p$ is at most $t\cdot \beta^z \cdot \rho$ since for any $i < t$, we have that $A^\beta_{P_w}(p_i, \rho_i) \subseteq B(p_i, \beta\cdot \rho_i)$ and $w(p_{i+1})\ge w(p_i)$.
	It thus remains to show that $\E[t] \le \gamma / \mu$.
	
	We claim that for any $\ell \ge 0$, $\Pr[t > \ell] \le (1 - \mu / \gamma)^\ell$,
	which we prove by induction on $\ell$. For $\ell = 0$, the right-hand side is 1.
	For $\ell > 0$, we show that $\Pr[t > \ell \mid t > \ell - 1] \le (1 - \mu / \gamma)$, which clearly implies the inductive step. 
	Condition on $t > \ell - 1$ and 
	observe that if $A^\beta_{P_w}(p_\ell, \rho_\ell) \subseteq \bigcup_{i = 0}^{\ell - 1} A^\beta_{P_w}(p_i, \rho_i)$,
	then $p_\ell$ must have the largest weight in $ A^\beta_{P_w}(p_\ell, \rho_\ell)$, so it would be selected by (C2).
	Thus, for $t > \ell$, we must have that $A^\beta_{P_w}(p_\ell, \rho_\ell)$ contains at least one point $q\in P_w$
	that is not in $\bigcup_{i = 0}^{\ell - 1} A^\beta_{P_w}(p_i, \rho_i)$. 
	Point $q$ is selected by (C1) with probability $\mu / \gamma$, which implies that $\Pr[t > \ell \mid t > \ell - 1] \le (1 - \mu / \gamma)$.
	
	Finally, $\Pr[t > \ell] \le (1 - \mu / \gamma)^\ell$ implies that $\E[t] \le \gamma / \mu$ as the distribution of $t$ 
	is bounded by the geometric distribution.
\end{proof}

\paragraph{Guessing $\OPTkz(P_w)$.}
We run Algorithm~\ref{alg:offline_bi_criteria_clustering} in parallel with
$\OPTkzGuess = 2^i$ for every $i = 0, \ldots, \lfloor \log_2 (O(n\Delta))\rfloor$;
here we use that $1 \le \OPTkz(P_w)\le O(n \Delta)$.
We return the cheapest solution that uses at most $(1 + 3\mu)\cdot k$ centers.

Let $i$ be such that $\OPTkz(P_w)\le 2^i < 2\cdot \OPTkz(P_w)$.
We condition on that $|\hat{C}| < (1 + 3\mu)\cdot k$ when we run Algorithm~\ref{alg:offline_bi_criteria_clustering} with $\OPTkzGuess = 2^i$;
this holds with probability at least $1 - \exp(-\mu\cdot k / 3)$ by Lemma~\ref{lemma:clustering-parallel-centers}.
Thus, the algorithm will find a solution with less than $(1 + 3\mu)\cdot k$ centers.
Then Lemma~\ref{lemma:clustering-parallel-connection_cost} shows that the expected weighted clustering cost is
$O(2^z\cdot \beta\cdot \gamma^2\cdot \OPTkz(P_w) / \mu^2)$ for the value of $i$ chosen above,
and our algorithm may only return a cheaper solution. 
To get high-probability bounds, we run Algorithm~\ref{alg:offline_bi_criteria_clustering} with every $\OPTkzGuess = 2^i$ in parallel
$O(\log n)$ times and take the cheapest solution.

\paragraph{MPC Implementation of Algorithm~\ref{alg:offline_bi_criteria_clustering}.}
The MPC implementation of our parallel algorithm for weak coresets closely follows
Algorithm~\ref{alg:mpc_fac}. Perturbing the weights and performing rule (C1) is straightforward
as we do it at every point independently.

To verify the condition in (C2), we use \Cref{thm:mpc}.
Specifically, we first round the weight of every point to the nearest power of $2$, which increases the optimal cost of the coreset by at most a factor of $2$.
After that, there are at most $O(\log n)$ different weights.
Then we apply \Cref{thm:mpc} for every radius $r_i = \sqrt[z]{\rho / 2^i}$ for $i = 0, \dots, \lceil\log_2 n\rceil$
and function $f(S) = \max \{w'(p) : p\in S\}$ for $S \subseteq P_w$.
For a point $p\in P_w$ with $w(p) = 2^i$, we
denote the result of $f(A_P(p, r_i))$ as $\hat{w}'_{\max}(p)$,
and select $p$ as a center, i.e., add it to $\hat{C}$,
if $\hat{w}'_{\max}(p) = w'(p)$.

\medspace

\begin{proof}[Proof of Theorem~\ref{thm:clustering}]
Combining the MPC algorithm of Lemma~\ref{lemma:clustering-reduction_to_weighted} with the MPC implementation of Algorithm~\ref{alg:offline_bi_criteria_clustering} completes the proof of Theorem~\ref{thm:clustering}:
The number of rounds and total space follow from Lemma~\ref{lemma:clustering-reduction_to_weighted}, \Cref{thm:ufl},
 and \Cref{thm:mpc}.
 By Lemmas~\ref{lemma:clustering-parallel-connection_cost}
 and~\ref{lemma:clustering-reduction_to_weighted},
 the expected clustering cost is 
$O(2^z\cdot \beta^z\cdot \gamma^2\cdot \OPTkz(P_w) / \mu^2)
\le O(2^z\cdot \beta^z\cdot \gamma^3\cdot \OPTkz(P) / \mu^2)$;
thus only an $O(1)$-factor larger upper bound holds with constant probability, which we can amplify to $1 - 1/\poly(n)$ by running the algorithm $O(\log n)$-times independently in parallel.
Finally, we set $\gamma = \Gamma^{O(z)}$ from Theorem~\ref{thm:ufl}
and $\beta = O(\Gamma)$ by Theorem~\ref{thm:mpc}.
\end{proof}

\bibliographystyle{alphaurl}
  \bibliography{ref.bib}

\newcommand{\etalchar}[1]{$^{#1}$}
\begin{thebibliography}{MMSW16}

\bibitem[AAH{\etalchar{+}}23]{DBLP:conf/spaa/AhanchiAHKZ23}
AmirMohsen Ahanchi, Alexandr Andoni, MohammadTaghi Hajiaghayi, Marina Knittel,
  and Peilin Zhong.
\newblock Massively parallel tree embeddings for high dimensional spaces.
\newblock In {\em Proceedings of the 35th Annual {ACM} Symposium on Parallelism
  in Algorithms and Architectures (SPAA)}, pages 77--88, 2023.
\newblock \href {https://doi.org/10.1145/3558481.3591096}
  {\path{doi:10.1145/3558481.3591096}}.

\bibitem[ADK09]{DBLP:conf/approx/AggarwalDK09}
Ankit Aggarwal, Amit Deshpande, and Ravi Kannan.
\newblock Adaptive sampling for {$k$}-means clustering.
\newblock In {\em Proceedings of the 12th International Workshop on
  Approximation Algorithms for Combinatorial Optimization, and of the 13th
  International Workshop on Randomization and Approximation Techniques in
  Computer Science (APPROX/RANDOM)}, pages 15--28, 2009.

\bibitem[AGK{\etalchar{+}}04]{DBLP:journals/siamcomp/AryaGKMMP04}
Vijay Arya, Naveen Garg, Rohit Khandekar, Adam Meyerson, Kamesh Munagala, and
  Vinayaka Pandit.
\newblock Local search heuristics for $k$-median and facility location
  problems.
\newblock {\em SIAM Journal on Computing}, 33(3):544--562, 2004.

\bibitem[ANOY14]{DBLP:conf/stoc/AndoniNOY14}
Alexandr Andoni, Aleksandar Nikolov, Krzysztof Onak, and Grigory Yaroslavtsev.
\newblock Parallel algorithms for geometric graph problems.
\newblock In {\em Proceedings of the 46th Annual ACM Symposium on Theory of
  Computing (STOC)}, pages 574--583, 2014.
\newblock \href {https://doi.org/10.1145/2591796.2591805}
  {\path{doi:10.1145/2591796.2591805}}.

\bibitem[ANS15]{DBLP:conf/soda/AnNS15}
Hyung{-}Chan An, Ashkan {Norouzi{-}Fard}, and Ola Svensson.
\newblock Dynamic facility location via exponential clocks.
\newblock In {\em Proceedings of the 26th Annual {ACM-SIAM} Symposium on
  Discrete Algorithms (SODA)}, pages 708--721, 2015.

\bibitem[BA10]{DBLP:journals/siamcomp/ByrkaA10}
Jaroslaw Byrka and Karen~I. Aardal.
\newblock An optimal bifactor approximation algorithm for the metric
  uncapacitated facility location problem.
\newblock {\em SIAM Journal on Computing}, 39(6):2212--2231, 2010.

\bibitem[BCIS05]{BCIS05}
Mihai B{\u{a}}doiu, Artur Czumaj, Piotr Indyk, and Christian Sohler.
\newblock Facility location in sublinear time.
\newblock In {\em Proceedings of the 32nd International Colloquium on Automata,
  Languages, and Programming (ICALP)}, pages 866--877, 2005.
\newblock \href {https://doi.org/10.1007/11523468\_70}
  {\path{doi:10.1007/11523468\_70}}.

\bibitem[BCJ{\etalchar{+}}22]{DBLP:conf/focs/BravermanCJKST022}
Vladimir Braverman, Vincent Cohen{-}Addad, Shaofeng~H.{-}C. Jiang, Robert
  Krauthgamer, Chris Schwiegelshohn, Mads~Bech Toftrup, and Xuan Wu.
\newblock The power of uniform sampling for coresets.
\newblock In {\em Proceedings of the 63rd IEEE Annual Symposium on Foundations
  of Computer Science (FOCS)}, pages 462--473, 2022.
\newblock \href {https://doi.org/10.1109/FOCS54457.2022.00051}
  {\path{doi:10.1109/FOCS54457.2022.00051}}.

\bibitem[BCMT22]{BCMT22}
Soheil Behnezhad, Moses Charikar, Weiyun Ma, and Li{-}Yang Tan.
\newblock Almost 3-approximate correlation clustering in constant rounds.
\newblock In {\em Proceedings of the 63rd {IEEE} Annual Symposium on
  Foundations of Computer Science (FOCS)}, pages 720--731, 2022.
\newblock \href {https://doi.org/10.1109/FOCS54457.2022.00074}
  {\path{doi:10.1109/FOCS54457.2022.00074}}.

\bibitem[BEFM21]{DBLP:conf/aaai/BateniEFM21}
MohammadHossein Bateni, Hossein Esfandiari, Manuela Fischer, and Vahab~S.
  Mirrokni.
\newblock Extreme $k$-center clustering.
\newblock In {\em Proceedings of the 35th {AAAI} Conference on Artificial
  Intelligence}, pages 3941--3949, 2021.
\newblock \href {https://doi.org/10.1609/aaai.v35i5.16513}
  {\path{doi:10.1609/aaai.v35i5.16513}}.

\bibitem[BEL13]{DBLP:conf/nips/BalcanEL13}
Maria{-}Florina Balcan, Steven Ehrlich, and Yingyu Liang.
\newblock Distributed $k$-means and $k$-median clustering on general
  communication topologies.
\newblock In {\em {NIPS}}, pages 1995--2003, 2013.

\bibitem[BFL{\etalchar{+}}17]{DBLP:conf/icml/BravermanFLSY17}
Vladimir Braverman, Gereon Frahling, Harry Lang, Christian Sohler, and Lin~F.
  Yang.
\newblock Clustering high dimensional dynamic data streams.
\newblock In {\em {ICML}}, volume~70 of {\em Proceedings of Machine Learning
  Research}, pages 576--585. {PMLR}, 2017.

\bibitem[BGT12]{DBLP:conf/spaa/BlellochGT12}
Guy~E. Blelloch, Anupam Gupta, and Kanat Tangwongsan.
\newblock Parallel probabilistic tree embeddings, {$k$}-median, and buy-at-bulk
  network design.
\newblock In {\em Proceedings of the 24th Annual {ACM} Symposium on Parallelism
  in Algorithms and Architectures (SPAA)}, pages 205--213, 2012.

\bibitem[BKS17]{BKS17}
Paul Beame, Paraschos Koutris, and Dan Suciu.
\newblock Communication steps for parallel query processing.
\newblock {\em Journal of the ACM}, 64(6):40:1--40:58, 2017.
\newblock \href {https://doi.org/10.1145/3125644} {\path{doi:10.1145/3125644}}.

\bibitem[BL12]{BL12}
Yonatan Bilu and Nathan Linial.
\newblock Are stable instances easy?
\newblock {\em Combinatorics, Probability \& Computing}, 21(5):643--660, 2012.
\newblock \href {https://doi.org/10.1017/S0963548312000193}
  {\path{doi:10.1017/S0963548312000193}}.

\bibitem[BLK18]{DBLP:conf/kdd/BachemL018}
Olivier Bachem, Mario Lucic, and Andreas Krause.
\newblock Scalable $k$-means clustering via lightweight coresets.
\newblock In {\em Proceedings of the 24th {ACM} {SIGKDD} International
  Conference on Knowledge Discovery {\&} Data Mining (KDD)}, pages 1119--1127,
  2018.

\bibitem[BMV{\etalchar{+}}12]{DBLP:journals/pvldb/BahmaniMVKV12}
Bahman Bahmani, Benjamin Moseley, Andrea Vattani, Ravi Kumar, and Sergei
  Vassilvitskii.
\newblock Scalable {$K$-Means++}.
\newblock {\em Proc. {VLDB} Endow.}, 5(7):622--633, 2012.

\bibitem[BT10]{BT10}
Guy~E. Blelloch and Kanat Tangwongsan.
\newblock Parallel approximation algorithms for facility-location problems.
\newblock In {\em Proceedings of the 22nd Annual {ACM} Symposium on Parallelism
  in Algorithms and Architectures (SPAA)}, pages 315--324, 2010.
\newblock \href {https://doi.org/10.1145/1810479.1810535}
  {\path{doi:10.1145/1810479.1810535}}.

\bibitem[BW18]{DBLP:conf/icml/BhaskaraW18}
Aditya Bhaskara and Maheshakya Wijewardena.
\newblock Distributed clustering via {LSH} based data partitioning.
\newblock In {\em {ICML}}, volume~80 of {\em Proceedings of Machine Learning
  Research}, pages 569--578. {PMLR}, 2018.
\newblock URL: \url{https://proceedings.mlr.press/v80/bhaskara18a.html}.

\bibitem[CCJ{\etalchar{+}}23]{CCJLW23}
Xi~Chen, Vincent Cohen{-}Addad, Rajesh Jayaram, Amit Levi, and Erik Waingarten.
\newblock {Streaming Euclidean MST to a Constant Factor}.
\newblock In {\em Proceedings of the 55th Annual Symposium on Theory of
  Computing (STOC)}, pages 156--169, 2023.
\newblock \href {https://doi.org/10.1145/3564246.3585168}
  {\path{doi:10.1145/3564246.3585168}}.

\bibitem[CCM23]{DBLP:conf/spaa/CoyCM23}
Sam Coy, Artur Czumaj, and Gopinath Mishra.
\newblock On parallel $k$-center clustering.
\newblock In {\em Proceedings of the 35th Annual {ACM} Symposium on Parallelism
  in Algorithms and Architectures (SPAA)}, pages 65--75, 2023.
\newblock \href {https://doi.org/10.1145/3558481.3591075}
  {\path{doi:10.1145/3558481.3591075}}.

\bibitem[CDP21]{CDP21}
Artur Czumaj, Peter Davies, and Merav Parter.
\newblock Component stability in low-space massively parallel computation.
\newblock In {\em Proceedings of the 40th {ACM} Symposium on Principles of
  Distributed Computing (PODC)}, pages 481--491, 2021.
\newblock \href {https://doi.org/10.1145/3465084.3467903}
  {\path{doi:10.1145/3465084.3467903}}.

\bibitem[CFJ{\etalchar{+}}22]{arxiv.2204.02095}
Artur Czumaj, Arnold Filtser, Shaofeng H.-C. Jiang, Robert Krauthgamer, Pavel
  Veselý, and Mingwei Yang.
\newblock Streaming facility location in high dimension via geometric hashing.
\newblock {\em arXiv preprint arXiv:2204.02095}, 2022.
\newblock The latest version has additional results compared to the preliminary
  version in \cite{CJKVY22}.
\newblock \href {https://arxiv.org/abs/2204.02095} {\path{arXiv:2204.02095}}.

\bibitem[CJK{\etalchar{+}}22]{CJKVY22}
Artur Czumaj, Shaofeng~H.{-}C. Jiang, Robert Krauthgamer, Pavel Vesel{\'{y}},
  and Mingwei Yang.
\newblock Streaming facility location in high dimension via geometric hashing.
\newblock In {\em Proceedings of the 63rd Annual Symposium on Foundations of
  Computer Science (FOCS)}, pages 450--461, 2022.
\newblock \href {https://doi.org/10.1109/FOCS54457.2022.00050}
  {\path{doi:10.1109/FOCS54457.2022.00050}}.

\bibitem[CJLW22]{DBLP:conf/stoc/ChenJLW22}
Xi~Chen, Rajesh Jayaram, Amit Levi, and Erik Waingarten.
\newblock New streaming algorithms for high dimensional {EMD} and {MST}.
\newblock In {\em Proceedings of the 54th Annual Symposium on Theory of
  Computing (STOC)}, pages 222--233, 2022.
\newblock \href {https://doi.org/10.1145/3519935.3519979}
  {\path{doi:10.1145/3519935.3519979}}.

\bibitem[CKPU23]{CKPU23}
M{\'{e}}lanie Cambus, Fabian Kuhn, Shreyas Pai, and Jara Uitto.
\newblock Time and space optimal massively parallel algorithm for the 2-ruling
  set problem.
\newblock In {\em Proceedings of the 37th International Symposium on
  Distributed Computing (DISC)}, pages 11:1--11:12, 2023.
\newblock \href {https://doi.org/10.4230/LIPICS.DISC.2023.11}
  {\path{doi:10.4230/LIPICS.DISC.2023.11}}.

\bibitem[CLN{\etalchar{+}}21]{DBLP:conf/nips/Cohen-AddadLNSS21}
Vincent Cohen{-}Addad, Silvio Lattanzi, Ashkan Norouzi{-}Fard, Christian
  Sohler, and Ola Svensson.
\newblock Parallel and efficient hierarchical $k$-median clustering.
\newblock In {\em NeurIPS}, pages 20333--20345, 2021.
\newblock URL:
  \url{https://proceedings.neurips.cc/paper/2021/hash/aa495e18c7e3a21a4e48923b92048a61-Abstract.html}.

\bibitem[CLSS22]{DBLP:conf/stoc/Cohen-AddadLSS22}
Vincent Cohen{-}Addad, Kasper~Green Larsen, David Saulpic, and Chris
  Schwiegelshohn.
\newblock Towards optimal lower bounds for {$k$}-median and {$k$}-means
  coresets.
\newblock In {\em Proceedings of the 54th Annual Symposium on Theory of
  Computing (STOC)}, pages 1038--1051, 2022.
\newblock \href {https://doi.org/10.1145/3519935.3519946}
  {\path{doi:10.1145/3519935.3519946}}.

\bibitem[CMZ22]{CMZ22}
Vincent Cohen{-}Addad, Vahab~S. Mirrokni, and Peilin Zhong.
\newblock Massively parallel {$k$}-means clustering for perturbation resilient
  instances.
\newblock In {\em {ICML}}, volume 162 of {\em Proceedings of Machine Learning
  Research}, pages 4180--4201. PMLR, 2022.
\newblock URL: \url{https://proceedings.mlr.press/v162/cohen-addad22b.html}.

\bibitem[CPP19]{DBLP:journals/pvldb/CeccarelloPP19}
Matteo Ceccarello, Andrea Pietracaprina, and Geppino Pucci.
\newblock Solving $k$-center clustering (with outliers) in {MapReduce} and
  streaming, almost as accurately as sequentially.
\newblock {\em Proc. {VLDB} Endow.}, 12(7):766--778, 2019.

\bibitem[CSS21]{DBLP:conf/stoc/Cohen-AddadSS21}
Vincent Cohen{-}Addad, David Saulpic, and Chris Schwiegelshohn.
\newblock A new coreset framework for clustering.
\newblock In {\em Proceedings of the 53rd Annual Symposium on Theory of
  Computing (STOC)}, pages 169--182, 2021.

\bibitem[CSWZ16]{CSWZ16}
Jiecao Chen, He~Sun, David~P. Woodruff, and Qin Zhang.
\newblock Communication-optimal distributed clustering.
\newblock In {\em Proceedings of the 29th Annual Conference on Neural
  Information Processing Systems (NIPS)}, pages 3720--3728, 2016.
\newblock URL:
  \url{https://proceedings.neurips.cc/paper/2016/hash/7503cfacd12053d309b6bed5c89de212-Abstract.html}.

\bibitem[CZ24]{CZ24}
Yi-Jun Chang and Da~Wei Zheng.
\newblock Fully scalable massively parallel algorithms for embedded planar
  graphs.
\newblock In {\em Proceedings of the 35th Annual {ACM-SIAM} Symposium on
  Discrete Algorithms (SODA)}, pages 4410--4450, 2024.
\newblock \href {https://doi.org/10.1137/1.9781611977912.155}
  {\path{doi:10.1137/1.9781611977912.155}}.

\bibitem[dBBM23]{dBBM23}
Mark de~Berg, Leyla Biabani, and Morteza Monemizadeh.
\newblock {$k$}-center clustering with outliers in the {MPC} and streaming
  model.
\newblock In {\em Proceedings of the IEEE International Parallel and
  Distributed Processing Symposium, (IPDPS)}, pages 853--863, 2023.
\newblock \href {https://doi.org/10.1109/IPDPS54959.2023.00090}
  {\path{doi:10.1109/IPDPS54959.2023.00090}}.

\bibitem[EIM11]{EIM11}
Alina Ene, Sungjin Im, and Benjamin Moseley.
\newblock Fast clustering using {MapReduce}.
\newblock In {\em 17th ACM SIGKDD International Conference on Knowledge
  Discovery and Data Mining}, KDD '11, page 681–689. ACM, 2011.
\newblock \href {https://doi.org/10.1145/2020408.2020515}
  {\path{doi:10.1145/2020408.2020515}}.

\bibitem[Fil20]{DBLP:conf/icalp/Filtser20}
Arnold Filtser.
\newblock Scattering and sparse partitions, and their applications.
\newblock In {\em 47th International Colloquium on Automata, Languages, and
  Programming (ICALP)}, pages 47:1--47:20, 2020.
\newblock \href {https://doi.org/10.4230/LIPIcs.ICALP.2020.47}
  {\path{doi:10.4230/LIPIcs.ICALP.2020.47}}.

\bibitem[Gar96]{DBLP:conf/focs/Garg96}
Naveen Garg.
\newblock A $3$-approximation for the minimum tree spanning $k$ vertices.
\newblock In {\em Proceedings of the 37th IEEE Annual Symposium on Foundations
  of Computer Science (FOCS)}, pages 302--309, 1996.

\bibitem[Gha19]{mpc_book}
Mohsen Ghaffari.
\newblock Massively parallel algorithms, 2019.
\newblock Lecture Notes from ETH Zurich.
\newblock URL: \url{http://people.csail.mit.edu/ghaffari/MPA19/Notes/MPA.pdf}.

\bibitem[GKU19]{GKU19}
Mohsen Ghaffari, Fabian Kuhn, and Jara Uitto.
\newblock Conditional hardness results for massively parallel computation from
  distributed lower bounds.
\newblock In {\em Proceedings of the 60th IEEE Annual Symposium on Foundations
  of Computer Science (FOCS)}, pages 1650--1663, 2019.
\newblock \href {https://doi.org/10.1109/FOCS.2019.00097}
  {\path{doi:10.1109/FOCS.2019.00097}}.

\bibitem[GLS14]{DBLP:journals/algorithmica/GehweilerLS14}
Joachim Gehweiler, Christiane Lammersen, and Christian Sohler.
\newblock A distributed {$O(1)$}-approximation algorithm for the uniform
  facility location problem.
\newblock {\em Algorithmica}, 68(3):643--670, 2014.
\newblock \href {https://doi.org/10.1007/s00453-012-9690-y}
  {\path{doi:10.1007/s00453-012-9690-y}}.

\bibitem[GMMO00]{DBLP:conf/focs/GuhaMMO00}
Sudipto Guha, Nina Mishra, Rajeev Motwani, and Liadan O'Callaghan.
\newblock Clustering data streams.
\newblock In {\em Proceedings of the 41st IEEE Annual Symposium on Foundations
  of Computer Science (FOCS)}, pages 359--366, 2000.

\bibitem[GSZ11]{DBLP:conf/isaac/GoodrichSZ11}
Michael~T. Goodrich, Nodari Sitchinava, and Qin Zhang.
\newblock Sorting, searching, and simulation in the {MapReduce} framework.
\newblock In {\em Proceedings of the 22nd International Symposium on Algorithms
  and Computation (ISAAC)}, pages 374--383, 2011.
\newblock \href {https://doi.org/10.1007/978-3-642-25591-5\_39}
  {\path{doi:10.1007/978-3-642-25591-5\_39}}.

\bibitem[HM04]{DBLP:conf/stoc/Har-PeledM04}
Sariel Har{-}Peled and Soham Mazumdar.
\newblock On coresets for $k$-means and $k$-median clustering.
\newblock In {\em Proceedings of the 36th Annual Symposium on Theory of
  Computing (STOC)}, pages 291--300, 2004.
\newblock \href {https://doi.org/10.1145/1007352.1007400}
  {\path{doi:10.1145/1007352.1007400}}.

\bibitem[HP15]{HP15}
James~W. Hegeman and Sriram~V. Pemmaraju.
\newblock Sub-logarithmic distributed algorithms for metric facility location.
\newblock {\em Distributed Computing}, 28(5):351--374, 2015.
\newblock \href {https://doi.org/10.1007/s00446-015-0243-x}
  {\path{doi:10.1007/s00446-015-0243-x}}.

\bibitem[HPS14]{HPS14}
James~W. Hegeman, Sriram~V. Pemmaraju, and Vivek Sardeshmukh.
\newblock Near-constant-time distributed algorithms on a congested clique.
\newblock In {\em Proceedings of the 28th International Symposium on
  Distributed Computing (DISC)}, pages 514--530, 2014.
\newblock \href {https://doi.org/10.1007/978-3-662-45174-8\_35}
  {\path{doi:10.1007/978-3-662-45174-8\_35}}.

\bibitem[HT16]{DBLP:journals/corr/HsuT16}
Daniel~J. Hsu and Matus Telgarsky.
\newblock Greedy bi-criteria approximations for $k$-medians and $k$-means.
\newblock {\em CoRR}, abs/1607.06203, 2016.
\newblock \href {https://arxiv.org/abs/1607.06203} {\path{arXiv:1607.06203}}.

\bibitem[IKL{\etalchar{+}}23]{IKLMV23}
Sungjin Im, Ravi Kumar, Silvio Lattanzi, Benjamin Moseley, and Sergei
  Vassilvitskii.
\newblock Massively parallel computation: Algorithms and applications.
\newblock {\em Foundations and Trends in Optimization}, 5(4):340--417, 2023.

\bibitem[Ind00]{PRG00}
Piotr Indyk.
\newblock Stable distributions, pseudorandom generators, embeddings and data
  stream computation.
\newblock In {\em Proceedings of the 41st IEEE Annual Symposium on Foundations
  of Computer Science (FOCS)}, pages 189--197, 2000.
\newblock \href {https://doi.org/10.1109/SFCS.2000.892082}
  {\path{doi:10.1109/SFCS.2000.892082}}.

\bibitem[Ind04]{Indyk04}
Piotr Indyk.
\newblock Algorithms for dynamic geometric problems over data streams.
\newblock In {\em Proceedings of the 36th Annual ACM Symposium on Theory of
  Computing (STOC)}, pages 373--380, 2004.
\newblock \href {https://doi.org/10.1145/1007352.1007413}
  {\path{doi:10.1145/1007352.1007413}}.

\bibitem[JL84]{JL84}
W.~B. Johnson and J.~Lindenstrauss.
\newblock Extensions of {L}ipschitz mappings into a {H}ilbert space.
\newblock In {\em Proceedings of the Conference in Modern Analysis and
  Probability (New Haven, Connecticut, 1982)}, pages 189--206. American
  Mathematical Society, 1984.

\bibitem[JLN{\etalchar{+}}05]{DBLP:conf/stoc/JiaLNRS05}
Lujun Jia, Guolong Lin, Guevara Noubir, Rajmohan Rajaraman, and Ravi Sundaram.
\newblock Universal approximations for {TSP}, {Steiner} tree, and set cover.
\newblock In {\em Proceedings of the 37th Annual ACM Symposium on Theory of
  Computing (STOC)}, pages 386--395, 2005.
\newblock \href {https://doi.org/10.1145/1060590.1060649}
  {\path{doi:10.1145/1060590.1060649}}.

\bibitem[JMM{\etalchar{+}}03]{JMMSV03}
Kamal Jain, Mohammad Mahdian, Evangelos Markakis, Amin Saberi, and Vijay~V.
  Vazirani.
\newblock Greedy facility location algorithms analyzed using dual fitting with
  factor-revealing {LP}.
\newblock {\em Journal of the ACM}, 50(6):795–824, 2003.
\newblock \href {https://doi.org/10.1145/950620.950621}
  {\path{doi:10.1145/950620.950621}}.

\bibitem[JMNZ24]{JMNZ24}
Rajesh Jayaram, Vahab Mirrokni, Shyam Narayanan, and Peilin Zhong.
\newblock Massively parallel algorithms for high-dimensional {Euclidean}
  minimum spanning tree.
\newblock In {\em Proceedings of the 35th Annual ACM-SIAM Symposium on Discrete
  Algorithms (SODA)}, pages 3960--3996, 2024.
\newblock \href {https://doi.org/10.1137/1.9781611977912.139}
  {\path{doi:10.1137/1.9781611977912.139}}.

\bibitem[JV01]{JV01}
Kamal Jain and Vijay~V. Vazirani.
\newblock Approximation algorithms for metric facility location and
  {$k$}-median problems using the primal-dual schema and {L}agrangian
  relaxation.
\newblock {\em Journal of the ACM}, 48(2):274--296, 2001.
\newblock \href {https://doi.org/10.1145/375827.375845}
  {\path{doi:10.1145/375827.375845}}.

\bibitem[KSV10]{KSV10}
Howard~J. Karloff, Siddharth Suri, and Sergei Vassilvitskii.
\newblock A model of computation for {MapReduce}.
\newblock In {\em Proceedings of the 21st Annual ACM-SIAM Symposium on Discrete
  Algorithms (SODA)}, pages 938--948, 2010.
\newblock \href {https://doi.org/10.1137/1.9781611973075.76}
  {\path{doi:10.1137/1.9781611973075.76}}.

\bibitem[Li13]{DBLP:journals/iandc/Li13}
Shi Li.
\newblock A 1.488 approximation algorithm for the uncapacitated facility
  location problem.
\newblock {\em Inf. Comput.}, 222:45--58, 2013.

\bibitem[LV92a]{DBLP:journals/ipl/LinV92}
Jyh{-}Han Lin and Jeffrey~Scott Vitter.
\newblock Approximation algorithms for geometric median problems.
\newblock {\em Information Processing Letters}, 44(5):245--249, 1992.

\bibitem[LV92b]{DBLP:conf/stoc/LinV92}
Jyh{-}Han Lin and Jeffrey~Scott Vitter.
\newblock $\epsilon$-approximations with minimum packing constraint violation.
\newblock In {\em Proceedings of the 24th Annual ACM Symposium on Theory of
  Computing (STOC)}, pages 771--782, 1992.
\newblock \href {https://doi.org/10.1145/129712.129787}
  {\path{doi:10.1145/129712.129787}}.

\bibitem[MKC{\etalchar{+}}15]{DBLP:conf/nips/MalkomesKCWM15}
Gustavo Malkomes, Matt~J. Kusner, Wenlin Chen, Kilian~Q. Weinberger, and
  Benjamin Moseley.
\newblock Fast distributed $k$-center clustering with outliers on massive data.
\newblock In {\em {NIPS}}, pages 1063--1071, 2015.

\bibitem[MMSW16]{DBLP:conf/approx/MakarychevMSW16}
Konstantin Makarychev, Yury Makarychev, Maxim Sviridenko, and Justin Ward.
\newblock A bi-criteria approximation algorithm for $k$-means.
\newblock In {\em Approximation, Randomization, and Combinatorial Optimization.
  Algorithms and Techniques (APPROX/RANDOM)}, pages 14:1--14:20, 2016.
\newblock \href {https://doi.org/10.4230/LIPICS.APPROX-RANDOM.2016.14}
  {\path{doi:10.4230/LIPICS.APPROX-RANDOM.2016.14}}.

\bibitem[MP03]{DBLP:journals/siamcomp/MettuP03}
Ramgopal~R. Mettu and C.~Greg Plaxton.
\newblock The online median problem.
\newblock {\em SIAM Journal on Computing}, 32(3):816--832, 2003.
\newblock \href {https://doi.org/10.1137/S0097539701383443}
  {\path{doi:10.1137/S0097539701383443}}.

\bibitem[RVW18]{RVW18}
Tim Roughgarden, Sergei Vassilvitskii, and Joshua~R. Wang.
\newblock Shuffles and circuits ({On} lower bounds for modern parallel
  computation).
\newblock {\em Journal of the ACM}, 65(6):41:1--41:24, 2018.
\newblock \href {https://doi.org/10.1145/3232536} {\path{doi:10.1145/3232536}}.

\bibitem[SYZ18]{DBLP:journals/corr/abs-1802-00459}
Zhao Song, Lin~F. Yang, and Peilin Zhong.
\newblock Sensitivity sampling over dynamic geometric data streams with
  applications to $k$-clustering.
\newblock {\em CoRR}, abs/1802.00459, 2018.
\newblock \href {https://arxiv.org/abs/1802.00459} {\path{arXiv:1802.00459}}.

\bibitem[Wei16]{DBLP:conf/nips/Wei16}
Dennis Wei.
\newblock A constant-factor bi-criteria approximation guarantee for
  $k$-means++.
\newblock In {\em {NIPS}}, volume~29, pages 604--612, 2016.
\newblock URL:
  \url{https://proceedings.neurips.cc/paper/2016/hash/357a6fdf7642bf815a88822c447d9dc4-Abstract.html}.

\bibitem[YV18]{YaroslavtsevV18}
Grigory Yaroslavtsev and Adithya Vadapalli.
\newblock Massively parallel algorithms and hardness for single-linkage
  clustering under $\ell_p$ distances.
\newblock In {\em {ICML}}, volume~80 of {\em Proceedings of Machine Learning
  Research}, pages 5596--5605. {PMLR}, 2018.
\newblock URL: \url{http://proceedings.mlr.press/v80/yaroslavtsev18a.html}.

\end{thebibliography}

\end{document}